\newtheorem{theorem}{Theorem}
\newtheorem{lemma}[theorem]{Lemma}
\newtheorem{corollary}[theorem]{Corollary}
\newtheorem{definition}{Definition}
\newtheorem{proposition}[theorem]{Proposition}
\newcommand{\ignore}[1]{}
\DeclareMathOperator*{\argmax}{arg\,max}
\begin{document}

\date{}
\title{\Large \bf Understanding Impacts of Task Similarity on Backdoor Attack and Detection}

\author{
{\rm Di Tang}\\
Indiana University Bloomington
\and
{\rm Rui Zhu}\\
Indiana University Bloomington
\and
{\rm XiaoFeng wang}\\
Indiana University Bloomington
\and
{\rm Haixu Tang}\\
Indiana University Bloomington
\and
{\rm Yi Chen}\\
Indiana University Bloomington
} 

\maketitle

\thispagestyle{plain}
\pagestyle{plain}

\begin{abstract}

With extensive studies on backdoor attack and detection, still fundamental questions are left unanswered regarding the limits in the adversary's capability to attack and the defender's capability to detect. We believe that answers to these questions can be found through an in-depth understanding of the relations between the \textit{primary task} that a benign model is supposed to accomplish and the \textit{backdoor task} that a backdoored model actually performs. For this purpose, we leverage similarity metrics in multi-task learning to formally define the backdoor distance (similarity) between the primary task and the backdoor task, and analyze existing stealthy backdoor attacks, 
revealing that most of them fail to effectively reduce the backdoor distance and even for those that do, still much room is left to further improve their stealthiness. 
So we further design a new method, called TSA attack, to automatically generate a backdoor model under a given distance constraint, and demonstrate that our new attack indeed outperforms existing attacks, making a step closer to understanding the attacker's limits. Most importantly, we provide both theoretic results and experimental evidence on various datasets for the positive correlation between the backdoor distance and backdoor detectability, demonstrating that indeed our task similarity analysis help us better understand backdoor risks and has the potential to identify more effective mitigations. 

\end{abstract}

\vspace{-10pt}
\section{Introduction}
\vspace{-5pt}


A backdoor is a function hidden inside a machine learning (ML) model, through which a special pattern on the model's input, called a \textit{trigger}, can induce misclassification of the input. The backdoor attack is considered to be a serious threat to trustworthy AI, allowing the adversary to control the operations of an ML model, a deep neural network (DNN) in particular, for the purposes such as evading malware detection~\cite{DBLP:conf/uss/SeveriMCO21}, gaming a facial-recognition system to gain unauthorized access~\cite{reflection}, etc. 

\vspace{3pt}\noindent\textbf{Task similarity analysis on backdoor}. With continued effort on backdoor attack and detection, this emerging threat has never been fully understood. 
Even though new attacks and detections continue to show up, they are mostly responding to some specific techniques, and therefore offer little insights into the best the adversary could do and the most effective strategies the detector could possibly deploy.    

Such understanding is related to the similarity between the \textit{primary task} that a benign model is supposed to accomplish and the \textit{backdoor task} that a backdoored model actually performs, which is fundamental to distinguishing between a backdoored model and its benign counterpart. Therefore, a \textit{Task Similarity Analysis} (\textit{TSA}) between these two tasks can help us calibrate the extent to which a backdoor is detectable (differentiable from a benign model) by not only known but also new detection techniques, inform us which characters of a backdoor trigger contribute to the improvement of the similarity, thereby making the attack stealthy, and further guides us to develop even stealthier backdoors so as to better understand what the adversary could possibly do and what the limitation of detection could actually be. 

\vspace{3pt}\noindent\textbf{Methodology and discoveries}. 
This paper reports the first TSA on backdoor attacks and detections. We formally model the backdoor attack and define \textit{backdoor similarity} based upon the task similarity metrics utilized in multi-task learning to measure the similarity between the backdoor task and its related primary task. 
On top of the metric, we further define the concept of $\alpha$-backdoor to compare the backdoor similarity across different backdoors, and present a technique to estimate the $\alpha$ for an attack in practice. With the concept of $\alpha$-backdoor, we analyze representative attacks proposed so far to understand the stealthiness they intend to achieve, based upon their effectiveness in increasing the backdoor similarity. 
We find that current attacks only marginally increased the overall similarity between the backdoor task and the primary tasks, due to that they failed to simultaneously increase the similarity of inputs and that of outputs between these two tasks.
Based on this finding, we develop a new attack/analysis technique, called \textit{TSA attack}, to automatically generate a backdoored model under a given similarity constraint. The new technique is found to be much stealthier than existing attacks, not only in terms of backdoor similarity, but also in terms of its effectiveness in evading existing detections, as observed in our experiments.  
Further, we demonstrate that the backdoor with high backdoor similarity is indeed hard to detect through theoretic analysis as well as extensive experimental studies on four datasets under six representative detections using our TSA attack together with five representative attacks proposed in prior researches.

\vspace{3pt}\noindent\textbf{Contributions}. Our contributions are as follows: 

\vspace{2pt}\noindent$\bullet$\textit{ New direction on backdoor analysis}. Our research has brought a new aspect to the backdoor research, through the lens of backdoor similarity. Our study reveals the great impacts backdoor similarity has on both backdoor attack and detection, which can potentially help determine the limits of the adversary's capability in a backdoor attack and therefore enables the development of the best possible response.

\vspace{2pt}\noindent$\bullet$\textit{ New stealthy backdoor attack}. 
Based upon our understanding on backdoor similarity, we developed a novel technique, TSA attack, to generate a stealthy backdoor under a given backdoor similarity constraint, helping us better understand the adversary's potential and more effectively calibrate\ignore{ 
achieve full control of the backdoor and making it possible to calibrate} the capability of backdoor detections,  




\vspace{-10pt}
\section{Background}
\label{sec: background}

\vspace{-5pt}
\subsection{Neural Network}
\label{subsec:nn_modeling}
We model a neural network model $f$ as a mapping function from the input space $\mathcal{X}$ to the output space $\mathcal{Y}$, i.e., $f:\mathcal{X} \mapsto \mathcal{Y}$. Further, the model $f$ can be decomposed into two sub-functions: $f(x) = c(g(x))$. Specifically, for a classification task with $L$ classes where the output space $\mathcal{Y} = \{0, 1, ... ,L-1\}$, we define $g:\mathcal{X} \mapsto [0,1]^L$, $c:[0,1]^L \mapsto \mathcal{Y}$ and $c(g(x)) = \argmax_j g(x)_j$ where $g(x)_j$ is the $j$-th element of $g(x)$. According to the common understanding, after well training, $g(x)$ approximates the conditional probability of presenting $y$ given $x$, i.e., $g(x)_y \approx Pr(y|x)$, for $y \in \mathcal{Y}$ and $x \in \mathcal{X}$.

\vspace{-5pt}
\subsection{Backdoor Attack \& Detection}
\label{subsec:attack_detection}
\noindent\textbf{Backdoor attack}.
In our research, we focus on targeted backdoors that cause the backdoor infected model $f_b$ to map trigger-carrying inputs $A(x)$ to the target label $t$ different from the ground truth label of $x$~\cite{BlindBackdoor, Backdoor_in_poison, NeuralCleanse, TrojanMeta}:
\vspace{-10pt}
\begin{equation}
\begin{array}{r@{\quad}l}
& f_b(A(x)) = t \neq f_P(x) \\
\end{array}
\vspace{-10pt}
\label{eq:backdoor_current_goal}
\end{equation}
where $f_P$ is the benign model that outputs the ground truth label for $x$ and $A$ is the trigger function that transfers a benign input to its trigger-carrying counterpart. 
There are many attack methods have been proposed to inject backdoors, e.g., ~\cite{reflection, SCAn, TargetedBackdoor,TrojanAttck, TBT, sin2, weight_perturbations, DBLP:conf/globalsip/ClementsL18}. 
\ignore{
The backdoor injection may happen in two ways~\cite{DBLP:journals/corr/abs-2202-07183}: data poisoning and non-poisoning. In a data poisoning attack, the adversary elaborately constructs poisoned inputs and injects them into the training set of the target model, to let the target model be trained on this poisoned set so as to learn the backdoor behaviour expected by the adversary~\cite{BadNets, reflection, SCAn, TargetedBackdoor, TrojanAttck}. In a non-poisoning backdoor attack, the adversary could manipulate model weights to implant the backdoor into the target model~\cite{TBT, sin2, weight_perturbations, DBLP:conf/globalsip/ClementsL18}. Particularly, non-poisoning attacks could be seen as to minimize the following loss function:
\vspace{-10}
\begin{equation}
\notag
\begin{array}{r@{\quad}l}
\mathcal{L}_{benign}(f_b(x), y) + \omega_{b} \mathcal{L}_{backdoor}(f_b(A(x)), t) + \omega_{r}\mathcal{L}_{r}(f_b(x))\\
\end{array}
\vspace{-10}
\label{eq:backdoor_current_goal}
\end{equation}
}

\ignore{
A backdoor attack aims to inject a backdoor into the target model through which the target model strategically misclassifies trigger-carrying inputs. In our research, we consider well-accepted trigger-carrying inputs~\cite{TargetedBackdoor, SCAn, TrojanAttck}, which could be modeled as a combination of a benign input $x$ and a trigger pattern $\delta$: 
\begin{equation}
\begin{array}{r@{\quad}l}
& A(x) = (1-\kappa) \cdot x + \kappa \cdot \delta \\
\end{array}
\label{eq:trigger_definition}
\end{equation}
where $\kappa$ is the trigger mask that controls the combination intensity. We call $A(\cdot)$ \textit{amending function} or \textit{trigger function}.
}



\vspace{3pt}\noindent\textbf{Backdoor detection}. The backdoor detection has been extensively studied recently~\cite{li2022backdoor, DBLP:journals/corr/abs-2007-10760, DBLP:journals/ijon/KavianiS21, DBLP:journals/corr/abs-2202-07183, DBLP:journals/corr/abs-2202-06862}. 
These proposed approaches can be categorized based upon their focuses on different model information: model outputs, model weights and model inputs. This categorization has been used in our research to analyze different detection approaches (Section~\ref{sec:detection}).

More specifically, detection on model outputs captures backdoored models through detecting the difference between the outputs of backdoored models and benign models on some inputs.
Such detection methods include NC~\cite{NeuralCleanse}, K-ARM~\cite{K-ARM}, MNTD~\cite{MNTD}, Spectre~\cite{Spectre}, TABOR~\cite{TABOR}, MESA~\cite{MESA}, STRIP~\cite{STRIP}, SentiNet~\cite{SentiNet}, ABL~\cite{ABL},  ULP~\cite{ULPs}, etc. 
Detection of model weights finds a backdoored model through distinguishing its model weights from those of benign models. Such detection approaches include ABS~\cite{ABS},  ANP~\cite{ANP}, NeuronInspect~\cite{NeuronInspect}, etc. 
Detection of model inputs identifies a backdoored model through detecting difference between inputs that let a backdoored model and a benign model output similarly.
Prominent detections in this category include SCAn~\cite{SCAn}, AC\cite{AC}, SS~\cite{SS}, etc. 


\vspace{-10pt}
\subsection{Threat Model}
\label{subsec:threat_model}
\vspace{-5pt}

We focus on backdoors for image classification tasks, while assuming a white-box attack scenario where the adversary can access the training process. The attacker inject the backdoor to accomplish the goal formally defined in Section~\ref{subsec:backdoor_similarity} and evade from backdoor detections.

The backdoor defender aim to distinguish backdoored models from benign models. She can white-box access those backdoored models and owns a small set of benign inputs. Besides, the defender may obtain a set of mix inputs containing a large number of benign inputs together with a few trigger-carrying inputs, however which inputs carried the trigger in this set is unknown to her. 



\ignore{

To inject the backdoor, there are two categories of methods~\cite{DBLP:journals/corr/abs-2202-07183}: data poisoning backdoor attacks and non-poisoning backdoor attacks. In data poisoning backdoor attacks, the adversary elaborately constructs poisoned inputs and injects them into the training set of the target model, to let the target model trained on this poisoned training set learn the backdoor behaviour as what the adversary desires. Attacks~\cite{BadNets, TargetedBackdoor, reflection, SCAn, TrojanAttck} are belong to this kind of attacks. In non-poisoning backdoor attacks, the adversary could manipulate the model weights to implant the backdoor into the target model. Attacks~\cite{TBT, sin2, weight_perturbations, DBLP:conf/globalsip/ClementsL18} are belong to this kind of attacks.

\vspace{3pt}\noindent\textbf{Detection}.
There are many backdoor detection methods as summarized in survey and overview papers~\cite{li2020backdoor, DBLP:journals/corr/abs-2007-10760, DBLP:journals/ijon/KavianiS21, DBLP:journals/corr/abs-2202-07183, DBLP:journals/corr/abs-2202-06862}. Based on their taxonomy and according to the detection objective, we classify backdoor detection methods into three categories: detection on model outputs, detection on model weights and detection on model inputs, for simplifying our analysis on detection (Section~\ref{sec:detection}).

Detection on model outputs detects backdoor through distinguishing model outputs for trigger-carrying inputs from model outputs for benign inputs. This kind of detection contains SCAn~\cite{SCAn}, STRIP~\cite{STRIP}, SentiNet~\cite{SentiNet}, Activation Clustering\cite{AC}, Spectral Signatures~\cite{SS}, ULP~\cite{ULPs}, etc. Specifically, to separate model outputs for benign and trigger-carrying inputs, the SCAn constructs a hypothesis test on the outputs of the penultimate layer, STRIP checks the entropy of the logits, AC and SS cluster the outputs of the penultimate layer, ULP feed Universal Litmus Patterns to get logits which are subsequently used to detect the presence of backdoor.

Detection on model weights detects backdoor through distinguishing model weights of backdoored models from model weights of benign models. This kind of detection contains ABS~\cite{ABS},  MNTD~\cite{MNTD}, NeuronInspect~\cite{NeuronInspect}, etc. Specifically, ABS detects whether there is a abnormal neuron that will be highly activated by trigger-carrying inputs but normally activate by benign inputs, 
MNTD uses meta neural analysis techniques to detect whether the weights of a backdoored model belong to the distribution of weights of benign model. Similar to the MNTD, NeuronInspect uses an outlier detector as the meta-classifier to achieve model weights clustering.

Detection on model inputs detects backdoor through checking whether those reversed trigger-carrying inputs are significantly different from benign inputs. This kind of detection contains NC~\cite{NeuralCleanse}, TABOR~\cite{TABOR}, MESA~\cite{MESA}, etc. Specifically, NC detects whether those reversed triggers for different labels contain abnormally small one, TABOR detects where there is a abnormal one in reversed triggers with different size or shape, MESA leverages the generative adversarial network (GAN) to obtain the distribution of recovered trigger and detects abnormality in this distribution.

\vspace{3pt}\noindent\textbf{Unlearning \& disabling}.

\ignore{
\subsection{Multi-Tasks Learning and Task Similarity}

\vspace{3pt}\noindent\textbf{Multi-Task Learning}.

\vspace{3pt}\noindent\textbf{Task similarity}.
}

\subsection{Threat Model}
\label{subsec:threat_model}

In this paper, we consider targeted backdoor for classification tasks.

\vspace{3pt}\noindent\textbf{Attack goal}.
Supposing the backdoor infected model is $f_b$, the trigger function is $A$ and the target label is $t$, the attacker intends to let $f_b$ predicts the target label for those trigger-carrying inputs, i.e., $f_b(A(x))=t$. And in the mean time, the attacker wants to make his attack evade from current backdoor defenses. 

\vspace{3pt}\noindent\textbf{Attacker's capabilities}.
We consider two kinds of attacks: 1) white-box backdoor attacks and 2) black-box backdoor attacks. 
In white-box backdoor attacks, the attacker controls the training process of the target model. Specifically, the attacker can select the training data, change the model weights and choose the model structure of the target model.
In black-box backdoor attacks, the attacker does not know the training data, model weights and model structure of the target model. The attacker can access the target model only through injecting poisoned data into its training data. The black-box attacks are also called as data poisoning backdoor attacks, and we refer non-poisoning backdoor attacks as those white-box backdoor attacks, for the sake of clarity.

\vspace{3pt}\noindent\textbf{Defender's capabilities}.
We assume the defender has full access to the target model and owns a small set of benign inputs for performance testing. In general, we assume the defender has no idea of the backdoor injected into the target model. But in some situations, we may assume the defender knows the target label $t$ for simplifying our analysis. This extra assumption will not affect the conclusion we will derived, as the defender could enumerate each label as the target label. 

\vspace{3pt}\noindent\textbf{Defender's goal}.
The defender struggles to detect, unlearn or disable the backdoor that has been injected into the target model, and maintains the functionality of the target model to correctly label benign inputs simultaneously. 

A backdoor attack aims to inject a backdoor into the target model through which the target model strategically misclassifies trigger-carrying inputs. In this paper, we consider one kind of well-accepted trigger-carrying inputs~\cite{TargetedBackdoor, SCAn, TrojanAttck}. Formally, the trigger-carrying input could be modeled as a combination of a benign input $x$ and a trigger pattern $\delta$: 
\begin{equation}
\begin{array}{r@{\quad}l}
& A(x) = (1-\kappa) \cdot x + \kappa \cdot \delta \\
\end{array}
\label{eq:trigger_definition}
\end{equation}
where $\kappa$ is the trigger mask that controls the combination intensity. We call $A(\cdot)$ as the amending function or trigger function.
According to the backdoor behaviour, backdoors could be classified into two categories: targeted backdoors and untargeted backdoors. For targeted backdoors, backdoor behaviour is to let the backdoored model $f_b$ labels those trigger-carrying inputs $A(x)$ as the target label $t$ that is different from the ground truth label of $x$~\cite{BlindBackdoor, Backdoor_in_poison, NeuralCleanse, TrojanMeta}:
\begin{equation}
\begin{array}{r@{\quad}l}
& f_b(A(x)) = t \neq f_P(x) \\
\end{array}
\label{eq:backdoor_current_goal}
\end{equation}
where $f_P$ is a benign model that outputs ground truth label for $x$.
For untargeted backdoors, the backdoor behaviour could also be modeled as Eq.~\ref{eq:backdoor_current_goal} except a specific target label.

Backdoor attacks could be classified into two categories~\cite{DBLP:journals/corr/abs-2202-07183}: data poisoning backdoor attacks and non-poisoning backdoor attacks. In data poisoning backdoor attacks, the adversary elaborately constructs poisoned inputs and injects them into the training set of the target model, to let the target model trained on this poisoned training set learn the backdoor behaviour as what the adversary desires. Attacks~\cite{BadNets, TargetedBackdoor, reflection, SCAn, TrojanAttck} are belong to this kind of attacks. In non-poisoning backdoor attacks, the adversary could manipulate the model weights to implant the backdoor into the target model. Attacks~\cite{TBT, sin2, weight_perturbations, DBLP:conf/globalsip/ClementsL18} are belong to this kind of attacks.

\vspace{3pt}\noindent\textbf{Detection}.
There are many backdoor detection methods as summarized in survey and overview papers~\cite{li2020backdoor, DBLP:journals/corr/abs-2007-10760, DBLP:journals/ijon/KavianiS21, DBLP:journals/corr/abs-2202-07183, DBLP:journals/corr/abs-2202-06862}. Based on their taxonomy and according to the detection objective, we classify backdoor detection methods into three categories: detection on model outputs, detection on model weights and detection on model inputs, for simplifying our analysis on detection (Section~\ref{sec:detection}).

Detection on model outputs detects backdoor through distinguishing model outputs for trigger-carrying inputs from model outputs for benign inputs. This kind of detection contains SCAn~\cite{SCAn}, STRIP~\cite{STRIP}, SentiNet~\cite{SentiNet}, Activation Clustering\cite{AC}, Spectral Signatures~\cite{SS}, ULP~\cite{ULPs}, etc. Specifically, to separate model outputs for benign and trigger-carrying inputs, the SCAn constructs a hypothesis test on the outputs of the penultimate layer, STRIP checks the entropy of the logits, AC and SS cluster the outputs of the penultimate layer, ULP feed Universal Litmus Patterns to get logits which are subsequently used to detect the presence of backdoor.

Detection on model weights detects backdoor through distinguishing model weights of backdoored models from model weights of benign models. This kind of detection contains ABS~\cite{ABS},  MNTD~\cite{MNTD}, NeuronInspect~\cite{NeuronInspect}, etc. Specifically, ABS detects whether there is a abnormal neuron that will be highly activated by trigger-carrying inputs but normally activate by benign inputs, 
MNTD uses meta neural analysis techniques to detect whether the weights of a backdoored model belong to the distribution of weights of benign model. Similar to the MNTD, NeuronInspect uses an outlier detector as the meta-classifier to achieve model weights clustering.

Detection on model inputs detects backdoor through checking whether those reversed trigger-carrying inputs are significantly different from benign inputs. This kind of detection contains NC~\cite{NeuralCleanse}, TABOR~\cite{TABOR}, MESA~\cite{MESA}, etc. Specifically, NC detects whether those reversed triggers for different labels contain abnormally small one, TABOR detects where there is a abnormal one in reversed triggers with different size or shape, MESA leverages the generative adversarial network (GAN) to obtain the distribution of recovered trigger and detects abnormality in this distribution.

\vspace{3pt}\noindent\textbf{Unlearning \& disabling}.
}
\vspace{-10pt}
\section{TSA on Backdoor Attack}
\label{sec:tsa_on_attack}
\vspace{-5pt}

Not only does a backdoor attack aim at inducing misclassification of trigger-carrying inputs to a victim model, but it is also meant to achieve high stealthiness\ignore{ and evasiveness} against backdoor detections. For this purpose, some attacks~\cite{TrojanAttck,LIRA} reduce the $L_p$-norm of the trigger, i.e., $\|A(x)-x\|_p$, to make trigger-carrying inputs be similar to benign inputs, while some others construct the trigger using benign features~\cite{composite-backdoor,FaceHack}. All these tricks are designed to evade specific detection methods. 
Still less clear is the stealthiness guarantee that those tricks can provide against other detection methods. Understanding such stealthiness guarantee requires to model the \textit{detectability} of backdoored models, which depends on measuring \textit{fundamental differences} between backdoored and benign models that was not studied before.

To fill in this gap, we analyze the difference between the task a backdoored model intends to accomplish (called \textit{backdoor task}) and that of its benign counterpart (called \textit{primary task}), which indicates the detectability of the backdoored model, as demonstrated by our experimental study (see Section~\ref{sec:detection}). 
Between these two tasks, we define the concept of \textit{backdoor similarity} -- the similarity between the primary and the backdoor task, by leveraging the task similarity metrics used in multi-task learning studies, and further demonstrate how to compute the similarity in practice. Applying the metric to existing backdoor attacks, we analyze their impacts on the backdoor similarity, which consequently affects their stealthiness against detection techniques (see Section~\ref{sec:detection}). We further present a new algorithm that automatically generates a backdoored model under a desirable backdoor similarity, which leads to a stealthier backdoor attack.   



\vspace{-10pt}
\subsection{Task Similarity}



Backdoor detection, essentially, is a problem about how to differentiate between a legitimate task (\textit{primary task}) a model is supposed to perform and the compromised task (\textit{backdoor task}), which involves backdoor activities, the backdoored model actually runs. To this end, a detection mechanism needs to figure out the difference between these two tasks. According to modern learning theory~\cite{murphy2012machine}, a task can be fully characterized by the distribution on the graph of the function~\cite{bridges1998foundations} -- a joint distribution on the input space $\mathcal{X}$ and the output space $\mathcal{Y}$.
Formally, a task $\mathcal{T}$ is characterized by the joint distribution $\mathcal{D}_{\mathcal{T}}$: $\mathcal{T} := \mathcal{D}_{\mathcal{T}}(\mathcal{X},\mathcal{Y}) = \{\Pr_{\mathcal{D}_{\mathcal{T}}}(x,y): (x,y) \in \mathcal{X} \times \mathcal{Y} \}$. Note that, for a well-trained model $f= c \circ g $ (defined in Section~\ref{subsec:attack_detection}) for task $\mathcal{T}$, we have $g(x)_y \approx \Pr_{\mathcal{D}_\mathcal{T}}(y|x)$ for all $(x,y) \in \mathcal{X} \times \mathcal{Y}$.



With this task modeling,  the mission of backdoor detection becomes how to distinguish the distribution of a backdoor task from that of its primary task. The Fisher's discriminant theorem~\cite{mclachlan2005discriminant} tells us that two distributions become easier to distinguish when they are less similar in terms of some distance metrics, indicating that the \textit{distinguishability} (or separability) of two tasks is positively correlated with their distance. This motivates us to measure the distance between the distributions of two tasks. 
For this purpose, we define the $d_{\mathcal{H}-W1}$ distance, which covers both Wasserstein-1 distance and H-divergence, two most common distance metrics for distributions. 

\begin{definition}[$d_{\mathcal{H}-W1}$ distance]~\label{def:task_distance}
For two distributions $\mathcal{D}$ and $\mathcal{D}'$ defined on $\mathcal{X} \times \mathcal{Y}$, $d_{\mathcal{H}-W1}(\mathcal{D}, \mathcal{D}')$ measures the distance between them two as:
\vspace{-5pt}
\begin{equation}
\begin{array}{r@{\quad}l}
d_{\mathcal{H}-W1}(\mathcal{D},\mathcal{D'}) = \underset{h \in \mathcal{H}}{\sup} [\mathbb{E}_{\Pr_{\mathcal{D}}(x,y)}h(x, y) - \mathbb{E}_{\Pr_{\mathcal{D}'}(x,y)}h(x, y)],
\end{array}
\vspace{-5pt}
\label{eq:h-w1}
\end{equation}
where $\mathcal{H}=\{h: \mathcal{X} \times \mathcal{Y} \mapsto [0,1]\}$.
\end{definition}
\vspace{-8pt}
\begin{proposition}~\label{prop:h-w1}
\begin{equation}
\begin{array}{c}
0 \leq d_{\mathcal{H}-W1}(\mathcal{D},\mathcal{D'}) \leq 1, \\
d_{W1}(\mathcal{D},\mathcal{D'}) \leq d_{\mathcal{H}-W1}(\mathcal{D},\mathcal{D'}) = \frac{1}{2} d_{\mathcal{H}}(\mathcal{D},\mathcal{D'}),
\end{array}
\end{equation}
where $d_{W1}(\mathcal{D},\mathcal{D'})$ is the \textit{Wasserstein-1 distance}~\cite{w-distance} between $\mathcal{D}$ and $\mathcal{D}'$, and $d_{\mathcal{H}}(\mathcal{D},\mathcal{D'})$ is their \textit{$\mathcal{H}$-divergence}~\cite{h-divergence}.
\end{proposition}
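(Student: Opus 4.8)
The plan is to verify the three assertions one at a time, working directly from Definition~\ref{def:task_distance}. For the two-sided bound $0 \le d_{\mathcal{H}-W1} \le 1$: the constant map $h \equiv 0$ belongs to $\mathcal{H}$ and contributes $0$ to the supremum, so $d_{\mathcal{H}-W1}(\mathcal{D},\mathcal{D}') \ge 0$; and since every $h \in \mathcal{H}$ satisfies $\mathbb{E}_{\Pr_{\mathcal{D}}(x,y)} h \le 1$ and $\mathbb{E}_{\Pr_{\mathcal{D}'}(x,y)} h \ge 0$, each bracketed term is at most $1$, giving $d_{\mathcal{H}-W1}(\mathcal{D},\mathcal{D}') \le 1$.

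For the identity $d_{\mathcal{H}-W1} = \tfrac12 d_{\mathcal{H}}$, the key tool is the layer-cake representation: for $h \in \mathcal{H}$ and $B_\tau := \{(x,y) : h(x,y) > \tau\}$, Fubini gives $\mathbb{E}_{\mathcal{D}} h - \mathbb{E}_{\mathcal{D}'} h = \int_0^1 \big(\Pr_{\mathcal{D}}(B_\tau) - \Pr_{\mathcal{D}'}(B_\tau)\big)\, d\tau \le \sup_{B}\big(\Pr_{\mathcal{D}}(B) - \Pr_{\mathcal{D}'}(B)\big)$, the supremum taken over measurable $B \subseteq \mathcal{X}\times\mathcal{Y}$. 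Conversely every indicator $\mathbf{1}_B$ itself lies in $\mathcal{H}$, so in fact $d_{\mathcal{H}-W1}(\mathcal{D},\mathcal{D}') = \sup_B \big(\Pr_{\mathcal{D}}(B) - \Pr_{\mathcal{D}'}(B)\big)$. Replacing $B$ by its complement flips the sign of $\Pr_{\mathcal{D}}(B) - \Pr_{\mathcal{D}'}(B)$, so this quantity equals $\sup_B \lvert \Pr_{\mathcal{D}}(B) - \Pr_{\mathcal{D}'}(B)\rvert$, i.e.\ the total variation distance $d_{\mathrm{TV}}(\mathcal{D},\mathcal{D}')$; unwinding the definition of $\mathcal{H}$-divergence as $d_{\mathcal{H}}(\mathcal{D},\mathcal{D}') = 2\sup_B \lvert \Pr_{\mathcal{D}}(B) - \Pr_{\mathcal{D}'}(B)\rvert$ then yields $d_{\mathcal{H}-W1} = \tfrac12 d_{\mathcal{H}}$.

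It remains to show $d_{W1} \le d_{\mathcal{H}-W1}$. Self-containedly, Kantorovich--Rubinstein duality writes $d_{W1}(\mathcal{D},\mathcal{D}') = \sup\{ \mathbb{E}_{\mathcal{D}} f - \mathbb{E}_{\mathcal{D}'} f : f \text{ is } 1\text{-Lipschitz on } \mathcal{X}\times\mathcal{Y}\}$; adding a constant to $f$ changes neither its Lipschitz constant nor the value $\mathbb{E}_{\mathcal{D}} f - \mathbb{E}_{\mathcal{D}'} f$, and since the ground metric on $\mathcal{X}\times\mathcal{Y}$ is normalized to have diameter at most $1$, any $1$-Lipschitz $f$ has oscillation at most $1$, so $f - \inf f$ takes values in $[0,1]$ and lies in $\mathcal{H}$. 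Hence the KR supremum is over a subfamily of (shifts of) functions in $\mathcal{H}$ and is bounded by $d_{\mathcal{H}-W1}(\mathcal{D},\mathcal{D}')$. (Alternatively, having identified $d_{\mathcal{H}-W1}$ with $d_{\mathrm{TV}}$ above, one may instead invoke the standard coupling bound $d_{W1}\le \mathrm{diam}(\mathcal{X}\times\mathcal{Y})\cdot d_{\mathrm{TV}}$.)

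I expect the main obstacle to be bookkeeping around the ground metric: the inequality $d_{W1} \le d_{\mathcal{H}-W1}$ is clean only when the metric on $\mathcal{X}\times\mathcal{Y}$ is normalized so that $1$-Lipschitz functions have oscillation at most $1$ (otherwise a diameter factor intrudes), so I would state this normalization explicitly up front. A secondary point to pin down is which formulation of $\mathcal{H}$-divergence is in force: the stated equality uses the version over all measurable events, equivalently $d_{\mathcal{H}} = 2 d_{\mathrm{TV}}$, which is exactly what the layer-cake argument delivers, since the supremum of the affine functional $h \mapsto \mathbb{E}_{\mathcal{D}} h - \mathbb{E}_{\mathcal{D}'} h$ over the convex set $\mathcal{H}$ is attained at a $\{0,1\}$-valued $h$.
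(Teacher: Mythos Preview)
Your proposal is correct and follows essentially the same route as the paper: Kantorovich--Rubinstein duality for the $d_{W1}$ bound, and the observation that the supremum over $[0,1]$-valued functions is attained at indicators for the $\tfrac12 d_{\mathcal{H}}$ equality. Your layer-cake argument and explicit diameter normalization are in fact more careful than the paper's terse treatment, which simply asserts the indicator reduction via ``$\max(\{0,1\})=\max([0,1])$'' and assumes $\mathcal{X}=[0,1]^n$ without addressing that its Euclidean diameter is $\sqrt{n}$ rather than $1$.
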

\begin{proof}
See Appendix~\ref{proof:h-w1}
\vspace{-5pt}
\end{proof}

\noindent Proposition~\ref{prop:h-w1} shows that $d_{\mathcal{H}-W1}$ is representative: it is the upper-bound of the \textit{Wasserstein-1 distance} and the half of the \textit{$\mathcal{H}$-divergence}.
More importantly, $d_{\mathcal{H}-W1}$ can be easily computed: the optimal function $h$ in Eq.~\ref{eq:h-w1} that maximally separate two distributions can be approximated with a neural network to distinguish them. 

Using the $d_{\mathcal{H}-W1}$ distance, we can now quantify the similarity between tasks. In particular, $d_{\mathcal{H}-W1}(\mathcal{D}_{\mathcal{T}1}, \mathcal{D}_{\mathcal{T}2}) = 0$ indicates that tasks $\mathcal{T}1$ and $\mathcal{T}2$ are identical, and $d_{\mathcal{H}-W1}(\mathcal{D}_{\mathcal{T}1}, \mathcal{D}_{\mathcal{T}2}) = 1$ indicates that these two tasks 
are totally different. Without further notice, we consider the task similarity between $\mathcal{T}1$ and $\mathcal{T}2$ as $1-d_{\mathcal{H}-W1}(\mathcal{D}_{\mathcal{T}1}, \mathcal{D}_{\mathcal{T}2})$.

\vspace{-10pt}
\subsection{Backdoor Similarity}
\label{subsec:backdoor_similarity}

Following we first define primary task and backdoor task and then utilize $d_{\mathcal{H}-W1}$ to specify \textit{backdoor similarity}, that is, the similarity between the primary task and the backdoor task.


\vspace{3pt}\noindent\textbf{Backdoor attack}. 
As mentioned earlier (Section~\ref{subsec:attack_detection}), the well-accepted definition of the backdoor attack is specified by Eq.~\ref{eq:backdoor_current_goal}~\cite{SCAn, BlindBackdoor, Backdoor_in_poison, NeuralCleanse, TrojanMeta, SentiNet, TrojanZOO}. According to the definition, the attack aims to find a trigger function $A(\cdot)$ that maps benign inputs to their trigger-carrying counterparts and also ensures that these trigger-carrying inputs are misclassfied to the target class $t$ by the backdoor infected model $f_b$. In particular, Eq.~\ref{eq:backdoor_current_goal} requires the target class $t$ to be different from the source class of the benign inputs\ignore{ to trigger the malicious consequence}, i.e., $t \neq f_P(x)$. 
This definition, however, is problematic, since there exists a trivial trigger function satisfying  Eq.~\ref{eq:backdoor_current_goal}, i.e., $A(\cdot)$ simply replaces a benign input $x$ with another benign input $x_t$ in the target class $t$. Under this trigger function, even a completely clean model $f_P(\cdot)$ becomes ``backdoored'', as it outputs the target label on any ``trigger-carrying'' inputs $x_t = A(x)$. 

Clearly, this trivial trigger function does not introduce any meaningful backdoor to the victim model, even though it satisfies Eq.~\ref{eq:backdoor_current_goal}.
To address this issue, we adjust the objective of the backdoor attack (Eq.~\ref{eq:backdoor_current_goal}) as follows: 
\vspace{-5pt}
\begin{equation}
\begin{array}{r@{\quad}l}
f_b(A(x)) = t \text{, where } f_P(x) \neq t \neq f_P(A(x)) .\\
\end{array}
\vspace{-5pt}
\label{eq:backdoor_our_goal}
\end{equation}
Here, the constraint $f_P(x) \neq t \neq f_P(A(x))$ requires that under the benign model $f_P$, not only the input $x$ but also its trigger-carrying version $A(x)$ will not be mapped to the target class $t$, thereby excluding the trivial attack mentioned above. 

Generally speaking, the trigger function $A(\cdot)$ may not work on a model's whole input space.  So we introduce the concept of \textit{backdoor region}:
\begin{definition}[Backdoor region]\label{def:backdoor_region}
The backdoor region $\mathcal{B} \subset \mathcal{X}$ of a backdoor with the trigger function $A(\cdot)$ is the set of inputs on which the backdoored model $f_b$ satisfy Eq.~\ref{eq:backdoor_our_goal}, i.e.,
\vspace{-5pt}
\begin{equation}
\begin{array}{r@{\quad}l}
f_b(A(x)) = 
\begin{cases}
    t, \neq f_P(A(x)), \neq f_P(x),      & \forall x \in \mathcal{B} \\
    f_P(A(x)), & \forall x \in \mathcal{X} \setminus \mathcal{B}.
\end{cases} 
\end{array}
\vspace{-3pt}
\label{eq:backdoor_region}
\end{equation}
Accordingly, we denote $A(B)=\{A(x): x \in B\}$ as the set of trigger-carrying inputs.
\end{definition}
\noindent For example, the backdoor region of a source-agnostic backdoor, which maps the trigger-carrying input $A(x)$ whose label under the benign model is not $t$ into $t$, is $\mathcal{B} = \mathcal{X} \setminus (\mathcal{X}_{f_P(x)=t}\cup \mathcal{X}_{f_P(A(x))=t})$, while the backdoor region for a source-specific backdoor, which maps the trigger-carrying input $A(x)$ with the true label of the source class $s$ ($\neq t$) into $t$, is $\mathcal{B} = \mathcal{X}_{f_P(x)=s} \setminus \mathcal{X}_{f_P(A(x))=t}$. 
Here, we use $\mathcal{X}_{C}$ to denote the subset of all elements in $\mathcal{X}$ that satisfy the condition $C$:  
$\mathcal{X}_{\textit{C}} = \{x | x \in \mathcal{X}, \textit{C is True} \}$, 
e.g., $\mathcal{X}_{f_P(x)=t} = \{x| x \in \mathcal{X}, f_P(x)=t \}$. 

\vspace{3pt}\noindent\textbf{Definition of the primary and backdoor tasks}.
Now we can formally define the primary task and the backdoor task for a backdoored model.
Here we denote the prior probability of input $x$ (also the probability of presenting $x$ on the primary task) by $\Pr(x)$.
 
\vspace{-5pt}
\begin{definition} [Primary task $\&$ distribution]
The primary task of a backdoored model is $\mathcal{T}_P$, the task that its benign counterpart learns to accomplish. $\mathcal{T}_P$ is characterized by the primary distribution $\mathcal{D}_P$, a joint distribution over the input space $\mathcal{X}$ and the output space $\mathcal{Y}$. Specifically, $\Pr_{\mathcal{D}_P}(x,y)$ is the probability of presenting $(x,y)$ in benign scenarios, and thus $\Pr_{\mathcal{D}_P}(y|x) = {\Pr_{\mathcal{D}_P}(x,y)}/{\Pr(x)}$ is the conditional probability that a benign model strives to approximate. 
\end{definition} 

\vspace{-5pt}
\begin{definition} [Backdoor task $\&$ distribution]
\label{def:backdoor_task}
The backdoor task of a backdoored model is denoted by $\mathcal{T}_{A,\mathcal{B},t}$, the task that the adversary intends to accomplish by training a backdoored model. $\mathcal{T}_{A,\mathcal{B},t}$ is characterized by the backdoor distribution $\mathcal{D}_{A,\mathcal{B},t}$, a joint distribution over $\mathcal{X} \times \mathcal{Y}$.
Specifically, the probability of presenting $(x,y)$ in $\mathcal{D}_{A,\mathcal{B},t}$ is $\Pr_{\mathcal{D}_{A,\mathcal{B},t}}(x,y)=P(x,y)/Z_{A,\mathcal{B},t}$, where $Z_{A,\mathcal{B},t} = \int_{(x,y) \in \mathcal{X}\times\mathcal{Y}}P(x,y) = 1-\Pr(A(\mathcal{B}))+\beta \Pr(\mathcal{B}) $ and
\vspace{-5pt}
\begin{equation}
\begin{array}{r@{\quad}l}
P(x,y) = 
\begin{cases}
    \Pr_{\mathcal{D}_{A,\mathcal{B},t}}(y|x) \Pr(A^{-1}(x) ) \beta, & x \in A(\mathcal{B}) \\
    \Pr_{\mathcal{D}_P}(x,y),              & x \in \mathcal{X} \setminus A(\mathcal{B}).
\end{cases}
\end{array}
\vspace{-3pt}
\label{eq:backdoor_distribution}
\end{equation}
\end{definition}

\noindent Here, $A^{-1}(x)=\{z|A(z)=x\}$ represents the inverse of the trigger function, $\Pr_{\mathcal{D}_{A,\mathcal{B},t}}(y|x)$ is the conditional probability that the adversary desires to train a backdoored model to approximate, $\beta$ is a parameter selected by the adversary to amplify the probability that the trigger-carrying inputs $A(x)$ are presented to the backdoor task. Actually, we consider $\frac{\beta}{1+\beta}$ as the \textit{poisoning rate} with the assumption that poisoned training data is randomly drawn from the backdoor distribution. Finally, it is worth noting that $\Pr_{\mathcal{D}_{A,\mathcal{B},t}}(x,y)$ is proportional to $\Pr_{\mathcal{D}_P}(x,y)$ except on those trigger-carrying inputs $A(\mathcal{B})$.


\vspace{3pt}\noindent\textbf{Formalization of backdoor similarity}.
Putting together the definitions of the primary task, the backdoor task, and the $d_{\mathcal{H}-W1}$ distance between the two tasks (Eq.~\ref{eq:h-w1}), we are ready to define \textit{backdoor similarity} as follows:

\begin{definition}[Backdoor distance \& similarity]~\label{def:backdoor_distance} 
We define $d_{\mathcal{H}-W1}(\mathcal{D}_P, \mathcal{D}_{A,\mathcal{B},t})$ as the backdoor distance between the primary task $\mathcal{T}_P$ and the backdoor task $\mathcal{T}_{A,\mathcal{B},t}$ and $1 - d_{\mathcal{H}-W1}(\mathcal{D}_P, \mathcal{D}_{A,\mathcal{B},t})$ as the backdoor similarity 

\end{definition}

\begin{theorem}[Computing backdoor distance]~\label{thr:backdoor_general_distance}
When $Z_{A,\mathcal{B},t} \geq 1$, where $Z_{A,\mathcal{B},t}$ is defined in Eq.~\ref{eq:backdoor_distribution}, the backdoor distance between $\mathcal{D}_P$ and $\mathcal{D}_{A,\mathcal{B},t}$ is
\vspace{-10pt}
\begin{equation}
\notag
\begin{array}{l@{\quad}l}
d_{\mathcal{H}-W1}(\mathcal{D}_P, \mathcal{D}_{A,\mathcal{B},t}) = \underset{(x,y)\in A(\mathcal{B})\times\mathcal{Y}}{\int} \max(\Pr_{gain}(x,y), 0), \\
\end{array}
\vspace{-5pt}
\end{equation}
\noindent where 
\vspace{-10pt}
\begin{equation}
\notag
\begin{array}{l@{\quad}l}
\Pr_{gain}(x,y) =\Pr_{\mathcal{D}_{A,\mathcal{B},t}}(x,y) - \Pr_{\mathcal{D}_P}(x,y).
\end{array}
\vspace{-5pt}
\end{equation}

\end{theorem}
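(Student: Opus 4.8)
The plan is to observe that $d_{\mathcal{H}-W1}$, in spite of its name, coincides with the total variation distance on $\mathcal{X}\times\mathcal{Y}$ (the standard variational characterization of total variation), to evaluate that distance explicitly, and then to use the hypothesis $Z_{A,\mathcal{B},t}\ge 1$ to discard the contribution coming from outside the trigger-carrying region $A(\mathcal{B})$. Write $p(x,y)=\Pr_{\mathcal{D}_P}(x,y)$ and $q(x,y)=\Pr_{\mathcal{D}_{A,\mathcal{B},t}}(x,y)$. In Definition~\ref{def:task_distance} the functional being maximized is $\mathbb{E}_{\mathcal{D}_P}h-\mathbb{E}_{\mathcal{D}_{A,\mathcal{B},t}}h=\int h(x,y)\,\bigl(p(x,y)-q(x,y)\bigr)\,d(x,y)$, and since $h$ ranges over \emph{all} $[0,1]$-valued functions the pointwise optimal choice is the indicator $h^\star(x,y)=\mathbf{1}\{p(x,y)>q(x,y)\}$, which gives $d_{\mathcal{H}-W1}(\mathcal{D}_P,\mathcal{D}_{A,\mathcal{B},t})=\int\max\bigl(p(x,y)-q(x,y),0\bigr)\,d(x,y)$. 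Because $p$ and $q$ are both probability densities, $\int(p-q)\,d(x,y)=0$, hence $\int\max(p-q,0)=\int\max(q-p,0)=\int\max\bigl(\Pr_{gain}(x,y),0\bigr)\,d(x,y)$, which already produces the integrand in the statement.

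It then remains to reduce the domain of integration to $A(\mathcal{B})\times\mathcal{Y}$. For every $x\in\mathcal{X}\setminus A(\mathcal{B})$, Definition~\ref{def:backdoor_task} gives $P(x,y)=\Pr_{\mathcal{D}_P}(x,y)$, so $q(x,y)=p(x,y)/Z_{A,\mathcal{B},t}$ and therefore $\Pr_{gain}(x,y)=p(x,y)\,\bigl(1/Z_{A,\mathcal{B},t}-1\bigr)$. Under the assumption $Z_{A,\mathcal{B},t}\ge 1$ this quantity is $\le 0$, so $\max(\Pr_{gain}(x,y),0)=0$ on $(\mathcal{X}\setminus A(\mathcal{B}))\times\mathcal{Y}$. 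Splitting the integral over $A(\mathcal{B})\times\mathcal{Y}$ and its complement and discarding the vanishing piece yields precisely $d_{\mathcal{H}-W1}(\mathcal{D}_P,\mathcal{D}_{A,\mathcal{B},t})=\int_{A(\mathcal{B})\times\mathcal{Y}}\max(\Pr_{gain}(x,y),0)$, as claimed.

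The arithmetic is short, so the real work is the measure-theoretic bookkeeping around the trigger function $A$: when $A$ is not injective $A^{-1}(x)$ is a set, and one must confirm that $A(\mathcal{B})$ is measurable and that the defining pieces of $P$ combine so that $\int_{\mathcal{X}\times\mathcal{Y}}P=Z_{A,\mathcal{B},t}$ — which is exactly what makes $q$ a genuine probability density and legitimizes the zero-mean step above. I would handle this by taking the normalization asserted in Definition~\ref{def:backdoor_task} at face value and, if needed, restricting to the standard situation where $A$ is measurable with measurable image. I would also point out that the hypothesis $Z_{A,\mathcal{B},t}\ge 1$ is used in exactly one place, the sign of $\Pr_{gain}$ off $A(\mathcal{B})$, and that it is essential: without it, the uniform deflation of the primary mass by the factor $1/Z_{A,\mathcal{B},t}$ on $\mathcal{X}\setminus A(\mathcal{B})$ would itself contribute a positive amount and the clean restriction to the trigger-carrying region would fail.
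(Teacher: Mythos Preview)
Your argument is correct and follows essentially the same route as the paper: both proofs recognize that the supremum over $[0,1]$-valued $h$ gives $\int\max(p-q,0)$, and both use $Z_{A,\mathcal{B},t}\ge 1$ to pin down the sign of $p-q$ off $A(\mathcal{B})$. The only organizational difference is that you invoke the total-variation symmetry $\int\max(p-q,0)=\int\max(q-p,0)$ globally and then restrict, whereas the paper first splits off the contribution $(1-1/Z_{A,\mathcal{B},t})(1-\Pr(A(\mathcal{B})))$ from $\mathcal{X}\setminus A(\mathcal{B})$, optimizes separately over $A(\mathcal{B})\times\mathcal{Y}$ via an explicit $C_+/C_-$ decomposition, and cancels the pieces by hand; your packaging is a bit cleaner but the content is identical.
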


\begin{proof}
See Appendix~\ref{proof:backdoor_general_distance}.
\vspace{-8pt}
\end{proof}

\noindent Theorem~\ref{thr:backdoor_general_distance} shows that the calculation of backdoor distance $d_{\mathcal{H}-W1}(\mathcal{D}_P, \mathcal{D}_{A,\mathcal{B},t})$ can be reduced to the calculation of the probability gain of $\Pr_{\mathcal{D}_{A,\mathcal{B},t}}(x,y)$ over $\Pr_{\mathcal{D}_P}(x,y)$ on those trigger-carrying inputs $A(\mathcal{B})$, when $Z_{A,\mathcal{B},t} \geq 1$. 
Notably, because 
$Z_{A,\mathcal{B},t} = 1-\Pr(A(\mathcal{B})) + \beta \Pr(\mathcal{B})$,  $Z_{A,\mathcal{B},t} \geq 1$ is satisfied if
$\Pr(A(\mathcal{B})) \leq \beta \Pr(\mathcal{B})$. This implies that
if those trigger-carrying inputs show up more often on the 
backdoor distribution than on the primary distribution, we can use the aforementioned method to compute the backdoor distance.


\vspace{3pt}\noindent\textbf{Parametrization of backdoor distance}.
The following Lemma further reveals the impacts of two parameters $\beta$ and $\kappa$ on the backdoor distance:

\begin{lemma}\label{lemma:backdoor_A_beta}
When, $Z_{A,\mathcal{B},t} \geq 1$ and $\Pr(\mathcal{B}) = \kappa \Pr(A(\mathcal{B}))$,
\vspace{-8pt}
\begin{equation}
\notag
\begin{array}{l@{\quad}l}
d_{\mathcal{H}-W1}(\mathcal{D}_P, \mathcal{D}_{A,\mathcal{B},t}) = \Pr(\mathcal{B}) \underset{(x,y)\in A(\mathcal{B})\times\mathcal{Y}}{\int}\max(\widetilde{\Pr_{gain}}(x,y), 0), 
\end{array}
\vspace{-3pt}
\end{equation}
\noindent where $\widetilde{\Pr_{gain}}(x,y)$ equals to
\vspace{-10pt}
\begin{equation}
\label{eq:bain}
\begin{array}{l@{\quad}l}
\frac{\beta}{Z_{A,\mathcal{B},t}} \frac{\Pr_{\mathcal{D}_{A,\mathcal{B},t}}(x)}{\Pr_{\mathcal{D}_{A,\mathcal{B},t}}(A(\mathcal{B}))}\Pr_{\mathcal{D}_{A,\mathcal{B},t}}(y|x) - \frac{1}{\kappa} \frac{\Pr(x)}{\Pr(A(\mathcal{B}))} \Pr_{\mathcal{D}_P}(y|x).
\end{array}
\vspace{-8pt}
\end{equation}
\end{lemma}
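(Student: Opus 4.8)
The plan is to reduce the claimed identity to a single pointwise statement, namely that
$\Pr_{gain}(x,y) = \Pr(\mathcal{B})\,\widetilde{\Pr_{gain}}(x,y)$ holds for every $(x,y)\in A(\mathcal{B})\times\mathcal{Y}$. Once this is in hand, the lemma follows immediately from Theorem~\ref{thr:backdoor_general_distance} (whose hypothesis $Z_{A,\mathcal{B},t}\geq 1$ is exactly the one assumed here): since $\Pr(\mathcal{B})\geq 0$ is a constant with respect to $(x,y)$, we may pull it out of $\max(\cdot,0)$ and out of the integral over $A(\mathcal{B})\times\mathcal{Y}$, turning $\int\max(\Pr_{gain},0)$ into $\Pr(\mathcal{B})\int\max(\widetilde{\Pr_{gain}},0)$.

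The first step toward the pointwise identity is to pin down the marginal mass $\Pr_{\mathcal{D}_{A,\mathcal{B},t}}(A(\mathcal{B}))$. By Definition~\ref{def:backdoor_task}, on $\mathcal{X}\setminus A(\mathcal{B})$ we have $\Pr_{\mathcal{D}_{A,\mathcal{B},t}}(x,y)=\Pr_{\mathcal{D}_P}(x,y)/Z_{A,\mathcal{B},t}$, so integrating there gives $(1-\Pr(A(\mathcal{B})))/Z_{A,\mathcal{B},t}$; subtracting from the total mass $1$ and substituting the closed form $Z_{A,\mathcal{B},t}=1-\Pr(A(\mathcal{B}))+\beta\Pr(\mathcal{B})$ yields $\Pr_{\mathcal{D}_{A,\mathcal{B},t}}(A(\mathcal{B}))=\beta\Pr(\mathcal{B})/Z_{A,\mathcal{B},t}$. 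Separately, summing $P(x,y)$ over $y$ on $A(\mathcal{B})$ and using $\sum_{y}\Pr_{\mathcal{D}_{A,\mathcal{B},t}}(y|x)=1$ gives $\Pr_{\mathcal{D}_{A,\mathcal{B},t}}(x)=\tfrac{\beta}{Z_{A,\mathcal{B},t}}\Pr(A^{-1}(x))$ for $x\in A(\mathcal{B})$. The second step is then routine algebra on $A(\mathcal{B})\times\mathcal{Y}$: writing $\Pr_{\mathcal{D}_{A,\mathcal{B},t}}(x,y)=\Pr_{\mathcal{D}_{A,\mathcal{B},t}}(x)\Pr_{\mathcal{D}_{A,\mathcal{B},t}}(y|x)$ and substituting $\Pr_{\mathcal{D}_{A,\mathcal{B},t}}(A(\mathcal{B}))=\beta\Pr(\mathcal{B})/Z_{A,\mathcal{B},t}$ collapses the first summand of $\Pr(\mathcal{B})\widetilde{\Pr_{gain}}(x,y)$ to $\Pr_{\mathcal{D}_{A,\mathcal{B},t}}(x,y)$; writing $\Pr_{\mathcal{D}_P}(x,y)=\Pr(x)\Pr_{\mathcal{D}_P}(y|x)$ and invoking the hypothesis $\Pr(\mathcal{B})=\kappa\Pr(A(\mathcal{B}))$ collapses the second summand to $\Pr_{\mathcal{D}_P}(x,y)$. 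Hence $\Pr(\mathcal{B})\widetilde{\Pr_{gain}}(x,y)=\Pr_{\mathcal{D}_{A,\mathcal{B},t}}(x,y)-\Pr_{\mathcal{D}_P}(x,y)=\Pr_{gain}(x,y)$, as desired.

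I do not anticipate a substantive obstacle; the proof is bookkeeping with the normalization constant $Z_{A,\mathcal{B},t}$. The two points requiring care are: (i) the marginal identity $\Pr_{\mathcal{D}_{A,\mathcal{B},t}}(A(\mathcal{B}))=\beta\Pr(\mathcal{B})/Z_{A,\mathcal{B},t}$ should be derived from the normalization of $\mathcal{D}_{A,\mathcal{B},t}$ together with the given expression for $Z_{A,\mathcal{B},t}$, not assumed; and (ii) factoring $\Pr(\mathcal{B})$ through $\max(\cdot,0)$ and the integral is valid precisely because $\Pr(\mathcal{B})\geq 0$, with the degenerate case $\Pr(\mathcal{B})=0$ (where the backdoor region has no mass) making both sides vanish.
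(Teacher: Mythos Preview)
Your argument is correct and is precisely the ``straightforward derivation'' that the paper alludes to but omits: the paper's proof reads, in full, ``The derivation is straightforward, thus we omit it.'' Your pointwise identity $\Pr_{gain}(x,y)=\Pr(\mathcal{B})\,\widetilde{\Pr_{gain}}(x,y)$ on $A(\mathcal{B})\times\mathcal{Y}$, established via the marginal computation $\Pr_{\mathcal{D}_{A,\mathcal{B},t}}(A(\mathcal{B}))=\beta\Pr(\mathcal{B})/Z_{A,\mathcal{B},t}$ and the hypothesis $\Pr(\mathcal{B})=\kappa\Pr(A(\mathcal{B}))$, is exactly what one needs, and your appeal to Theorem~\ref{thr:backdoor_general_distance} together with the nonnegativity of $\Pr(\mathcal{B})$ completes the proof cleanly.
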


\begin{proof}
The derivation is straightforward, thus we omit it.  
\vspace{-8pt}
\end{proof}

As demonstrated by Lemma~\ref{lemma:backdoor_A_beta}, the two parameters $\beta$ and $\kappa$ are important to the backdoor distance, where $\beta$ is related to the poisoning rate (Definition~\ref{def:backdoor_task}) and $\kappa$ describes how close is the probability of presenting trigger-carrying inputs to the probability of showing their benign counterparts on the primary distribution (the bigger $\kappa$ the farther away are these two probabilities).

Let us first consider the range of $\beta$. Intuitive, a large $\beta$ causes the trigger-carrying inputs more likely to show up on the backdoor distribution, and therefore could be easier detected.
A reasonable backdoor attack should keep $\beta$ smaller than $1$, which is equivalent to constraining the poisoning rate ($\frac{\beta}{1+\beta}$) below $50\%$. On the other hand, a very small $\beta$ will make the backdoor task more difficult to learn by a model, which eventually reduces the attack success rate (ASR). A reasonable backdoor attack should use a $\beta$ greater than $\frac{1}{\kappa}$: that is, the chance of seeing trigger-carrying inputs on the backdoor distribution no lower than that on the primary distribution.
Therefore, we assume $\frac{1}{\kappa} \leq \beta \leq 1$.   
Next, we consider the range of $\kappa$. A reasonable lower-bound of $\kappa$ is $1$; if $\kappa < 1$, trigger-carrying inputs show up even more often than their benign counterparts on the primary distribution, which eventually lets the backdoored model outputs differently from benign models on such large portion of inputs and make the backdoor be easy detected.
So, we assume $\kappa \geq 1$.

With above assumptions on the range of $\beta$ and $\kappa$, we get the following theorem to describe the range of backdoor distance.

\begin{theorem}[Backdoor distance range]\label{thr:backdoor_distance_range}
Supposing $\Pr(\mathcal{B})=\kappa \Pr(A(\mathcal{B}))$, when $\kappa \geq 1$ and $\frac{1}{\kappa} \leq \beta \leq 1$, we have $Z_{A,\mathcal{B},t} \geq 1$,
\vspace{-6pt}
\begin{equation}
\notag
\begin{array}{l@{\quad}l}
 (\frac{\beta}{Z_{A,\mathcal{B},t}} - \frac{1}{\kappa}(1- S)) \Pr(\mathcal{B}) \leq d_{\mathcal{H}-W1}(\mathcal{D}_P, \mathcal{D}_{A,\mathcal{B},t}) \leq \frac{\beta}{Z_{A,\mathcal{B},t}} \Pr(\mathcal{B}) ,
\end{array}
\vspace{-5pt}
\end{equation}
\noindent where $S = \underset{(x,y) \in A(\mathcal{B}) \times \mathcal{Y} }{\int} \max\{\Delta_{prob}, 0\}$ and 
\vspace{-5pt}
\begin{equation}
\label{eq:S}
\begin{array}{l@{\quad}l}
\Delta_{prob} = \frac{\Pr_{\mathcal{D}_{A,\mathcal{B},t}}(x)}{\Pr_{\mathcal{D}_{A,\mathcal{B},t}}(A(\mathcal{B}))}\Pr_{\mathcal{D}_{A,\mathcal{B},t}}(y|x) -  \frac{\Pr(x)}{\Pr(A(\mathcal{B}))} \Pr_{\mathcal{D}_P}(y|x).
\end{array}
\vspace{-5pt}
\end{equation}
\end{theorem}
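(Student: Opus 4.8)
The plan is to reduce the theorem to Lemma~\ref{lemma:backdoor_A_beta} and then estimate the single integral it produces. The theorem's hypotheses ($\kappa\ge 1$, $\tfrac{1}{\kappa}\le\beta\le1$, and $\Pr(\mathcal{B})=\kappa\Pr(A(\mathcal{B}))$) are exactly what is needed to invoke that lemma, once $Z_{A,\mathcal{B},t}\ge1$ has been verified. So I would split the argument into (i) checking $Z_{A,\mathcal{B},t}\ge1$ (which also brings out the structural inequality the rest relies on), (ii) the right-hand bound, and (iii) the left-hand bound. For brevity write $c=\tfrac{\beta}{Z_{A,\mathcal{B},t}}$, $k=\tfrac{1}{\kappa}$, and set $a(x,y)=\tfrac{\Pr_{\mathcal{D}_{A,\mathcal{B},t}}(x)}{\Pr_{\mathcal{D}_{A,\mathcal{B},t}}(A(\mathcal{B}))}\Pr_{\mathcal{D}_{A,\mathcal{B},t}}(y|x)$ and $b(x,y)=\tfrac{\Pr(x)}{\Pr(A(\mathcal{B}))}\Pr_{\mathcal{D}_P}(y|x)$, so that $\widetilde{\Pr_{gain}}=c\,a-k\,b$, $\Delta_{prob}=a-b$, and $\int_{A(\mathcal{B})\times\mathcal{Y}}a=\int_{A(\mathcal{B})\times\mathcal{Y}}b=1$ since each is a conditional distribution renormalised over $A(\mathcal{B})$. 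By Lemma~\ref{lemma:backdoor_A_beta} we then have $d_{\mathcal{H}-W1}(\mathcal{D}_P,\mathcal{D}_{A,\mathcal{B},t})=\Pr(\mathcal{B})\int_{A(\mathcal{B})\times\mathcal{Y}}\max(\widetilde{\Pr_{gain}},0)$, so everything reduces to trapping $\int\max(c\,a-k\,b,0)$ between $c-k(1-S)$ and $c$.

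For (i), substituting $\Pr(\mathcal{B})=\kappa\Pr(A(\mathcal{B}))$ into $Z_{A,\mathcal{B},t}=1-\Pr(A(\mathcal{B}))+\beta\Pr(\mathcal{B})$ gives $Z_{A,\mathcal{B},t}=1+(\beta\kappa-1)\Pr(A(\mathcal{B}))$. Since $\beta\ge\tfrac{1}{\kappa}$ forces $\beta\kappa\ge1$ and $\Pr(A(\mathcal{B}))\in[0,1]$, this is $\ge1$; moreover $\beta\kappa-Z_{A,\mathcal{B},t}=(\beta\kappa-1)(1-\Pr(A(\mathcal{B})))\ge0$, i.e.\ $c=\tfrac{\beta}{Z_{A,\mathcal{B},t}}\ge\tfrac{1}{\kappa}=k\ge0$. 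This inequality $c\ge k$ is the single fact I will use in both remaining steps.

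Step (ii) is immediate. Since $b\ge0$ and $k>0$, $c\,a-k\,b\le c\,a$, so $\max(\widetilde{\Pr_{gain}},0)\le c\,a$ pointwise; integrating over $A(\mathcal{B})\times\mathcal{Y}$ and using $\int a=1$ gives $\int\max(\widetilde{\Pr_{gain}},0)\le c$, hence $d_{\mathcal{H}-W1}(\mathcal{D}_P,\mathcal{D}_{A,\mathcal{B},t})\le\tfrac{\beta}{Z_{A,\mathcal{B},t}}\Pr(\mathcal{B})$.

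Step (iii) is the crux. The crude bound $\max(g,0)\ge g$ together with $\int\widetilde{\Pr_{gain}}=c\cdot1-k\cdot1=c-k$ already yields $d_{\mathcal{H}-W1}(\mathcal{D}_P,\mathcal{D}_{A,\mathcal{B},t})\ge(c-k)\Pr(\mathcal{B})$; to upgrade $c-k$ to $c-k(1-S)=c-k+kS$ I would use $\max(g,0)=g+\max(-g,0)$, writing $\int\max(\widetilde{\Pr_{gain}},0)=(c-k)+\int\max(k\,b-c\,a,0)$, and then compare the remaining integral with $kS$. Note $S=\int\max(\Delta_{prob},0)=\int\max(a-b,0)=\int\max(b-a,0)$ (the last two agree because $\int a=\int b=1$), so $kS=\int\max(k\,b-k\,a,0)$, and the comparison of $k\,b-c\,a$ with $k\,b-k\,a$ is controlled, term by term, by $c\ge k$ and $a\ge0$. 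The delicate point — and the step I expect to be the main obstacle — is making this comparison land exactly on the closed form $k(1-S)$ rather than on a strictly weaker quantity; this is where the constraints $\kappa\ge1$ and $\tfrac{1}{\kappa}\le\beta\le1$ must be used to their full extent. I would verify the resulting inequality against the two extreme cases as a consistency check: when $S=0$ we have $a=b$ a.e.\ and $\widetilde{\Pr_{gain}}=(c-k)a$, so the bound must collapse to $(c-k)\Pr(\mathcal{B})$, and when $S=1$ the supports of $a$ and $b$ are essentially disjoint, $\int\max(\widetilde{\Pr_{gain}},0)=c$, and the lower bound must meet the upper bound $\tfrac{\beta}{Z_{A,\mathcal{B},t}}\Pr(\mathcal{B})$. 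Everything else — the $Z_{A,\mathcal{B},t}\ge1$ verification, step (ii), and the reduction through Lemma~\ref{lemma:backdoor_A_beta} — is routine.
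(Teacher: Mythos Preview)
Your setup through step (ii) matches the paper's proof exactly: the reduction via Lemma~\ref{lemma:backdoor_A_beta}, the verification that $Z_{A,\mathcal{B},t}\ge1$ and $c:=\beta/Z_{A,\mathcal{B},t}\ge 1/\kappa=:k$, and the upper bound via $\max(ca-kb,0)\le ca$ are all the same. The gap is entirely in step (iii). Your identity $\int\max(ca-kb,0)=(c-k)+\int\max(kb-ca,0)$ is correct, but as you yourself observe, $c\ge k$ forces $\max(kb-ca,0)\le\max(kb-ka,0)=k\max(b-a,0)$, so the comparison yields $\int\max(ca-kb,0)\le(c-k)+kS$ --- an \emph{upper} bound on the integral, not the lower bound the theorem asserts. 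You flag this as ``the main obstacle'' but do not close it, so the proposal is incomplete at the crucial step.

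The paper's argument at this point is different: after factoring out $k$ to write $\int\max(ca-kb,0)=k\int\max(Ka-b,0)$ with $K:=c/k\ge1$, it invokes the pointwise claim $\max\{Ku-v,0\}\ge\max\{u-v,0\}+(K-1)u$ for $u,v\ge0$ and $K\ge1$; integrating and using $\int a=\int b=1$ gives $\int\max(Ka-b,0)\ge(K-1)+S$, and multiplying by $k\Pr(\mathcal{B})$ yields the stated lower bound. You should be aware, however, that this pointwise inequality fails whenever $0<u<v$ and $K>1$ (e.g.\ $K=2$, $u=0.5$, $v=1$ gives $0\ge0.5$), and the integrated version fails too: with $a=(0.5,0.5)$, $b=(0,1)$, $K=2$ one gets $\int\max(Ka-b,0)=1<1.5=(K-1)+S$. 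In fact your own computation establishes the \emph{reverse} inequality $\int\max(Ka-b,0)\le(K-1)+S$. So your ``obstacle'' is not a deficiency of your decomposition; the lower bound as stated does not follow from the hypotheses, and the paper's pointwise lemma does not repair it.
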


\begin{proof}
See Appendix~\ref{proof:backdoor_distance_range}.
\vspace{-5pt}
\end{proof}

\begin{corollary}[Effects of $\beta$]\label{coro:beta_effect}
Supposing $\Pr(\mathcal{B})=\kappa \Pr(A(\mathcal{B}))$, $\kappa \geq 1$ and $\kappa$ is fixed, when $\beta$ varies in range $[\frac{1}{\kappa}, 1]$, we have
\vspace{-8pt}
\begin{equation}
\notag
\begin{array}{l@{\quad}l}
 \frac{S \Pr(\mathcal{B})}{\kappa}  \leq d_{\mathcal{H}-W1}(\mathcal{D}_P, \mathcal{D}_{A,\mathcal{B},t}) \leq  \frac{\kappa \Pr(\mathcal{B})}{\kappa + \kappa \Pr(\mathcal{B}) - \Pr(\mathcal{B})},
\end{array}
\vspace{-8pt}
\end{equation}
\noindent where $S$ is defined in Theorem~\ref{thr:backdoor_distance_range}. Specially, the lower-bound $\frac{S \Pr(\mathcal{B})}{\kappa}$ is achieved when $\beta=\frac{1}{\kappa}$, and the upper-bound $\frac{\kappa \Pr(\mathcal{B})}{\kappa + \kappa \Pr(\mathcal{B}) - \Pr(\mathcal{B})}$ is achieved when $\beta=1$.
\end{corollary}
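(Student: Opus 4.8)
The plan is to optimise the two–sided estimate of Theorem~\ref{thr:backdoor_distance_range} over the admissible interval $\beta\in[\tfrac1\kappa,1]$, with $\kappa$ (and hence $\Pr(\mathcal{B})$ and $\Pr(A(\mathcal{B}))$) held fixed; note that the corollary's hypotheses are exactly those of Theorem~\ref{thr:backdoor_distance_range}, so its conclusion applies throughout the stated range. Using the standing assumption $\Pr(\mathcal{B})=\kappa\Pr(A(\mathcal{B}))$, I would write $Z(\beta):=Z_{A,\mathcal{B},t}=1-\Pr(A(\mathcal{B}))+\beta\Pr(\mathcal{B})=1-\tfrac{\Pr(\mathcal{B})}{\kappa}+\beta\Pr(\mathcal{B})$, so that Theorem~\ref{thr:backdoor_distance_range} reads $L(\beta)\le d_{\mathcal{H}-W1}(\mathcal{D}_P,\mathcal{D}_{A,\mathcal{B},t})\le U(\beta)$ with $U(\beta)=\tfrac{\beta}{Z(\beta)}\Pr(\mathcal{B})$ and $L(\beta)=\bigl(\tfrac{\beta}{Z(\beta)}-\tfrac{1-S}{\kappa}\bigr)\Pr(\mathcal{B})$. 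Here $S$ carries no $\beta$–dependence: in $\Delta_{prob}$ the ratio $\Pr_{\mathcal{D}_{A,\mathcal{B},t}}(x)/\Pr_{\mathcal{D}_{A,\mathcal{B},t}}(A(\mathcal{B}))$ equals $\Pr(A^{-1}(x))/\Pr(\mathcal{B})$ once the $\beta/Z$ normalisers cancel, and $\Pr_{\mathcal{D}_{A,\mathcal{B},t}}(y\mid x)$ is the adversary's chosen conditional. Thus the whole statement reduces to controlling the scalar linear–fractional function $\beta\mapsto\beta/Z(\beta)$ on $[\tfrac1\kappa,1]$.

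Two elementary facts about this function do everything. First, it is non-decreasing on $[\tfrac1\kappa,1]$: writing $Z(\beta)=c_0+c_1\beta$ with $c_0=1-\tfrac{\Pr(\mathcal{B})}{\kappa}\ge0$ and $c_1=\Pr(\mathcal{B})>0$, one has $\tfrac{d}{d\beta}\tfrac{\beta}{Z(\beta)}=\tfrac{c_0}{Z(\beta)^2}\ge0$. Second, $\tfrac{\beta}{Z(\beta)}\ge\tfrac1\kappa$ throughout, because $\kappa\beta-Z(\beta)=(\kappa-\Pr(\mathcal{B}))\bigl(\beta-\tfrac1\kappa\bigr)\ge0$ when $\kappa\ge1\ge\Pr(\mathcal{B})$ and $\beta\ge\tfrac1\kappa$. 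Both are routine once $Z$ is written out.

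From these the two bounds drop out. For the upper bound, monotonicity gives $d_{\mathcal{H}-W1}\le U(\beta)\le U(1)=\tfrac{\Pr(\mathcal{B})}{1-\Pr(\mathcal{B})/\kappa+\Pr(\mathcal{B})}=\tfrac{\kappa\Pr(\mathcal{B})}{\kappa+\kappa\Pr(\mathcal{B})-\Pr(\mathcal{B})}$, the maximum of $U$ over the interval being attained at $\beta=1$. For the lower bound, $\tfrac{\beta}{Z(\beta)}\ge\tfrac1\kappa$ yields $d_{\mathcal{H}-W1}\ge L(\beta)\ge\bigl(\tfrac1\kappa-\tfrac{1-S}{\kappa}\bigr)\Pr(\mathcal{B})=\tfrac{S\Pr(\mathcal{B})}{\kappa}$, and since $Z(\tfrac1\kappa)=1$ we get $L(\tfrac1\kappa)=\tfrac{S\Pr(\mathcal{B})}{\kappa}$; to show this value is genuinely \emph{attained} (not merely a lower bound) I would substitute $\beta=\tfrac1\kappa$ into Lemma~\ref{lemma:backdoor_A_beta}: then $Z=1$ forces $\widetilde{\Pr_{gain}}(x,y)=\tfrac1\kappa\Delta_{prob}(x,y)$, so $d_{\mathcal{H}-W1}=\Pr(\mathcal{B})\!\int\!\max(\tfrac1\kappa\Delta_{prob},0)=\tfrac{S\Pr(\mathcal{B})}{\kappa}$ exactly.

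I expect the only real obstacle to be the bookkeeping in the first step — verifying that $S$ (equivalently $\Delta_{prob}$) is $\beta$-free, so that the target quantities are well defined independently of the poisoning parameter; after that the argument is just monotonicity of $\beta/Z(\beta)$ and arithmetic with the affine function $Z(\beta)$.
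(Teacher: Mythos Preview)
Your proposal is correct and follows essentially the same approach as the paper: both compute the derivative of $\beta/Z(\beta)$, observe it is nonnegative on $[\tfrac1\kappa,1]$, evaluate at the endpoints $\beta=\tfrac1\kappa$ and $\beta=1$, and substitute into Theorem~\ref{thr:backdoor_distance_range}. Your version is slightly more thorough---explicitly checking that $S$ is $\beta$-free and invoking Lemma~\ref{lemma:backdoor_A_beta} at $\beta=\tfrac1\kappa$ to confirm the lower bound is genuinely attained---but the core argument is identical.
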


\begin{proof}
See Appendix~\ref{proof:beta_effect}.
\vspace{-5pt}
\end{proof}

\begin{corollary}[Effects of $\kappa$]\label{coro:kappa_effect}
Supposing $\Pr(\mathcal{B})=\kappa \Pr(A(\mathcal{B}))$, $\beta \leq 1$ and $\beta$ is fixed, when $\kappa$ varies in range $[\frac{1}{\beta}, \infty)$, we have
\vspace{-10pt}
\begin{equation}
\notag
\begin{array}{l@{\quad}l}
 S \beta \Pr(\mathcal{B}) \leq d_{\mathcal{H}-W1}(\mathcal{D}_P, \mathcal{D}_{A,\mathcal{B},t}) \leq  \beta \Pr(\mathcal{B}),
\end{array}
\vspace{-5pt}
\end{equation}
\noindent where $S$ is defined in Theorem~\ref{thr:backdoor_distance_range}. Specially, the lower-bound $S \beta \Pr(\mathcal{B})$ and the upper-bound $\beta \Pr(\mathcal{B})$ are achieved, respectively, when $\kappa=\frac{1}{\beta}$.
\end{corollary}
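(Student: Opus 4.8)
The plan is to read off both inequalities from Theorem~\ref{thr:backdoor_distance_range} by treating its two bounds as functions of $\kappa$ with $\beta$ held fixed. First I would record the explicit form of the normalizer: combining $Z_{A,\mathcal{B},t}=1-\Pr(A(\mathcal{B}))+\beta\Pr(\mathcal{B})$ with the hypothesis $\Pr(\mathcal{B})=\kappa\Pr(A(\mathcal{B}))$ gives $Z_{A,\mathcal{B},t}=1+\Pr(\mathcal{B})(\beta-\tfrac{1}{\kappa})$. Since $\kappa\ge\tfrac{1}{\beta}$ over the whole range, $\beta-\tfrac{1}{\kappa}\ge 0$, hence $Z_{A,\mathcal{B},t}\ge 1$; this both discharges the hypothesis of Theorem~\ref{thr:backdoor_distance_range} (and of Lemma~\ref{lemma:backdoor_A_beta}) and shows that $Z_{A,\mathcal{B},t}$ increases monotonically from $1$ at $\kappa=\tfrac{1}{\beta}$ to $1+\beta\Pr(\mathcal{B})$ as $\kappa\to\infty$. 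I would also note that the two conditional densities entering $\Delta_{prob}$ in Eq.~\ref{eq:S} are independent of $\kappa$ and $\beta$, so $S$ is a fixed quantity determined by the backdoor.

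For the upper bound I would simply invoke the upper half of Theorem~\ref{thr:backdoor_distance_range}, namely $d_{\mathcal{H}-W1}(\mathcal{D}_P,\mathcal{D}_{A,\mathcal{B},t})\le\tfrac{\beta}{Z_{A,\mathcal{B},t}}\Pr(\mathcal{B})$. Because $Z_{A,\mathcal{B},t}\ge 1$, this is at most $\beta\Pr(\mathcal{B})$, with equality exactly when $Z_{A,\mathcal{B},t}=1$, i.e.\ at $\kappa=\tfrac{1}{\beta}$; equivalently $\tfrac{\beta}{Z_{A,\mathcal{B},t}}$ is decreasing in $\kappa$, so its supremum over $[\tfrac{1}{\beta},\infty)$ sits at the left endpoint. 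This direction is routine.

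For the lower bound I would first pin down the value at $\kappa=\tfrac{1}{\beta}$: there $Z_{A,\mathcal{B},t}=1$ and $\tfrac{1}{\kappa}=\beta=\tfrac{\beta}{Z_{A,\mathcal{B},t}}$, so by Lemma~\ref{lemma:backdoor_A_beta} the per-point gain collapses to $\widetilde{\Pr_{gain}}=\beta\,\Delta_{prob}$ and hence $d_{\mathcal{H}-W1}(\mathcal{D}_P,\mathcal{D}_{A,\mathcal{B},t})=\Pr(\mathcal{B})\int\max(\beta\Delta_{prob},0)=S\beta\Pr(\mathcal{B})$ exactly (equivalently, setting $Z_{A,\mathcal{B},t}=1$ in the lower bound of Theorem~\ref{thr:backdoor_distance_range} makes it collapse to the same value). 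It then remains to argue that $S\beta\Pr(\mathcal{B})$ lower-bounds $d_{\mathcal{H}-W1}$ for every $\kappa\in[\tfrac{1}{\beta},\infty)$, which I would do by showing the lower-bound expression $\big(\tfrac{\beta}{Z_{A,\mathcal{B},t}}-\tfrac{1}{\kappa}(1-S)\big)\Pr(\mathcal{B})$ of Theorem~\ref{thr:backdoor_distance_range} is non-decreasing in $\kappa$ on that interval. Substituting $t=\tfrac{1}{\kappa}\in(0,\beta]$ and $Z_{A,\mathcal{B},t}=1+\Pr(\mathcal{B})(\beta-t)$, this reduces to controlling the sign of $\tfrac{d}{dt}\big[\tfrac{\beta}{1+\Pr(\mathcal{B})(\beta-t)}-t(1-S)\big]=\tfrac{\beta\Pr(\mathcal{B})}{Z_{A,\mathcal{B},t}^{2}}-(1-S)$ on $(0,\beta]$, an elementary one-variable inequality in $t,\beta,\Pr(\mathcal{B}),S$; here the auxiliary fact $Z_{A,\mathcal{B},t}\le\kappa\beta$, i.e.\ $\tfrac{1}{\kappa}\le\tfrac{\beta}{Z_{A,\mathcal{B},t}}$, which follows from $\kappa\ge\tfrac{1}{\beta}$ and $\Pr(\mathcal{B})\le 1\le\kappa$, is what keeps the $\Delta_{prob}\ge 0$ part from being wiped out by the truncation.

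The main obstacle is this last monotonicity step. Unlike Corollary~\ref{coro:beta_effect}, where $\widetilde{\Pr_{gain}}$ is pointwise monotone in $\beta$ (the prefactor grows while the subtracted term is constant), here $\widetilde{\Pr_{gain}}$ has two competing $\kappa$-dependences and the truncation $\max(\cdot,0)$ lets the region of integration shift with $\kappa$, so the template of Corollary~\ref{coro:beta_effect} does not transfer verbatim; one has to weigh the mass appearing on newly-positive regions against the mass lost where $\widetilde{\Pr_{gain}}$ shrinks, and it is precisely here that any regularity condition needed to make the endpoint value $S\beta\Pr(\mathcal{B})$ the genuine minimum over the range would have to be used.
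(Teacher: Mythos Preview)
Your upper-bound argument matches the paper's: both observe that $\frac{\beta}{Z_{A,\mathcal{B},t}}$ is non-increasing in $\kappa$ and equals $\beta$ at $\kappa=\frac{1}{\beta}$, so $d_{\mathcal{H}-W1}\le\frac{\beta}{Z_{A,\mathcal{B},t}}\Pr(\mathcal{B})\le\beta\Pr(\mathcal{B})$ with equality at the endpoint.

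For the lower bound your plan diverges from the paper's, and the monotonicity you are aiming for does not hold. You propose to show that the full Theorem~\ref{thr:backdoor_distance_range} lower bound $\bigl(\frac{\beta}{Z_{A,\mathcal{B},t}}-\frac{1}{\kappa}(1-S)\bigr)\Pr(\mathcal{B})$ is non-decreasing in $\kappa$; by your own derivative formula this would require $\frac{\beta\Pr(\mathcal{B})}{Z_{A,\mathcal{B},t}^{2}}\le 1-S$, which already fails at $\kappa=\frac{1}{\beta}$ (where $Z_{A,\mathcal{B},t}=1$) whenever $S>1-\beta\Pr(\mathcal{B})$. So the obstacle you flag in your last paragraph is a genuine failure of the scheme, not a loose end.

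The paper handles the lower bound differently. It first sets aside the $\frac{S}{\kappa}$ contribution, arguing heuristically that $S\to 0$ as the distance approaches its minimum so that this term is $o(\frac{1}{\kappa})$, and analyzes only $\frac{\beta}{Z_{A,\mathcal{B},t}}-\frac{1}{\kappa}$ as a function of $\kappa$. It then differentiates this quantity with respect to $\kappa$ directly (not via $t=\frac{1}{\kappa}$), obtains a rational expression whose numerator is an explicit quadratic in $\kappa$ with positive leading coefficient, writes down its two roots, and checks that both lie strictly below $\frac{1}{\beta}$; hence the derivative is nonnegative on $[\frac{1}{\beta},\infty)$, so $\frac{\beta}{Z_{A,\mathcal{B},t}}-\frac{1}{\kappa}$ increases from $0$ at $\kappa=\frac{1}{\beta}$, at which point the distance equals $S\beta\Pr(\mathcal{B})$. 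The two ingredients missing from your proposal are thus (i) stripping off the $S$-term before the monotonicity check, and (ii) the explicit quadratic-root computation in $\kappa$ that actually pins down the sign.
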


\begin{proof}
See Appendix~\ref{proof:kappa_effect}.
\vspace{-10pt}
\end{proof}

\ignore{
we observe that $\beta$ affects the the first term of $bain$, when others are kept unchanged. Through analyzing the first order derivative of $\frac{\beta}{Z_{A,\mathcal{B},t}}$ w.r.t. $\beta$, we get that $\frac{\beta}{Z_{A,\mathcal{B},t}}$ goes large along with the increasing of $\beta$ when $beta \leq \frac{1}{\Pr(\mathcal{B})}$. Actually, in practice, the $\beta$ would be set as much smaller than $\frac{1}{\Pr(\mathcal{B})}$ to avoid the performance reduction brought by a large amount of trigger-carrying inputs. Due to this consideration, we further assume $\beta \leq 1 \ll \frac{1}{\Pr(\mathcal{B})}$. Under this assumption, larger $\beta$ results in larger backdoor distance $d_{\mathcal{H}-W1}(\mathcal{D}_P, \mathcal{D}_{A,\mathcal{B},t})$. Specially, when $\beta=1$, and $d_{\mathcal{H}-W1}(\mathcal{D}_P, \mathcal{D}_{A,\mathcal{B},t})$ achieves its maximum value:
\begin{equation}
\notag
\begin{array}{l@{\quad}l}
 \frac{1}{K} \frac{\beta}{Z_{A,\mathcal{B},t}} \frac{\Pr_{\mathcal{D}_{A,\mathcal{B},t}}(x)}{\Pr_{\mathcal{D}_{A,\mathcal{B},t}}(A(\mathcal{B}))}\Pr_{\mathcal{D}_{A,\mathcal{B},t}}(y|x) - \frac{1}{K} \frac{\Pr(x)}{\Pr(A(\mathcal{B}))} \Pr_{\mathcal{D}_P}(y|x).
\end{array}
\end{equation}

we use $K$ to model how small are the probability of the trigger-carrying inputs set $A(\mathcal{B})$ what the trigger function $A$ maps the backdoor region $B$ onto. Larger $K$ represents the smaller probability of $A(\mathcal{B})$ comparing to the probability of the backdoor region $\mathcal{B}$, also results in smaller the maximum value of the backdoor distance $\frac{1}{K}$. 
Bigger $\beta$ results in bigger backdoor distance. 

 when $\Pr(\mathcal{B}) \leq \frac{1}{3}$ and $\beta \leq 1$,  $\frac{\beta}{Z_{A,\mathcal{B},t}} : \frac{1}{K}$ increases along with the increasing of $K$.
}

\vspace{-10pt}
\subsection{$\alpha$-Backdoor}
\label{subsec:alpha_backdoor}
\vspace{-5pt}

\noindent\textbf{Definition of $\alpha$-backdoor}.
Through Lemma~\ref{lemma:backdoor_A_beta} and Theorem~\ref{thr:backdoor_distance_range}, we show that the backdoor distance and its boundaries are proportional to $\Pr(\mathcal{B})$, the probability of showing benign inputs in the backdoor region $\mathcal{B}$ on the prior distribution of inputs. However, different backdoor attacks may have different backdoor regions, which is a factor we intend to remove so as to compare the backdoor similarities across different attacks. For this purpose, here we define $\alpha$-backdoor, based upon the same backdoor region $\mathcal{B}$ for different attacks, as follows:   


\begin{definition}[$\alpha$-backdoor]\label{def:alpha_backdoor}
We define an $\alpha$-backdoor as a backdoor whose backdoor distribution is $\mathcal{D}_{A,\mathcal{B},t}$, primary distribution is $\mathcal{D}_P$ and the associated backdoor distance equals to the product of $\alpha$ and $\Pr(\mathcal{B})$, i.e., 
\vspace{-8pt}
\begin{equation}
\notag
\begin{array}{r@{\quad}l}
\alpha \cdot \Pr(\mathcal{B}) = d_{\mathcal{H}-W1}(\mathcal{D}_P, \mathcal{D}_{A,\mathcal{B},t}).
\end{array}
\vspace{-5pt}
\end{equation}
\end{definition}

\vspace{3pt}\noindent\textbf{Approximation of $\alpha$}.
Lemma~\ref{lemma:backdoor_A_beta} actually provides an approach to approximate $\alpha$ in practice. 
Specifically, using the symbol $\widetilde{\Pr_{gain}}$ that has been defined in Eq.~\ref{eq:bain}, we get a simple formulation of $\alpha$: $\alpha = \underset{(x,y) \in A(\mathcal{B}) \times \mathcal{Y}}{\int} max(\widetilde{\Pr_{gain}}(x,y),0)$.
Note that 
$\frac{\Pr(x)}{\Pr(A(\mathcal{B}))} = \Pr(x | x\in A(\mathcal{B}))$ 
and $\frac{\Pr_{\mathcal{D}_{A,\mathcal{B},t}}(x)}{\Pr_{\mathcal{D}_{A,\mathcal{B},t}}(A(\mathcal{B}))} = \Pr_{\mathcal{D}_{A,\mathcal{B},t}}(x | x \in A(\mathcal{B}))$. 
This enables us to approximate $\alpha$ through sampling only trigger-carrying inputs $x \in A(\mathcal{B})$.
Also, $\Pr_{\mathcal{D}_{A,\mathcal{B},t}}(y|x)$ and $\Pr_{\mathcal{D}_P}(y|x)$ can be approximated by a well-trained backdoored model $f_b=c_b \circ g_b$ and a well-trained benign model $f_P = c_P \circ g_P$, respectively, i.e., $g_b(x)_y \approx \Pr_{\mathcal{D}_{A,\mathcal{B},t}}(y|x)$ and $g_P(x)_y \approx \Pr_{\mathcal{D}_P}(y|x)$.
Supposing that we have sampled $m$ trigger-carrying inputs $\{A(x_1), A(x_2), ... , A(x_m)\}$, 
$\alpha$ can be approximated by:
\begin{equation}
\label{eq:alpha}
\begin{array}{l@{\quad}l}
 \alpha
\approx \sum_{i=1}^{m} \sum_{y=0}^{L-1} \max\{\frac{\beta}{Z_{A,\mathcal{B},t}} g_b(A(x_i))_y - \frac{1}{\kappa} g_P(A(x_i))_y, 0\}.
\end{array}
\end{equation}

In Eq~\ref{eq:alpha}, $\beta$ is chosen by the adversary. Thus, we assume that $\beta$ is known, when using $\alpha$ to analyze different backdoor attacks.
Different from $\beta$, $\kappa$ is determined by the trigger function $A$ that distinguishes different backdoor attacks from each other. Next, we demonstrate how to estimate $\kappa$.

\ignore{
\vspace{3pt}\noindent\textbf{Determination of $\beta$}.
Let's consider what's the value of $\beta$ in Eq.~\ref{eq:alpha} should be set.
Lemma~\ref{lemma:xxx} illustrates that the backdoor distance decreases along with the decreasing of $\beta$ and achieves its minimal value when $\beta=\frac{1}{\kappa}$. Besides, $\beta$ is chosen by the adversary and a rational adversary has intend to reduce the backdoor distance to as small as possible to make the injected backdoor be stealthy. Thus, we assume $\beta=\frac{1}{\kappa}$ when comparing with different backdoor attacks to avoid the influences from irrational adversaries. Particularly, when $\beta=\frac{1}{\kappa}$, we have $\alpha = \frac{S}{\kappa}$ where $S$ is defined in Eq.~\ref{eq:S}. Thus, $\alpha$ can be approximated through:
\begin{equation}
\label{eq:alpha_no_beta}
\begin{array}{l@{\quad}l}
 \alpha
\approx \frac{1}{\kappa} \sum_{i=1}^{m} \sum_{y=0}^{L-1} \max\{g_b(A(x_i))_y - g_P(A(x_i))_y, 0\}.
\end{array}
\end{equation}
}

\vspace{3pt}\noindent\textbf{Estimation of $\kappa$}.
Recall that $\kappa = \frac{\Pr(\mathcal{B})}{\Pr(A(\mathcal{B}))}$. Through trivial transformations, we get that
$\kappa = \frac{V(\mathcal{B})}{V(A(\mathcal{B}))} \frac{\mathbb{E}_{\Pr(x | x\in\mathcal{B}) }\Pr(x) }{\mathbb{E}_{\Pr(x | x\in A(\mathcal{B}))}\Pr(x)}$, where $V(\mathcal{B})$ and $V(A(\mathcal{B}))$ are the volumes of set $\mathcal{B}$ and $A(\mathcal{B})$ respectively.
Below, we demonstrate how to estimate $\Pr(x)$ and the volume ratio $\kappa_V=\frac{V(\mathcal{B})}{V(A(\mathcal{B}))}$ separately.

To estimate the prior probability of an input $x$ for the primary task, $\Pr(x)$, we employed a Generative Adversarial Network (GAN)~\cite{stylegan2-ada} and the GAN inversion~\cite{gan-inversion} algorithms. 
Specifically, we aim to build a generative network $G$ and a discriminator network $D$ using adversarial learning: the discriminator $D$ attempts to distinguish the outputs of $G$ and the inputs (e.g., the training samples) $x$ of the primary task, while $G$ takes as the input $z$ randomly drawn from a Gaussian distribution with the variance matrix $I$, i.e., $z \sim \mathbb{N}(0,I)$ and attempts to generate the outputs that cannot be distinguished by $D$. When the adversarial learning converges, the output of $G$ approximately follows the prior probability distribution of $x$, i.e., $\Pr(x) \approx \Pr(G(z)=x))$. In addition, we incorporated with a GAN inversion algorithm capable of recovering the input $z$ of $G$ from a given $x$, s.t., $G(z)=x$. 
Combining the GAN and the inversion algorithm, we can estimate $\Pr(x)$ for a given $x$: 
we first compute $z$ from $x$ using the GAN inversion algorithm, and then estimate $\Pr(x)$ using $\Pr_{\mathbb{N}(0,I)}(z)$.

To estimate the volume ratio $\kappa_V$, we use a Monte Carlo algorithm similar to that proposed by the prior work~\cite{DBLP:journals/corr/abs-2007-06808}. Briefly speaking, for estimating $V(\mathcal{B})$, we first randomly select an $x$ in the backdoor region $\mathcal{B}$ as the origin, and then uniformly sample many directions from the origin and approximate the extent (how long from the origin to the boundary of $\mathcal{B}$) along these directions, and finally, calculate the expectation of the extents of these directions as $Ext(\mathcal{B})$. According to the prior work~\cite{DBLP:journals/corr/abs-2007-06808}, $V(\mathcal{B})$ is approximately equal to the product of $Ext(\mathcal{B})$ and the volume of the $n$ dimensional unit sphere, assuming $\mathcal{B} \subset \mathbb{R}^n$. Therefore, we estimate $\kappa_V$ by $\frac{Ext(\mathcal{B})}{Ext(A(\mathcal{B}))}$.

In general, we estimate $\kappa$ as
$\frac{Ext(\mathcal{B})}{Ext(A(\mathcal{B}))} \frac{\mathbb{E}_{\Pr(x | x\in\mathcal{B}) }\Pr(G^{-1}(x)) }{\mathbb{E}_{\Pr(x | x\in A(\mathcal{B}))}\Pr(G^{-1}(x))}$, where $G^{-1}(x)$ represents the output of a GAN inversion algorithm for a given $x$. We defer the details to Appendix~\ref{app:details_kappa}.

\begin{figure}[ht]
     \centering
     \includegraphics[width=\linewidth]{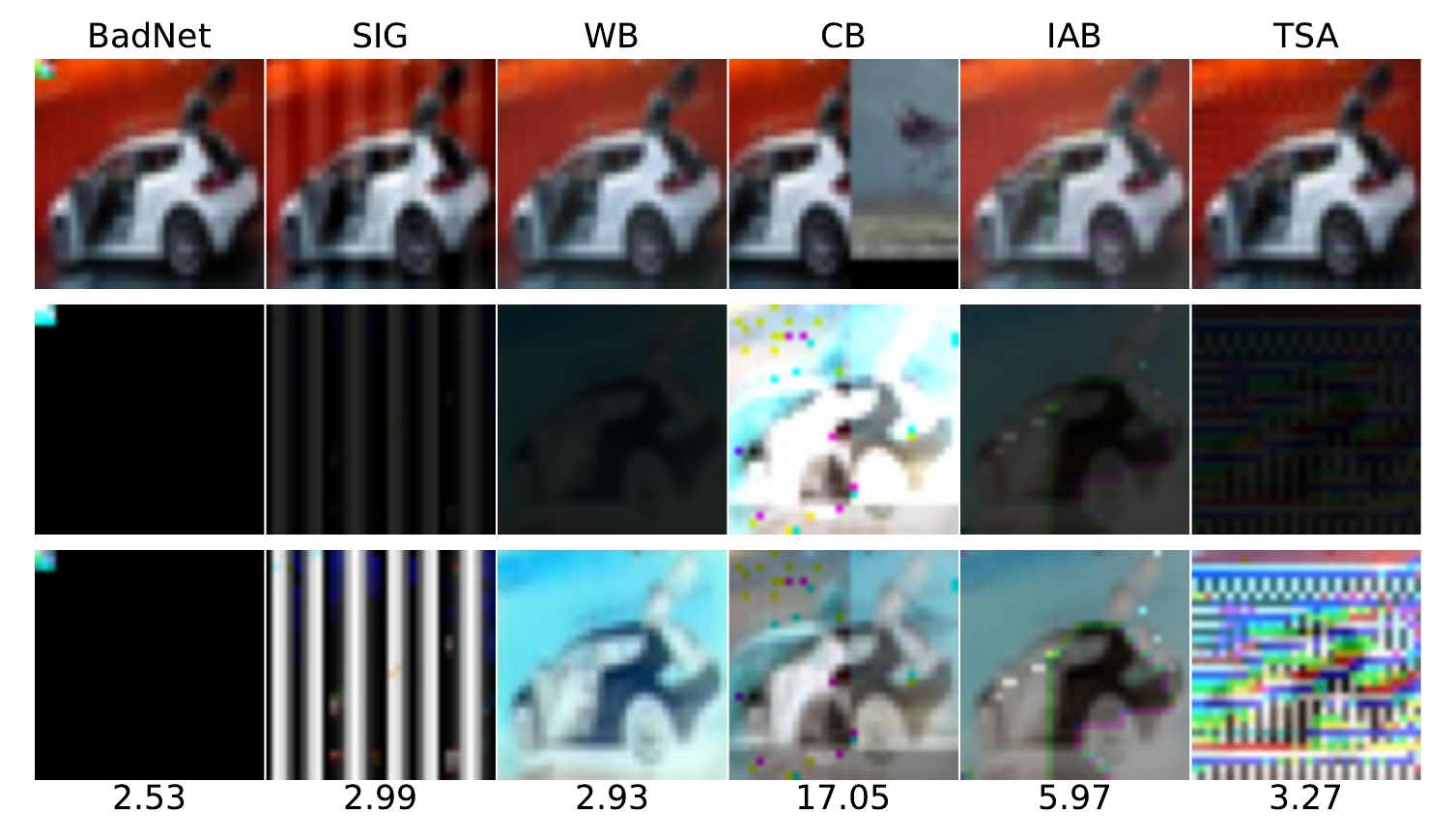}
     \vspace{-15pt}
     \caption{Demonstration of trigger-carrying inputs generated by different attacks. The first row shows attacks' name, the second row presents trigger-carrying inputs, the third row shows triggers, the fourth row shows amplified triggers and the fifth row illustrates the $L_2$-norm of triggers. }
     \label{fig:attack_demo}
\vspace{-10pt}
\end{figure}

\ignore{

Dynamic <<Dynamic backdoor attacks against machine learning models>> EuroS&P'22
[done] LIRA <<LIRA: learnable, Imperceptible and Robust backdoor attacks>> ICCV'21
[done] WaNet <<WaNet: imperceptible warping-based backdoor attack>> ICLR'21
ISSBA <<invisible backdoor attack with sample specific triggers>> ICCV'21
Blind <<Blind Backdoors in Deep Learning Models>> Usenix'21
[done] Wasserstein Backdoor <<Backdoor attack with imperceptible input and latent modification>> NIPS'21                      
[done] RefFool <<Reflection backdoor: a natural backdoor attack on deep neural networks>> ECCV'20                             
[done] Composite Backdoor  <<Composite backdoor attack for deep neural network by mixing existing benign features>> CCS'20
[done] Input-aware backdoor IAB <<Input-Aware Dynamic Backdoor attack>> NIPS'20
[done] Latent backdoor <<latent backdoor attacks on deep neural netowkrs>> ccs'19
[done] SIG <<a new backdoor attack in cnns by training set corruption without label poisoning>> ICIP'19
[done] Label consistent backdoor <<Label-consistent backdoor attacks>> 2019                                                    
[done] TrojanNN <<trojanining attack on neural networks>> NDSS'18
[done] BadNet <<Badnets: Identifying vulnerabilities in the machine learning model supply chain>> 2017 
}

\begin{table}[tbh]
  \centering
  \caption{Backdoor similarities of backdoor attacks. ASR stands for attack success rate, $L_2$-norm stands for the average of $\{\|x-A(x)\|_2 : x \in \mathcal{B} \}$ after regularizing all $x$ and $A(x)$ into $[0,1]^n$. Note that, when $\beta < \frac{1}{\kappa}$, the ``$\alpha/\beta$'' columns represent the $S$ value (Eq.~\ref{eq:S}).}
  \begin{adjustbox}{width=0.48\textwidth}
  
\begin{tabular}{|c|cccc|cccc|c|c|}
\hline
       & \multicolumn{4}{c|}{ASR ($\%$)}                                                                     & \multicolumn{4}{c|}{$\alpha / \beta$}                                                                & $\ln(\kappa)$ & $L_2$-norm    \\ \hline
$\beta$   & \multicolumn{1}{c|}{0.1}   & \multicolumn{1}{c|}{0.05}  & \multicolumn{1}{c|}{0.01}  & 0.005 & \multicolumn{1}{c|}{0.1}  & \multicolumn{1}{c|}{0.05} & \multicolumn{1}{c|}{0.01} & 0.005 & All   & All   \\ \hline
BadNet~\cite{BadNets} & \multicolumn{1}{c|}{99.97} & \multicolumn{1}{c|}{99.18} & \multicolumn{1}{c|}{69.27} & 38.85 & \multicolumn{1}{c|}{0.98} & \multicolumn{1}{c|}{0.98} & \multicolumn{1}{c|}{0.97} & 0.95  & 5.98  & 2.37  \\ \hline
SIG~\cite{SIG}    & \multicolumn{1}{c|}{98.10} & \multicolumn{1}{c|}{81.62} & \multicolumn{1}{c|}{34.57} & 9.88  & \multicolumn{1}{c|}{0.99} & \multicolumn{1}{c|}{0.98} & \multicolumn{1}{c|}{0.96} & 0.94  & 6.05  & 2.72  \\ \hline
WB~\cite{wasserstein-backdoor}     & \multicolumn{1}{c|}{83.38} & \multicolumn{1}{c|}{72.29} & \multicolumn{1}{c|}{32.97} & 7.69 & \multicolumn{1}{c|}{0.67} & \multicolumn{1}{c|}{0.49} & \multicolumn{1}{c|}{0.23} & 0.18  & 4.01  & 2.86  \\ \hline
CB~\cite{composite-backdoor}     & \multicolumn{1}{c|}{86.69} & \multicolumn{1}{c|}{78.18} & \multicolumn{1}{c|}{55.09} & 45.92  & \multicolumn{1}{c|}{1.00} & \multicolumn{1}{c|}{1.00} & \multicolumn{1}{c|}{1.00} & 1.00  & 17.93 & 17.37 \\ \hline
IAB~\cite{IAB}    & \multicolumn{1}{c|}{98.09} & \multicolumn{1}{c|}{92.58} & \multicolumn{1}{c|}{54.37} & 20.13 & \multicolumn{1}{c|}{1.00} & \multicolumn{1}{c|}{1.00} & \multicolumn{1}{c|}{1.00} & 1.00  & 10.72  & 5.96 \\ \hline
TSA (ours)  & \multicolumn{1}{c|}{99.85} & \multicolumn{1}{c|}{99.21} & \multicolumn{1}{c|}{92.83} & 79.07 & \multicolumn{1}{c|}{0.37} & \multicolumn{1}{c|}{0.34} & \multicolumn{1}{c|}{0.32} & 0.25  & 3.07  & 3.13  \\ \hline
\end{tabular}
\end{adjustbox}
	\label{tb:beta_asr_alpha_kappa}
\vspace{-10pt}
\end{table}

\vspace{-15pt}
\subsection{Analysis on Existing Backdoor Attacks}
\label{subsec:tsa_current_attacks}
\vspace{-5pt}

Existing stealthy backdoor attack methods can be summarized into five categories: visually-unrecognizable backdoors, label-consistent backdoors, latent-space backdoors, benign-feature backdoors and sample-specific backdoors. In this section, we report a backdoor similarity analysis on these backdoor attacks, which is important to understanding their stealthiness, given the positive correlation between backdoor distance and detectability we discovered (Section~\ref{sec:detection}).

\ignore{using the method presented in Section~\ref{subsec:alpha_backdoor},}
We compare the backdoor distance of backdoored models generated by 5 different attacks, each representing a different category, on CIFAR10~\cite{cifar10}. 
As mentioned earlier (Theorem~\ref{thr:backdoor_distance_range}), the backdoor distance described by $\alpha$ is related to $\beta$, $\kappa$ and $S$, where $\beta$ is proportional to the \textit{poisoning rate} (see Definition~\ref{def:backdoor_task}), that
describes the adversary's aspiration about how likely those trigger-carrying inputs present in the backdoor distribution in comparison with the probability of showing their benign counterparts in the primary distribution, $\kappa$ also measures the difference between the probability of showing those trigger-carrying inputs and their benign counterparts however within the primary distribution, and $S$ summarizes the conditional probability gain of the outputs given those trigger-carrying inputs obtained on the backdoor distribution compared with such conditional probability on the primary distribution. In simple words, $\beta$ and $\kappa$ together characterizes the difference in inputs and $S$ characterizes the difference in outputs between 
the primary and backdoor distributions.



Specifically, for each attack method, we generated source-specific backdoors (source class is 1 and target class is 0) following the settings described in its original paper but changing $\beta$ to adjust the poisoning rate. In particular,
for $\beta=0.1$, we injected $500$ poisoning samples into the source class (i.e., class 1) with a total of 5,000 samples in the training set.

For each backdoored model, we calculated its ASR at different $\beta$ values as illustrated in Table~\ref{tb:beta_asr_alpha_kappa} to demonstrate the side effect of reducing $\beta$ on ASR. As we see from the table, for BadNet, the ASR is $99.97\%$ when $\beta$ is $0.1$, which goes down with the decrease of the $\beta$, until $38.85\%$ when the $\beta$ drops to $0.005$, rendering the backdoor attack less meaningful. This also shows the rationale of keeping $\beta\geq {1\over\kappa}$, as required in Theorem~\ref{thr:backdoor_distance_range} (here $\beta=0.005 \approx \frac{2}{\kappa}$).    



As illustrated in Eq.~\ref{eq:alpha}, $\alpha$ is proportional to $\beta$, but has a more complicated relation with $\kappa$ and $S$ as further demonstrated in Theorem~\ref{thr:backdoor_distance_range}. To study this complicated relation between $\alpha$ and the parameters other then $\beta$, we normalize $\alpha$ by dividing it with $\beta$ and present the results in Table~\ref{tb:beta_asr_alpha_kappa}. Next, we elaborate our analysis about how existing backdoor attacks reduce $\alpha$ through controlling these parameters.


\vspace{3pt}\noindent\textbf{Visually-unrecognizable backdoors (BadNet)}.
This kind of backdoor attacks generate trigger-carrying inputs visually similar to their benign counterparts, in an attempt to evade the human inspection for anomalous input patterns. 
Generally, visually-unrecognizable backdoors constrain the $L_p$-norm of the trigger, i.e., $\|A(x)-x\|_p$,  to be smaller than a threshold.

Essentially, reducing $\|A(x)-x\|_p$ is to reduce $|\Pr(x) - \Pr(A(x))|$, the difference between the probability of presenting a trigger-carrying input and the probability of presenting its benign counterpart. 
This is because $|\Pr(x) - \Pr(x+\delta)| \propto \|\delta\|_p$, when the perturbation $\delta$ is small and the prior distribution of inputs is some kind of smooth. 
Recall that $\kappa = \Pr(\mathcal{B}) / \Pr(A(\mathcal{B}))$, thus reducing $\|A(x)-x\|_p$ can reduce $\kappa$, as demonstrated in the last two columns of Table~\ref{tb:beta_asr_alpha_kappa}.
However, making $\kappa$ small alone cannot effectively reduce the $\alpha$ as demonstrated by Corollary~\ref{coro:kappa_effect}.
Thus, visually-unrecognizable backdoors only marginally reduce $\alpha$ and moderately increase the backdoor similarity, as observed by our analysis on BadNet (Table~\ref{tb:beta_asr_alpha_kappa}), which only lowers down $\alpha/\beta$ (the normalized $\alpha$) by $0.05$ to $0.95$ when $\beta=0.005$. 



\vspace{3pt}\noindent\textbf{Label-consistent backdoor (SIG)}.
The label-consistent backdoor attacks inject a backdoor into the victim model 
with only 
label-consistent inputs generated by pasting the trigger onto the vague (i.e., hard to be classified) inputs, in an attempt to increase the stealthiness against human inspection. Specifically, prior research~\cite{label-consistent} proposes to use GAN or adversarial examples to get hard-to-classify inputs, while SIG~\cite{SIG} 
utilizes a more inconspicuous trigger (small waves).

However, we found that label-consistent backdoors do not reduce $\alpha$ more effectively, than the naive label-flipped backdoors (e.g., BadNet), 
because injecting a backdoor through label-consistent way has changed neither $\kappa$ nor $S$ of this backdoor task away from that of injecting this backdoor through label-flipped way, as observed in our experiments where similar $\alpha/\beta$ (the normalized $\alpha$) exhibited by these two types of backdoors (see the ``SIG'' and the ``BadNet'' rows in Table~\ref{tb:beta_asr_alpha_kappa}).
Specifically, the BadNet and SIG attacks accomplished their backdoor tasks using similar triggers in terms of $L_2$-norm:
BadNet uses a trigger with the $L_2$-norm of $2.37$ and SIG utilizes a trigger with $L_2$-norm of $2.72$. Apparently, the $\alpha/\beta$ values for the SIG and those for the BadNet are similar at all $\beta$ values we tested. 

\vspace{3pt}\noindent\textbf{Latent-space backdoors (WB)}.
The latent-space backdoor attacks aim to make the backdoored model produce similar latent features for trigger-carrying inputs and benign inputs. 
Prior research~\cite{latent-backdoor} proposes to use this idea to generate a student model that learns the backdoor injected in the teacher model under the transfer learning scenario. Later, this idea has been employed by the Wasserstein Backdoor (WB)~\cite{wasserstein-backdoor} to increase the backdoor stealthiness against the latent space defense (e.g., AC~\cite{AC}). Specifically, WB makes the distribution of the penultimate layer's outputs (latent features) of trigger-carrying inputs as close to those of benign inputs as possible in terms of the sliced-Wasserstein distance~\cite{sliced-wasserstein}.

Making latent features of trigger-carrying inputs and benign inputs be close is essentially to reduce $S$ (defined in Eq.~\ref{eq:S}), the expectation of the conditional probability gain obtained by the backdoored model on trigger-carrying inputs, which is actually the lower-bound of the $\alpha$ when $\beta=\frac{1}{\kappa}$ (Corollary~\ref{coro:kappa_effect}).
In this way, WB effectively reduces $\alpha/\beta$ (the normalized $\alpha$) compared with other four types of backdoors as demonstrated in the ``WB'' row of Table~\ref{tb:beta_asr_alpha_kappa}.
However, this $\alpha/\beta$ reduction achieved by WB comes with the cost of low ASRs, especially when $\beta$ is low (ASR is only $7.69\%$ when $\beta=0.005$), which indicates that reducing $S$ may make the trigger harder to learn.



\vspace{3pt}\noindent\textbf{Benign-feature backdoors (CB)}
Benign-feature backdoor attacks aim to produce backdoored models that leverage features similar to those used by a benign model by constructing a trigger with a composite of benign features, thereby increasing the stealthiness of the backdoor against the backdoor detection techniques that distinguish the weights of backdoored models from those of benign models (e.g., ABS~\cite{ABS}). A representative work in this category is Composite Backdoor (CB)~\cite{composite-backdoor}, which mixes two benign inputs from specific classes into one, and then trains the backdoored model to predict the target labels on these mixed inputs. In another example~\cite{reflection-backdoor}, the adversary constructs the trigger using the reflection features hiding in the input images.

The training inputs with benign features from those in different classes could render the marginal backdoor distribution on inputs significantly deviating from the distribution of benign inputs, making this backdoor even easier to detect. When it comes to backdoor similarity, benign-feature backdoors indirectly reduce $S$ (defined in Eq.~\ref{eq:S}), but in the meantime, increase $\kappa$ (since the trigger-carrying inputs becomes less likely to see from the primary distribution), and thus may not reduce the backdoor distance eventually, which has been shown by the ``CB'' row of Table~\ref{tb:beta_asr_alpha_kappa}.
In addition, the benign-feature backdoors also increase the difficulty in learning the backdoor task (only 83.38\% ASR achieved when $\beta=0.1$).



\vspace{3pt}\noindent\textbf{Sample-specific backdoors (IAB)}
The sample-specific backdoor attacks design the trigger specific to each input. As a result, if an input is given an inappropriate trigger, it will not trigger the backdoor. This kind of backdoors are designed to evade trigger inversion by increasing the difficulty in reconstructing the true trigger.
The Input Aware Backdoor (IAB)~\cite{IAB} is a representative work in this category, which uses a trigger generation network to produce a sample-specific trigger.
The attack methods proposed in the prior work~\cite{ISSBA} and~\cite{dynamic-backdoor} also belong to this category. 

A sample-specific backdoor requires that the trigger carries more information than the trigger of sample-agnostic backdoors, so as to enable the backdoored model to learn the complicated relations between triggers and the inputs. Thus, the trigger of the sample-specific backdoors may come with a large $L_2$-norm. As presented in Table~\ref{tb:beta_asr_alpha_kappa}, the $L_2$-norm of the trigger used by the IAB backdoor is $5.96$, more than twice of the trigger for BadNet ($2.37$) in terms of $L_2$-norm. Such a large trigger renders the trigger-carrying inputs less likely to observe from the primary distribution, thereby reducing the similarity between the probability of seeing benign inputs and the probability of seeing trigger-carrying inputs on the primary distribution, and leading to the increase in $\kappa$ ($\ln(\kappa)=10.72$) and the $\alpha/\beta$ (the normalized $\alpha$).



\vspace{-10pt}
\subsection{New Attack}
\label{subsec:our_attack}
\vspace{-5pt}

In Section~\ref{subsec:tsa_current_attacks} and Table~\ref{tb:beta_asr_alpha_kappa}, we illustrated
that the existing backdoor attacks did not effectively reduce the backdoor distance while keeping high attack success rate (ASR). Our analysis revealed that it is mainly due to three points: 
1) most of these attacks did not reduce $\kappa$ to a small value (e.g., in BadNet, SIG, CB and IAB);
2) the complicated triggers used by many attacks make the backdoor task hard to be learned (e.g., WB, CB and IAB);
and 3) some missed to reduce $S$ (e.g., BadNet, SIG, IAB).

To address these issues, we aim to devise a new attack method that can handle all these points at one time. To reduce $\kappa$, the adversary should use a trigger function that maps a benign input to its close neighbor in terms of not only their $L_p$-norm and but also their probabilities to be presented by the primary task. Using the trigger-carrying inputs with small $L_p$-norm from the benign inputs may not unnecessarily lead to small $\kappa$; in fact, as shown in Table~\ref{tb:beta_asr_alpha_kappa}, the trigger used by BadNet lead to the trigger-carrying inputs with smaller $L_2$-norms but higher $\kappa$ compared to those by WB. On the other hand, to reduce $S$, the adversary should enable the backdoored model to generate similar conditional probabilities as the benign models of the outputs given those trigger-carrying inputs. Finally, the adversary should use a trigger function that can be easily learned; using a complex trigger function as used by WB lead to the backdoored model with low ASR when the poison rate is low (i.e., $\beta$ is small).


At a first glance, it appears impossible to reduce $\kappa$ and $S$ simultaneously, 
as a perfect benign model produces similar outputs for similar inputs and, thus, it always produces different conditional probabilities from the backdoored model of the outputs given those trigger-carrying inputs.
In practice, however, the benign models may not be perfect (highly robust), which may produce very different outputs even for similar inputs, e.g., the adversarial samples~\cite{szegedy2013intriguing}, making it possible to reduce $\kappa$ and $S$ simultaneously.
Together with the trick to make trigger be easy to learn, we the TSA attack which details are illustrated in Algorithm~\ref{alg:tsa_backdoor}.

\begin{algorithm}[htb]
    \caption{TSA attack.}
    \begin{algorithmic}[1]
        \Require{$D_{tr}$, $\mathcal{B}$, $t$, $\alpha^*$, $\beta$, $epoch_{adj}$, $\delta$, $\zeta$, $\omega$} 
        \Ensure{$A(\cdot)$, $f_b$}
        \State Train a benign model $f_P$ on $D_{tr}$ 
        \State Train $A$ with $\mathcal{L}_{A,\mathcal{B},t}(f_P, \alpha^*, \beta)$ (Eq.~\ref{eq:L_A_B_t}) and $\delta$ constraint
        \For{$\_$ \textbf{in} \text{range($epoch_{adj}$)}}
        \State Train $C$ with $\mathcal{L}_{A}(C)$ (Eq.~\ref{eq:L_A})
        \State Update $A$ with $\mathcal{L}_{C}(A, \zeta, \omega)$ (Eq.~\ref{eq:L_d})
        
        \EndFor

        \State   Train $f_b$ on $D_{tr}$ to minimize Eq.~\ref{eq:L_backdoor_train}

    \end{algorithmic}
    \label{alg:tsa_backdoor}
\end{algorithm}

First, in line-$1$, we train a benign model $f_P$ on a given training set $D_{tr}$.
In line-$2$, for the benign model $f_P$, we optimize trigger function $A$ to minimize $\mathcal{L}_{A,\mathcal{B},t}(f_P, -\alpha^*, \beta)$ such that $\|A(x)-x\|_2 \leq \delta$, where
\vspace{-5pt}
\begin{equation}
\label{eq:L_A_B_t}
\begin{array}{l@{\quad}l}
\mathcal{L}_{A,\mathcal{B},t}(f, \alpha^*, \beta) = \mathbb{E}_{(x,y) \in \mathcal{X} \times \mathcal{Y}} \mathcal{L}_{ce}(f(x),y) - \\
\beta \mathbb{E}_{(x,y) \in \mathcal{\mathcal{B}} \times \mathcal{Y}}  (\frac{1+\alpha^*}{2} \log(g(A(x))_t) + \frac{1-\alpha^*}{2} \log(g(A(x))_y)).
\end{array}
\end{equation}
\noindent Here, we assume $f=c \circ g$ as described in Section~\ref{subsec:nn_modeling}. The loss function $\mathcal{L}_{A,\mathcal{B},t}$ is the sum of the loss for the primary task of $f$ on the clean inputs and the loss for the backdoor task of $f$ on the trigger-carrying inputs weighted by $\beta$. The initial trigger function $A$ is an optimized variable, which is trained to minimize $\mathcal{L}_{A,\mathcal{B},t}(f_P, -\alpha^*, \beta)$ while satisfying the $\delta$ constraint, such that this initial trigger function maps the benign inputs to the trigger-carrying inputs in the region of the same class label but close to the classification boundary in $f_P$.
Then in line-$3$ to -$6$, we iteratively refine the trigger function and to make the backdoor task more easily learned. Specifically, we train a small classification network $C$ to distinguish the trigger-carrying inputs from their benign counterparts by minimizing the loss function:
\vspace{-8pt}
\begin{equation}
\label{eq:L_A}
\begin{array}{l@{\quad}l}
\mathcal{L}_A(C) = - \mathbb{E}_{x \in \mathcal{B}} \log(C(A(x)))  + \log(1-C(x)).
\end{array}
\vspace{-5pt}
\end{equation}
\noindent The poor performance of $C$ 
(i.e., $\mathcal{L}_A(C) > \zeta$) indicates that the current trigger function $A$ is hard to learn, and then $A$ is refined to minimize the loss function (line-$5$):
\vspace{-8pt}
\begin{equation}
\label{eq:L_d}
\begin{array}{l@{\quad}l}
\mathcal{L}_C(A, \zeta, \omega) = \mathcal{L}_{A,\mathcal{B},t}(f_P, -\alpha^*, \beta) + \omega \max\{\mathcal{L}_A(C) - \zeta, 0 \},
\end{array}
\vspace{-2pt}
\end{equation}
\noindent which searches for the trigger function $A$ that maps the benign inputs to the trigger-carrying inputs close to the classification boundary in $f_P$ while penalizing those functions $A$ with $\mathcal{L}_A(C) > \zeta$ by incorporating the penalty term with the weight $\omega$.
Finally, in line-$7$, we use the refined trigger function $A$ to poison the training data, which is then used to train a backdoored model $f_b$ by minimizing $\mathcal{L}_{A,\mathcal{B},t}(f_b, \alpha^*, \beta)$ and a regularization term:
\vspace{-8pt}
\begin{equation}
\label{eq:L_backdoor_train}
\begin{array}{l@{\quad}l}
\mathcal{L}_{A,\mathcal{B},t}(f, \alpha^*, \beta) + \|g_b(x_C)-g_P(x_C)\|_2 \\
\text{where\quad}
x_{C} = \underset{x \in \mathcal{X}/A(\mathcal{B})}{\argmax} \|g_b(x) - g_P(x)\|_2.
\end{array}
\vspace{-5pt}
\end{equation}
Here, the regularization term is designed to seek $f_b$ that minimizes the maximum difference between the outputs of $f_b$ and $f_P$ for the inputs without the trigger.


Empirically, we used a LeNet-5~\cite{lecun1998gradient} network as $C$, and an UNet~\cite{UNet} as the trigger function $A$. Besides, we set $epoch_{adj}=3$, $\delta=0.1$, $\zeta = 0.1$ and $\omega=0.1$. We used an Adam~\cite{adam} optimizer with the learning rate of $1e^{-3}$ to train model weights. We implemented our method based on the PyTorch framework and integrated our code into TrojanZoo~\cite{TrojanZOO}.

In our experiments, we used Algorithm~\ref{alg:tsa_backdoor} to generate backdoored models on the CIFAR10 dataset and demonstrate the results in the last row of Table~\ref{tb:beta_asr_alpha_kappa}. 
We observe that the TSA backdoor not only achieved much better ASR ($79.07\%$) than previous attacks ($\leq 50\%$) even when $\beta$ is as small as $0.005$, but also smaller backdoor distance then other attacks at the meanwhile. 
This could be ascribed to several advantages of our approach. First, the trigger function refinement (line-$3$ to -$6$) helps to derive a trigger function that is easy to learn. Second, the $L_p$-norm constraint lets the TSA backdoor has small $\kappa$, which reduces the lower-bound of the backdoor distance (Corollary~\ref{coro:kappa_effect}), and thus allows for the reduction of the backdoor distance by manipulating $S$ (Eq.~\ref{eq:S}). Furthermore, the TSA backdoor attack manages to control the backdoor distance through manipulating $S$ (line-$7$) for a given $\alpha^*$, which enables the TSA backdoor to achieve small backdoor distance on all $\beta$ values. 
In simple words, TSA backdoor maps the benign inputs to the trigger-carrying inputs close to both the classification boundary (controlled by $\alpha^*$) and the original benign inputs (controlled by $\delta$) through a easy-to-learn trigger.


\ignore{we first generated the source-specific backdoors (class 1 as the source class and class 0 as the target class), launching each attack based on the same settings described in their origin publications while altering $\beta$ to adjust the poison rate. In particular, 
for $\beta=0.1$, we injected $500$ poisoning samples into the source class (i.e., class 1, with a total of 5,000 samples) in the training set.
As summarized in Table~\ref{tb:beta_asr_alpha_kappa}, the results show that, for all backdoor attack methods, with the decreasing $\beta$, $ASR$ decreases along with $\alpha/\beta$. We also computed $\kappa$ and $L_2$-norm of the trigger for all methods, and demonstrate examples of trigger-carrying inputs generated by these attacks on Figure~\ref{fig:attack_demo}.
Next, we analyze each backdoor attack method in details.

\vspace{3pt}\noindent\textbf{Visually-unrecognizable backdoors}.
These backdoor attacks generate trigger-carrying inputs visually similar to their benign counterparts, in an attempt to evade the human inspection for abnormal patterns in inputs. 
Generally, visually-unrecognizable backdoors constrain the $L_p$-norm of the trigger, i.e., $\|A(x)-x\|_p$,  to be smaller than a pre-selected threshold~\cite{BadNets} (e.g., $0.01$ for $L_2$-norm set by~\cite{LIRA}). 

Actually, reducing $\|A(x)-x\|_p$ is essentially to reduce $|\Pr(x) - \Pr(A(x))|$, the difference between the probability of presenting a trigger-carrying input and the probability of presenting its benign counterpart. 
This is due to that $|\Pr(x) - \Pr(x+\delta)| \propto \|\delta\|_p$, when the perturbation $\delta$ is small and the prior probability distribution of inputs is somehow smooth. 
Recalling that $\kappa = \Pr(\mathcal{B}) / \Pr(A(\mathcal{B}))$, thus reducing $\|A(x)-x\|_p$ can reduce the $\kappa$, as demonstrated in the last two columns of Table~\ref{tb:beta_asr_alpha_kappa}.
However, making $\kappa$ small alone cannot effectively reduce the backdoor distance (characterized by $\alpha$) as demonstrated by Corollary~\ref{coro:kappa_effect}.
Thus, visually-unrecognizable backdoors only marginally reduce $\alpha$ and increase the backdoor similarity mildly, which is inline with the performance of a representative attack of this category, BadNet that exploits small size triggers, shown in Table~\ref{tb:beta_asr_alpha_kappa}.

\vspace{3pt}\noindent\textbf{Label-consistent backdoor}.
The label-consistent backdoor attacks inject a backdoor into the victim model 
with only 
label-consistent inputs, generated by pasting a pre-selected trigger onto the vague (i.e., hard to be classified) input samples, in an attempt to increase the stealthiness against human inspection. Specifically,~\cite{label-consistent} proposed to use GAN or adversarial examples to obtain hard-to-be-classified inputs, while SIG~\cite{SIG} 
utilizes a more unconspicuous trigger (small waves).

However, we argue that label-consistent backdoors do not reduce $\alpha$ much effectively, comparing with the naive label-flipped backdoors (e.g., BadNet), because similar outputs will be exhibited by the backdoored models infected through no matter label-consistent way or label-flipped way, if the backdoor tasks of theses models are similar.
Indeed, in our experiments, we observed similar $\alpha$ between these two types of backdoors (see the ``SIG'' and the ``BadNet'' rows in Table~\ref{tb:beta_asr_alpha_kappa}). 
Specifically, the BadNet and SIG attacks accomplished similar backdoor tasks with similar triggers in the term of $L_2$-norm:
BadNet used a trigger with the $L_2$-norm of $2.17$ and SIG used a trigger with $L_2$-norm of $2.32$. Apparently, the $\alpha/\beta$ values for the SIG and the BadNet are similar at all columns. 
Moreover, like what BadNet performed, SIG also achieved minor backdoor distance reduction.

\vspace{3pt}\noindent\textbf{Latent-space backdoors}.
The latent-space backdoor attacks aim to make the backdoored model produce similar latent features for trigger-carrying inputs and benign inputs. 
At first, ~\cite{latent-backdoor} proposed to use this idea to generate student models that learns the backdoor injected in the teacher model under the transfer learning scenarios. After that, this idea was exploited by the Wasserstein Backdoor (WB)~\cite{wasserstein-backdoor} to increase the backdoor stealthiness against the latent space defenses (e.g., Activation Clustering~\cite{AC}). Specifically, WB makes the distribution of the penultimate layer's outputs (latent features) of trigger-carrying inputs be as close as possible to those of the benign inputs in the term of their sliced-Wasserstein distance~\cite{sliced-wasserstein}.

Making latent features of trigger-carrying inputs and benign inputs be close is essentially to reduce $S$ (defined in Eq.~\ref{eq:S}), the expectation of the conditional probability gain obtained by the backdoored model on trigger-carrying inputs. Further, reducing $S$ can reduce the backdoor distance when $\beta$ be as small as $\frac{1}{\kappa}$ (Corollary~\ref{coro:kappa_effect}), which also has been demonstrated by the ``WB'' row in ~\ref{tb:beta_asr_alpha_kappa} where WB attack achieved much smaller $\alpha/\beta$ value compared with other four types of backdoors.
However, this achievement comes with the cost of the low ASR, probably because the backdoor task of WB is hard to be learned, especially when $\beta$ is low. e.g., the ASR is only $7.69\%$ when $\beta=0.005$. 


\vspace{3pt}\noindent\textbf{Benign-feature backdoors}
Benign-feature backdoor attacks aim to produce backdoored models that extract the similar features to what a benign model extracted by designing the trigger using a composite of benign features, and thus increase the stealthiness of the backdoor against the backdoor detection methods that distinguish the weights of backdoored models from the weights of benign models (e.g., ABS~\cite{ABS}). A representative work in this category is the Composite Backdoor (CB)~\cite{composite-backdoor}, which mixes two benign inputs from specific classes into one, and then train the backdoored model to predict the target label on these mixed inputs. The attack method proposed in~\cite{reflection-backdoor} is another example of this kind, which constructed the trigger using the reflection features hiding in the input images.

On the other hand, including a mount of inputs composited of benign features from others into the training set may make the marginal backdoor distribution on inputs largely deviate from the benign distribution of inputs, leading this backdoor even be easy detected.
From the perspective of backdoor similarity, benign-feature backdoor attacks could be seen as a kind of indirect methods to reduce $S$ (defined in Eq.~\ref{eq:S}), however, at the meaning time, enlarge the $\kappa$ and thus have not reduced the backdoor distance eventually, which has been illustrated in the ``CB'' row of Table~\ref{tb:beta_asr_alpha_kappa}.
In addition, the benign-feature backdoor attacks also increase the difficulty to learn the backdoor task (only 83.38\% ASR achieved when $\beta=0.1$).


\vspace{3pt}\noindent\textbf{Sample-specific backdoors}
The sample-specific backdoor attacks devise the trigger specific to each input. As a result, if an input is pasted with an inappropriate trigger, it will not trigger the sample-specific backdoor. This kind of backdoors is designed to evade the trigger-inversion methods by increasing the difficulty of reconstructing the true trigger.
The Input Aware Backdoor (IAB)~\cite{IAB} is a representative work in this category, which uses a trigger generation network to produce sample-specific trigger.
The attack methods proposed in~\cite{ISSBA} and~\cite{dynamic-backdoor} also belong to this category. 

A sample-specific backdoor needs the trigger carry more information than the trigger of sample-agnostic backdoors, to make the backdoored model easy to learn the complex relations between triggers and the inputs.
Thus, the trigger of sample-specific backdoors may with large $L_2$-norm. As measured in Table~\ref{tb:beta_asr_alpha_kappa}, the $L_2$-norm of the trigger of IAB backdoor is $5.96$, exceeding two times of $2.37$, the $L_2$-norm of the trigger of BadNet backdoor. With such large trigger, sample-specific backdoors have large $\kappa$ ($\ln(\kappa)=10.72$) and further large backdoor distance ($\alpha/\beta = 1$), making them easy be detected (Table~\ref{tb:detection_acc_model_outputs}).


\vspace{-10pt}
\subsection{New Attack}
\label{subsec:our_attack}
\vspace{-5pt}

In Section~\ref{subsec:tsa_current_attacks} and Table~\ref{tb:beta_asr_alpha_kappa}, we illustrated
that the existing backdoor attacks did not effectively reduce the backdoor distance while keeping high attack success rate (ASR). Our analysis revealed that it is mainly due to three reasons: 
1) most of these attacks did not reduce $\kappa$ to a small value (e.g., in BadNet, SIG, CB and IAB);
2) the complicated triggers used by many attacks make the backdoor task hard to be learned (e.g., WB, CB and IAB);
and 3) missed to reduce $S$ (e.g., BadNet, SIG, IAB).

To address these issues, we aim to devise new attack strategies based on our task similarity analyse (TSA) framework. To further reduce $\kappa$, the adversary should use a trigger function that maps a benign input to its close neighbor in terms of not only their $L_p$-norm and but also their probabilities to be presented by the primary task. Using the trigger-carrying inputs with small $L_p$-norm from the benign inputs may not unnecessarily lead to small $\kappa$; in fact, as shown in Table~\ref{tb:beta_asr_alpha_kappa}, the trigger used by BadNet lead to the trigger-carrying inputs with smaller $L_2$-norms but higher $\kappa$ compared to those by WB. On the other hand, to reduce $S$, the adversary should enable the backdoored model to generate similar outputs (i.e., the conditional probabilities of the inputs) as the benign models on the trigger-carrying inputs, even though their class labels are different. Finally, the adversary should use a trigger function that can be easily learned; using a complex trigger function as used by WB lead to the backdoored model with low ASR when the poison rate is low (i.e., $\beta$ is small).


At a first glance, it appears impossible to reduce $\kappa$ and $S$ simultaneously, 
as a perfect benign model produces similar outputs for similar inputs and, thus, it always produces different outputs from the backdoored model on trigger-carrying inputs.
In practice, however, the benign models may not be perfect (or highly robust), which may produce very different outputs even for similar inputs, e.g., the adversarial samples~\cite{szegedy2013intriguing}.
Nevertheless, it is not sufficient to exploit the non-robustness of the benign models to consistently reduce the difference between the outputs of a benign model and a backdoored model on trigger-carrying inputs. We propose a new backdoor attack named as the TSA attack to achieve this goal 
Algorithm~\ref{alg:tsa_backdoor} illustrates the details of our approach.

\begin{algorithm}[htb]
    \caption{TSA backdoor.}
    \begin{algorithmic}[1]
        \Require{$D_{tr}$, $\mathcal{B}$, $t$, $\alpha^*$, $\beta$, $epoch_{adj}$, $\delta$, $\zeta$, $\omega$} 
        \Ensure{$A(\cdot)$, $f_b$}
        \State Train a benign model $f_P$ on $D_{tr}$ 
        \State Train $A$ with $\mathcal{L}_{A,\mathcal{B},t}(f_P, \alpha^*, \beta)$ (Eq.~\ref{eq:L_A_B_t}) and $\delta$ constraint
        \For{$\_$ \textbf{in} \text{range($epoch_{adj}$)}}
        \State Train $C$ with $\mathcal{L}_{A}(C)$ (Eq.~\ref{eq:L_A})
        \State Update $A$ with $\mathcal{L}_{C}(A, \zeta, \omega)$ (Eq.~\ref{eq:L_d})
        
        \EndFor

        \State   Train $f_b$ on $D_{tr}$ to minimize Eq.~\ref{eq:L_backdoor_train}

    \end{algorithmic}
    \label{alg:tsa_backdoor}
\end{algorithm}

First, in line-$1$, we train a benign model $f_P$ on a given training set $D_{tr}$.
In line-$2$, for the benign model $f_P$, we optimize trigger function $A$ to minimize $\mathcal{L}_{A,\mathcal{B},t}(f_P, -\alpha^*, \beta)$ such that $\|A(x)-x\|_2 \leq \delta$, where
\vspace{-5pt}
\begin{equation}
\label{eq:L_A_B_t}
\begin{array}{l@{\quad}l}
\mathcal{L}_{A,\mathcal{B},t}(f, \alpha^*, \beta) = \mathbb{E}_{(x,y) \in \mathcal{X} \times \mathcal{Y}} \mathcal{L}_{ce}(f(x),y) - \\
\beta \mathbb{E}_{(x,y) \in \mathcal{\mathcal{B}} \times \mathcal{Y}}  (\frac{1+\alpha^*}{2} \log(g(A(x))_t) + \frac{1-\alpha^*}{2} \log(g(A(x))_y)).
\end{array}
\end{equation}
\noindent Here, we assume $f=c \circ g$ as described in Section~\ref{subsec:nn_modeling}. The loss function $\mathcal{L}_{A,\mathcal{B},t}$ is the sum of the loss for the primary task of $f$ on the clean inputs and the loss for the backdoor task of $f$ on the trigger-carrying inputs weighted by $\beta$. The initial trigger function $A$ is the optimization variable, which is trained to minimize $\mathcal{L}_{A,\mathcal{B},t}(f_P, -\alpha^*, \beta)$ while satisfying the $\delta$ constraint, such that this initial trigger function maps the benign inputs to the trigger-carrying inputs in the region of the same class label but close to the classification boundary in $f_P$.
Then in line-$3$ to -$6$, we iteratively refine the trigger function and to make the backdoor task more easily learned. Specifically, we train a small classification network $C$ to distinguish the trigger-carrying inputs from their benign counterparts by minimizing the loss function:
\vspace{-8pt}
\begin{equation}
\label{eq:L_A}
\begin{array}{l@{\quad}l}
\mathcal{L}_A(C) = - \mathbb{E}_{x \in \mathcal{B}} \log(C(A(x)))  + \log(1-C(x)).
\end{array}
\vspace{-5pt}
\end{equation}
\noindent The poor performance of $C$ 
(i.e., $\mathcal{L}_A(C) > \zeta$) indicates that the current trigger function $A$ is hard to be learned, and then $A$ is refined to minimize the loss function (line-$5$):
\vspace{-8pt}
\begin{equation}
\label{eq:L_d}
\begin{array}{l@{\quad}l}
\mathcal{L}_C(A, \zeta, \omega) = \mathcal{L}_{A,\mathcal{B},t}(f_P, -\alpha^*, \beta) + \omega \max\{\mathcal{L}_A(C) - \zeta, 0 \},
\end{array}
\vspace{-2pt}
\end{equation}
\noindent which searches for the trigger function $A$ that maps the benign inputs to the trigger-carrying inputs close to the classification boundary in $f_P$ while penalizing those functions $A$ with $\mathcal{L}_A(C) > \zeta$ by incorporating the penalty term with the weight $\omega$.
Finally, in line-$7$, we use the refined trigger function $A$ to poison the training data, which is then used to train a backdoored model $f_b$ by minimizing $\mathcal{L}_{A,\mathcal{B},t}(f_b, \alpha^*, \beta)$ and a regularization term:
\begin{equation}
\label{eq:L_backdoor_train}
\begin{array}{l@{\quad}l}
\mathcal{L}_{A,\mathcal{B},t}(f, \alpha^*, \beta) + \|g_b(x_C)-g_P(x_C)\|_2 \\
\text{where\quad}
x_{C} = \underset{x \in \mathcal{X}/A(\mathcal{B})}{\argmax} \|g_b(x) - g_P(x)\|_2.
\end{array}
\vspace{-5pt}
\end{equation}
Here, the regularization term is designed to seek $f_b$ that minimizes the maximum difference between the outputs of $f_b$ and $f_P$ for the inputs without the trigger.


Empirically, we used a LeNet-5~\cite{lecun1998gradient} network as $C$, and an UNet~\cite{UNet} as the trigger function $A$. Besides, we set $epoch_{adj}=3$, $\delta=0.1$, $\zeta = 0.1$ and $\omega=0.1$. We used an Adam~\cite{adam} optimizer with the learning rate of $1e^{-3}$ to train model weights. We implemented our method based on the PyTorch framework and integrated our code into TrojanZoo~\cite{TrojanZOO}.

In our experiments, we used Algorithm~\ref{alg:tsa_backdoor} to generate backdoored models on the CIFAR10 dataset and demonstrate the results in the last row of Table~\ref{tb:beta_asr_alpha_kappa}. 
We observe that the TSA backdoor not only achieved much better ASR ($79.07\%$) than previous attacks ($\leq 50\%$) even when $\beta$ is as small as $0.005$, but also smaller backdoor distance then other attacks at the meanwhile. 
This could be ascribed to several advantages of our approach. First, the trigger function refinement (line-$3$ to -$6$) helps to derive a trigger function that is easy to be learned. Second, the TSA backdoor achieved the smallest $\kappa$ value, which reduces the lower-bound of the backdoor distance (Corollary~\ref{coro:kappa_effect}), and thus allows for the reduction of the backdoor distance by manipulating $S$ (Eq.~\ref{eq:S}). Furthermore, the TSA backdoor attack manages to control the backdoor distance through manipulating $S$ (line-$7$) for a given $\alpha$, which enables the TSA backdoor to achieve small backdoor distance on all $\beta$ values. Finally, the TSA backdoor maps the benign inputs to the trigger-carrying inputs close to both the classification boundary (controlled by $\alpha$ in Algorithm~\ref{alg:tsa_backdoor}) and the original benign inputs (controlled by $\delta$ in Algorithm~\ref{alg:tsa_backdoor}), which again effectively reduce the backdoor distance. 
}

\vspace{-10pt}
\section{TSA on Backdoor Detection}
\label{sec:detection}
\vspace{-5pt}

In the last section, we show that current backdoor attacks are designed for evading specific backdoor detection methods, and do not effectively reduce the backdoor distance that measures how close the backdoor and the primary tasks of a backdoored model are. We further proposed a new TSA attack to strategically reduce the backdoor distance and create more stealthy backdoors. In this section, we demonstrate that the backdoor distance is closely related to the backdoor \textit{detectability}: the backdoors with small backdoor distance are hard to detect, through both theoretical and experimental analysis. 
First, through theoretical analysis, we demonstrate how the backdoor distance affects the evasiveness of a backdoor from detection methods in each of the three classes (Section~\ref{subsec:attack_detection}): detection on model outputs, detection on model weights and detection on model inputs, respectively. Next, we show that in practice, by reducing the backdoor distance, the detectability of a backdoor indeed becomes lower.

\vspace{-10pt}
\subsection{Detection on Model Outputs}
\label{subsec:theory_detection_outputs}
\vspace{-5pt}

The first class of backdoor detection methods, herein referred to as the detection on model outputs, attempt to capture backdoored models by detecting the difference between the outputs of the backdoored models and the benign models on some inputs.
One kind of methods in this class are those methods based on trigger reversion algorithm, which first reconstruct triggers and then check whether a model exhibits backdoor behaviors in response to these triggers. 
In other words, the objective of these methods is to identify some inputs on which the outputs of the backdoored models and of the benign models are different.
When the difference becomes small, however, these methods often become less effective. For example, 
K-ARM~\cite{NeuralCleanse} failed to detect the TSA backdoor (Section~\ref{subsec:detection_experiments}).
Notably, the backdoored model generated by the TSA attack produce the similar outputs as the benign models on the trigger-carrying inputs, and consequently, K-ARM cannot distinguish these two types of models based on the 
reconstructed trigger candidates, even though the $L_p$-norm of those triggers injected by the TSA attack is as small as desired by K-ARM (K-ARM is designed for detecting triggers smaller than a given maximum size). MNTD~\cite{MNTD} is another method in this class, which searches for the inputs on which the backdoored models and the benign models generate the most different outputs.

Formally, we consider the goal of detections in this class is to check whether a trigger function $A(\cdot)$ can be found that maps the inputs to a region $A(\mathcal{B})$, where the outputs of the backdoored model $f_b$ is most different from the outputs of a benign model $f_P$ (e.g., $g_b(A(x))_t \gg g_P(A(x))_t$ for the target label $t$). This goal becomes hard to achieve when the backdoor distance is small, as demonstrated in the following lemma.

\begin{lemma}\label{lemma:prediction_difference}
When $A$ is fixed and $\beta=\frac{1}{\kappa}$, for a well-trained backdoored model $f_b=c_b \circ g_b$ and a well-trained benign model $f_P=c_P \circ g_P$, s.t., $g_b(x)_y \approx \Pr_{A,\mathcal{B},t}(y|x)$ and $g_P(x)_y \approx \Pr(y|x)$ for all $(x,y) \in \mathcal{X}\times \mathcal{Y}$, we have
\vspace{-8pt}
\begin{equation}
\notag
\begin{array}{l@{\quad}l}
\mathbb{E}_{\Pr_{A,\mathcal{B},t}(A(x) | x \in \mathcal{B})} g_b(A(x))_t - \mathbb{E}_{\Pr(A(x) | x \in \mathcal{B})} g_P(A(x))_t \leq \alpha \kappa
\end{array}
\vspace{-5pt}
\end{equation}
\end{lemma}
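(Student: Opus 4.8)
The plan is to rewrite the left-hand side as a single integral over $A(\mathcal{B})$ whose integrand is exactly $\Delta_{prob}$ (Eq.~\ref{eq:S}) specialized to the target label $y=t$, bound that integral by $S$ (Theorem~\ref{thr:backdoor_distance_range}), and finally identify $S$ with $\alpha\kappa$ using the hypothesis $\beta=\tfrac1\kappa$.

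First I would invoke the well-trained hypotheses $g_b(x)_y\approx\Pr_{A,\mathcal{B},t}(y\mid x)$ and $g_P(x)_y\approx\Pr(y\mid x)$ to replace the two network outputs by the corresponding conditional probabilities, and then change variables $z=A(x)$ so that the conditioning variable $A(x)$ ranges over the set $A(\mathcal{B})$. This turns the left-hand side into
\begin{equation}
\notag
\begin{array}{l}
\mathbb{E}_{\Pr_{A,\mathcal{B},t}(A(x)\mid x\in\mathcal{B})}g_b(A(x))_t-\mathbb{E}_{\Pr(A(x)\mid x\in\mathcal{B})}g_P(A(x))_t \\
\;=\displaystyle\int_{z\in A(\mathcal{B})}\Big(\tfrac{\Pr_{\mathcal{D}_{A,\mathcal{B},t}}(z)}{\Pr_{\mathcal{D}_{A,\mathcal{B},t}}(A(\mathcal{B}))}\Pr_{\mathcal{D}_{A,\mathcal{B},t}}(t\mid z)-\tfrac{\Pr(z)}{\Pr(A(\mathcal{B}))}\Pr_{\mathcal{D}_P}(t\mid z)\Big)dz,
\end{array}
\end{equation}
and the integrand is precisely $\Delta_{prob}$ evaluated at $y=t$. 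Applying $a\le\max(a,0)$ pointwise, and then adding the remaining nonnegative terms $\max\{\Delta_{prob}(z,y),0\}$ for the labels $y\neq t$, this integral is at most $\int_{(z,y)\in A(\mathcal{B})\times\mathcal{Y}}\max\{\Delta_{prob},0\}=S$.

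It then remains to show $S=\alpha\kappa$. Since $\Pr(\mathcal{B})=\kappa\Pr(A(\mathcal{B}))$, substituting $\beta=\tfrac1\kappa$ into $Z_{A,\mathcal{B},t}=1-\Pr(A(\mathcal{B}))+\beta\Pr(\mathcal{B})$ gives $Z_{A,\mathcal{B},t}=1$, so the coefficient $\tfrac{\beta}{Z_{A,\mathcal{B},t}}$ appearing in $\widetilde{\Pr_{gain}}$ (Lemma~\ref{lemma:backdoor_A_beta}, Eq.~\ref{eq:bain}) equals $\tfrac1\kappa$, hence $\widetilde{\Pr_{gain}}=\tfrac1\kappa\Delta_{prob}$. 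Combining Definition~\ref{def:alpha_backdoor} with Lemma~\ref{lemma:backdoor_A_beta} yields $\alpha=\int\max(\widetilde{\Pr_{gain}},0)=\tfrac1\kappa\int\max(\Delta_{prob},0)=\tfrac{S}{\kappa}$ (the equality case $\beta=\tfrac1\kappa$ of Corollary~\ref{coro:beta_effect} confirms the same identity), so $S=\alpha\kappa$ and the chain of inequalities closes.

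The step I expect to be the main obstacle is the careful bookkeeping around the change of variable $z=A(x)$ when $A$ is not injective, together with the discarding of the label contributions $y\neq t$: the definition of $\Pr_{\mathcal{D}_{A,\mathcal{B},t}}$ on $A(\mathcal{B})$ (Eq.~\ref{eq:backdoor_distribution}) has to be carried through so that the conditional $\Pr_{\mathcal{D}_{A,\mathcal{B},t}}(t\mid z)$ is the quantity the backdoored model actually approximates, and the $\approx$ in the well-trained hypotheses must be read as equality in the idealized regime in which the bound is stated. Everything else is a routine rearrangement of integrals and the elementary inequality $a\le\max(a,0)$.
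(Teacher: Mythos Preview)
Your proposal is correct and is essentially the argument the paper has in mind: the one-line proof in the paper (``Using Corollary~\ref{coro:beta_effect}'') amounts to exactly your computation that at $\beta=\tfrac1\kappa$ one has $Z_{A,\mathcal{B},t}=1$, hence $\widetilde{\Pr_{gain}}=\tfrac1\kappa\Delta_{prob}$ and $\alpha=S/\kappa$, after which the left-hand side is bounded by $S=\alpha\kappa$ via the pointwise inequality $\Delta_{prob}(z,t)\le\max\{\Delta_{prob}(z,t),0\}\le\sum_y\max\{\Delta_{prob}(z,y),0\}$. Your derivation via Lemma~\ref{lemma:backdoor_A_beta} of the identity $\alpha=S/\kappa$ is in fact slightly cleaner than appealing to the lower bound in Corollary~\ref{coro:beta_effect}, and your caveat about treating the $\approx$ as equalities matches the paper's informal level of rigor throughout.
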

\begin{proof}
Using Corollary~\ref{coro:beta_effect}, one can derive the desired.
\vspace{-8pt}
\end{proof}

Specifically, Lemma~\ref{lemma:prediction_difference} demonstrates that, when the adversary has chosen a trigger function $A$ and set $\beta=\frac{1}{\kappa}$, which minimizes the backdoor distance in reasonable settings, $\alpha$ is proportional to the upper bound of the difference between the expected outputs of a backdoored model and that of a benign model about the probability of a trigger-carrying input in the target class. 
In other words, when the backdoor distance is small, the difference of the outputs between a backdoored model and a benign model becomes small as well. 
Considering the randomness involved in the training process, when this difference is small, it is hard to distinguish backdoored models from benign models. Therefore, these approaches of detection on model outputs often suffer from false positives, and thus achieve low detection accuracy on the backdoors with small backdoor distance.

\vspace{-10pt}
\subsection{Detection on Model Weights}
\label{subsec:theory_detection_weights}
\vspace{-5pt}

The second class of detection approaches, herein referred to as the detection on model weights, attempt to detect a backdoored model through distinguishing its model weights from those of benign models.
Formally, we consider the goal of detection methods in this class as to verify whether the minimum distance between the weights of a candidate backdoored model $\omega_b$ and the weights of a benign model in a set $\{\omega_P\}$ exceeds a pre-determined threshold $\theta_{\omega}$, i.e., whether $\min_{\omega \in \{\omega_P\}}\|\omega-\omega_b\|_2 > \theta_\omega$. 



To study the difference between the weights of two models, we formulate it as the \textit{weight evolution problem} in continual learning~\cite{thrun1995lifelong}. 
Specifically, we consider two tasks, $\mathcal{T}_P$ and $\mathcal{T}_{A,\mathcal{B},t}$, for which the benign model
$f_P=f(\cdot: \omega_P)$ 
with the weights $\omega_P$
and the backdoored model $f_b=f(\cdot: \omega_b)$ with the weights $\omega_b$ learn to accomplish, respectively. 
We then analyze the change of $\omega_P \to \omega_b$ through the continual learning process $\mathcal{T}_P \to \mathcal{T}_{A,\mathcal{B},t}$.
Based on the Neural Tangent Kernel (NTK)~\cite{NTK} theory, existing work~\cite{lee2019wide} has showed that,
$f_b(x) = f_P(x)+<\phi(x), \omega_b-\omega_P>$ where $\phi(x)$ is the kernel function and $\phi(x) = \bigtriangledown_{\omega_0}f(x; \omega_0)$, which is dependent only on some weights $\omega_0$.
Furthermore, recent research~\cite{DBLP:conf/aistats/DoanBMRA21} has shown that $\|\delta^{\mathcal{T}_P \to \mathcal{T}_{A,\mathcal{B},t}}(X)\|_2^2 = \|\phi(X)(\omega_b-\omega_P)\|_2^2$,
where $\delta^{\mathcal{T}_P \to \mathcal{T}_{A,\mathcal{B},t}}(X)$ is the so-called \textit{task drift} from $\mathcal{T}_P$ to $\mathcal{T}_{A,\mathcal{B},t}$, $\|\delta^{\mathcal{T}_P \to \mathcal{T}_{A,\mathcal{B},t}}(X)\|_2^2 :=  \underset{x \in X}{\Sigma} \|f_b(x)-f_P(x)\|_2^2$. 
Based on these results, we connect the distance between $\omega_P$ and $\omega_b$ to the backdoor distance through the following lemma. 


\begin{lemma}~\label{lemma:weight_distance}
When $A$ is fixed and $\beta=\frac{1}{\kappa}$, for a well-trained backdoored model $f_b=f(\cdot : \omega_b)$, and a well-trained benign model $f_P=f(\cdot : \omega_P)$, we have
\vspace{-5pt}
\begin{equation}
\notag
\begin{array}{r@{\quad}l}
 \frac{\kappa \sqrt{m L}}{\|\phi(X)\|_2 } \alpha \leq \|\omega_b-\omega_P\|_2 .
\end{array}
\vspace{-5pt}
\end{equation}
where $X=\{x_1,x_2,...,x_m\}$ is a set of $m$ inputs in $L$ classes and $\phi(\cdot)$ is the kernel function. 
\end{lemma}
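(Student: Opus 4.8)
The plan is to convert the two cited NTK facts into a lower bound on $\|\omega_b-\omega_P\|_2$. First I would combine the task-drift identity $\|\delta^{\mathcal{T}_P\to\mathcal{T}_{A,\mathcal{B},t}}(X)\|_2^2 = \|\phi(X)(\omega_b-\omega_P)\|_2^2$ with submultiplicativity of the spectral norm, $\|\phi(X)(\omega_b-\omega_P)\|_2 \le \|\phi(X)\|_2\,\|\omega_b-\omega_P\|_2$, to obtain $\|\omega_b-\omega_P\|_2 \ge \|\delta^{\mathcal{T}_P\to\mathcal{T}_{A,\mathcal{B},t}}(X)\|_2 / \|\phi(X)\|_2$. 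It then remains to lower bound the task drift on a well-chosen input set $X$ in terms of $\alpha$.

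For that, I would take $X$ to be $m$ trigger-carrying inputs $A(x_1),\dots,A(x_m)\in A(\mathcal{B})$ --- the region on which $f_b$ and $f_P$ genuinely disagree --- and set $d_i := g_b(A(x_i)) - g_P(A(x_i))\in\mathbb{R}^L$ (identifying the vector model output with its probability part $g$, as in the NTK linearization above), so that $\|\delta^{\mathcal{T}_P\to\mathcal{T}_{A,\mathcal{B},t}}(X)\|_2^2 = \sum_{i=1}^m\|d_i\|_2^2$. Peeling off the negative coordinates, $\|d_i\|_2 \ge \|(d_i)_+\|_2$ where $(\cdot)_+$ is the positive part, converting the $\ell_2$ norm to the $\ell_1$ norm on $\mathbb{R}^L$ via $\|(d_i)_+\|_2 \ge \|(d_i)_+\|_1/\sqrt{L}$, and applying the power-mean inequality over the $m$ inputs, I would bring out the quantity $\sum_{i=1}^m\sum_{y=0}^{L-1}\max\{g_b(A(x_i))_y - g_P(A(x_i))_y,\,0\}$. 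Evaluating the closed form of $\alpha$ in Eq.~\ref{eq:alpha} at $\beta=\tfrac1\kappa$ --- for which $Z_{A,\mathcal{B},t} = 1-\Pr(A(\mathcal{B}))+\tfrac1\kappa\Pr(\mathcal{B}) = 1$, since $\Pr(\mathcal{B})=\kappa\Pr(A(\mathcal{B}))$ --- identifies exactly this quantity with $\kappa\alpha$. Chaining the estimates lower bounds $\|\delta^{\mathcal{T}_P\to\mathcal{T}_{A,\mathcal{B},t}}(X)\|_2$ by $\kappa\alpha$ up to the factor produced by the norm equivalences in $m$ and $L$, and substituting into the spectral-norm inequality yields the stated bound.

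The hard part will be the second step, because in the exact expression for $\alpha$ --- through $\widetilde{\Pr_{gain}}$ in Lemma~\ref{lemma:backdoor_A_beta}, equivalently $\Delta_{prob}$ in Theorem~\ref{thr:backdoor_distance_range} --- the two input marginals $\Pr_{\mathcal{D}_{A,\mathcal{B},t}}(\cdot\mid A(\mathcal{B}))$ and $\Pr(\cdot\mid A(\mathcal{B}))$ sit inside the positive part, so a clean per-input output gap $g_b-g_P$ drops out only once these marginals coincide. The choice $\beta=\tfrac1\kappa$, i.e.\ $Z_{A,\mathcal{B},t}=1$ --- precisely the regime where Corollary~\ref{coro:beta_effect} gives $\alpha = S/\kappa$ --- is what forces that, and it is the reason the hypothesis $\beta=\tfrac1\kappa$ appears. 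Once it is in place, the remaining work is routine: pushing the $\max$ through coordinatewise and tracking the $\sqrt{m}$ and $\sqrt{L}$ factors through the conversions between $\ell_1$ and $\ell_2$ norms; everything else is the one-line spectral-norm inequality together with the cited NTK task-drift identity.
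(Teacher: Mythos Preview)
Your approach is essentially the paper's own: start from the $\alpha$ formula at $\beta=\tfrac1\kappa$ (so $Z_{A,\mathcal{B},t}=1$) to extract $\sum_{i,y}\max\{g_b(A(x_i))_y-g_P(A(x_i))_y,0\}$, pass from positive parts to absolute values, convert $\ell_1\to\ell_2$ over $\mathcal{Y}$ and apply the power-mean/Cauchy--Schwarz step over the $m$ inputs, then invoke the NTK task-drift identity and submultiplicativity of $\|\phi(X)\|_2$. The paper's chain of inequalities is line-for-line the same, so the only thing to watch is the bookkeeping of the $m$ and $L$ factors (the paper's proof starts from $\alpha=\tfrac{\beta}{mL}\sum_{x,y}\max(\cdot,0)$, i.e.\ an averaged version of Eq.~\ref{eq:alpha}, which is what makes $\sqrt{mL}$ land in the numerator).
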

\vspace{-5pt}
\begin{proof}
See Appendix~\ref{proof:weight_distance}.
\vspace{-5pt}
\end{proof}

Lemma~\ref{lemma:weight_distance} demonstrates that, when the adversary has chosen a trigger function $A$ and set $\beta=\frac{1}{\kappa}$, which minimizes the backdoor distance in reasonable settings, $\alpha$ is proportional to the lower bound of the distance between the weights $\omega_b$ and $\omega_P$ in term of $L_2$-norm.
In other words, to ensure the weights of the backdoor models $\omega_b$ is close to the weights of the benign models $\omega_P$, which lead to the backdoors more difficult to be detected by the methods of detection on model weights, the adversary should design a backdoor with small backdoor distance. 



\vspace{-10pt}
\subsection{Detection on Model Inputs}
\label{subsec:theory_detection_inputs}
\vspace{-5pt}

The third class of the detection methods, the detection on model inputs, attempt to identify a backdoored model through detecting the difference between inputs on that the backdoored model and benign models generate similar outputs.
An prominent example of this category is SCAn~\cite{SCAn}, which checks whether the inputs predicted by a backdoored model as belonging to the same class can be well separated into two groups (modeled as two distinct distributions), while the inputs predicted by a benign model as belonging to the same class come from a single group (modeled as a single distribution). Similar idea was also exploited by AC~\cite{AC}. 

We formulate this class of methods as a hypothesis test that evaluates whether the two distributions, characterized by two sets $X_P$ and $X_b$, respectively, on which the benign model $f_P$ and the backdoored model $f_b$ share the same prediction, are significantly different, where $X_P = \{f'_P(x_i): i=1,2,...,n_P\}$ and $X_b=\{f'_b(x_i): i=1,2,...,n_b\}$, $f'_P(x)$ and $f'_b(x)$ are the intermediate results of $f_P(x)$ and $f_b(x)$, respectively, for an input $x$.
For instance, $f'_b(x)$ could be the $j$-th layer's outputs of $f_b$ in a multi-layer neural network. 

Without loss of the generality, we adopt a two-sample Hotelling’s T-square test~\cite{hotelling1992generalization} for this hypothesis test, which tests whether the means of two distributions are significantly different. Here, we consider the test statistic $T^2$, which is calculated from the samples drawn from the two distributions, and is then compared with a pre-selected threshold according to a desirable confidence. The smaller $T^2$, the less probable these two distributions are different in terms of their means. The following lemma demonstrates how the test statistic $T^2$ has an upper-bound related to the backdoor distance.
\begin{lemma}~\label{lemma:normal_dhw1}
When $A$ is fixed and $\beta=\frac{1}{\kappa}$, for a well-trained backdoored model $f_b$ and a well-trained benign model $f_P$, 
if $X_b \sim \mathcal{N}(m_b,\Sigma)$ and $X_P \sim \mathcal{N}(m_P,\Sigma)$ and $n_P$ and $n_b$ are sufficiently large, we have
\begin{equation}
\notag
\begin{array}{r@{\quad}l}
 T^2 \leq \lambda_{max} \frac{n_P n_b}{n_P+n_b} \alpha^2.
\end{array}
\vspace{-5pt}
\end{equation}
where $\lambda_{max}$ is the largest eigenvalue of $\Sigma^{-1}$.
\vspace{-5pt}
\end{lemma}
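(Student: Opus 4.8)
The plan is to remove the sampling randomness from $T^{2}$, reduce the resulting deterministic quadratic form to the squared Euclidean distance between the two population means via a Rayleigh-quotient (eigenvalue) bound, and then identify that mean gap with the backdoor parameter $\alpha$.

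First, with the common known covariance $\Sigma$, the two-sample Hotelling statistic is
\begin{equation}
\notag
T^{2}=\frac{n_{P}n_{b}}{n_{P}+n_{b}}\,(\bar{X}_{b}-\bar{X}_{P})^{\top}\Sigma^{-1}(\bar{X}_{b}-\bar{X}_{P}),
\end{equation}
where $\bar{X}_{b},\bar{X}_{P}$ are the empirical means of $X_{b},X_{P}$. Since $X_{b}\sim\mathcal{N}(m_{b},\Sigma)$ and $X_{P}\sim\mathcal{N}(m_{P},\Sigma)$, the law of large numbers gives $\bar{X}_{b}\to m_{b}$ and $\bar{X}_{P}\to m_{P}$, so for $n_{P},n_{b}$ sufficiently large it suffices to bound $(m_{b}-m_{P})^{\top}\Sigma^{-1}(m_{b}-m_{P})$. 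I would then apply the inequality $v^{\top}\Sigma^{-1}v\le\lambda_{max}\|v\|_{2}^{2}$ with $v=m_{b}-m_{P}$, where $\lambda_{max}$ is the largest eigenvalue of $\Sigma^{-1}$; this reduces everything to proving $\|m_{b}-m_{P}\|_{2}\le\alpha$.

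For this last --- and central --- step I would argue in the spirit of Lemmas~\ref{lemma:prediction_difference} and~\ref{lemma:weight_distance}. Because $f_{b}$ and $f_{P}$ are both well trained, their intermediate outputs agree on inputs that do not carry the trigger, so the predicted-class-$t$ distributions behind $m_{b}$ and $m_{P}$ can differ only through the trigger-carrying inputs in $A(\mathcal{B})$. Each coordinate of the (normalized) intermediate representation is a map $\mathcal{X}\times\mathcal{Y}\mapsto[0,1]$, i.e., an admissible $h$ in Definition~\ref{def:task_distance}, so by the definition of $d_{\mathcal{H}-W1}$ and Proposition~\ref{prop:h-w1} each coordinate of $m_{b}-m_{P}$, after conditioning on $\{x\in A(\mathcal{B})\}$, is controlled by the conditional probability gain of $\mathcal{D}_{A,\mathcal{B},t}$ over $\mathcal{D}_{P}$ on $A(\mathcal{B})$. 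Setting $\beta=\tfrac{1}{\kappa}$ (for which $Z_{A,\mathcal{B},t}=1$), Lemma~\ref{lemma:backdoor_A_beta} and Corollary~\ref{coro:beta_effect}, together with the defining identity $d_{\mathcal{H}-W1}(\mathcal{D}_{P},\mathcal{D}_{A,\mathcal{B},t})=\alpha\Pr(\mathcal{B})$ of an $\alpha$-backdoor (Definition~\ref{def:alpha_backdoor}), pin that conditional gain at exactly $\alpha$, the $\Pr(\mathcal{B})$ factor being cancelled by the conditioning. Combining the three steps yields $T^{2}\le\lambda_{max}\,\frac{n_{P}n_{b}}{n_{P}+n_{b}}\,\alpha^{2}$.

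The first two steps are routine; the real obstacle is the mean-gap bound $\|m_{b}-m_{P}\|_{2}\le\alpha$. The delicate points are (i) handling the conditioning on ``$f_{b}$ and $f_{P}$ share the prediction'' and the approximations $g_{b}(x)_{y}\approx\Pr_{\mathcal{D}_{A,\mathcal{B},t}}(y|x)$, $g_{P}(x)_{y}\approx\Pr_{\mathcal{D}_{P}}(y|x)$; and (ii) reconciling the per-coordinate $d_{\mathcal{H}-W1}$ bound with the $\ell_{2}$ aggregation so that the constant comes out as $\alpha$ rather than $\sqrt{L}\,\alpha$ or $\alpha\kappa$. This rests on the same normalization bookkeeping ($\beta=1/\kappa$, conditioning on $A(\mathcal{B})$ rather than on $\mathcal{B}$) that drives the other lemmas of this section, together with the observation that the mean difference is essentially concentrated on the target coordinate.
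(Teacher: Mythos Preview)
Your first two steps --- replacing sample means by population means and then applying the Rayleigh-quotient bound $v^\top\Sigma^{-1}v\le\lambda_{max}\|v\|_2^2$ --- coincide with the paper. The divergence is entirely in how you bound $\|m_b-m_P\|_2$ by $\alpha$.

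The paper does \emph{not} argue coordinate-wise. Instead it exploits the Gaussian structure directly: for two Gaussians with the same covariance one has the exact identity $\|m_P-m_b\|_2=d_{W1}\bigl(\mathcal{N}(m_P,\Sigma),\mathcal{N}(m_b,\Sigma)\bigr)$, proved by sandwiching $d_{W1}$ between $\|m_P-m_b\|_2$ (Jensen, lower bound) and $d_{W2}=\|m_P-m_b\|_2$ (closed-form $W_2$ for Gaussians with equal $\Sigma$, upper bound). Proposition~\ref{prop:h-w1} then gives $d_{W1}\le d_{\mathcal{H}-W1}$, and finally any $[0,1]$-valued discriminator between the feature laws $X_P,X_b$ becomes, after composing with $f'$, an admissible $h$ in Definition~\ref{def:task_distance}, so $d_{\mathcal{H}-W1}(\mathcal{N}_P,\mathcal{N}_b)\le d_{\mathcal{H}-W1}(\mathcal{D}_P,\mathcal{D}_{A,\mathcal{B},t})$, which the paper identifies with $\alpha$.

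Your coordinate-wise route has the gap you yourself flag in point~(ii) and do not actually close. Bounding each coordinate of $m_b-m_P$ by $\alpha$ only yields $\|m_b-m_P\|_2\le\sqrt{d}\,\alpha$, where $d$ is the feature dimension. The escape hatch you propose --- that the mean difference is ``essentially concentrated on the target coordinate'' --- is unjustified here: $X_P,X_b$ are intermediate (e.g., penultimate-layer) representations, not logits, so there is no distinguished target coordinate, and nothing in the hypotheses forces concentration on one direction. The paper's Wasserstein detour sidesteps this aggregation problem entirely because $\|m_P-m_b\|_2$ is already the $W_1$ distance between the two Gaussian laws as wholes; no per-coordinate bookkeeping is needed.
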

\begin{proof}
See Appendix~\ref{proof:normal_dhw1}.
\vspace{-8pt}
\end{proof}

Lemma~\ref{lemma:normal_dhw1} demonstrates that, when the adversary has chosen a trigger function $A$ and set $\beta=\frac{1}{\kappa}$, which minimizes the backdoor distance in reasonable settings, $\alpha^2$ is proportional to the upper bound of the test statistic $T^2$. This implies, when the backdoor distance is small, it is difficult to distinguish the distribution of $X_b$ from the distribution of $X_P$, resulting in the poor accuracy of detecting backdoor on model inputs.

\vspace{-10pt}
\subsection{Experiments: Detection vs. Attack}
\label{subsec:detection_experiments}
\vspace{-5pt}

To investigate the performance of these three kinds of detection methods against backdoor attacks, we evaluated 6 backdoor detection methods: K-ARM~\cite{K-ARM}, MNTD~\cite{MNTD}, ABS~\cite{ABS}, TND~\cite{TND}, SCAn~\cite{SCAn} and AC~\cite{AC}, to defend the backdoors injected by 6 backdoor attack methods (Section~\ref{subsec:tsa_current_attacks}) on 4 datasets: CIFAR10~\cite{cifar10}, GTSRB~\cite{gtsrb}, ImageNet~\cite{imagenet} and VGGFace2~\cite{vggface2}.
On each dataset, we generated 200 benign models as the control group. 
For each backdoor attack, we used it to generate 200 backdoored models on every dataset. 
Specifically, in each of the backdoored models, a backdoor was injected with a randomly chosen source class and a randomly chosen target class (different from the source class). We fixed the number of poisoning samples to be equal to $10\%$ of the total number of training samples in the source class, i.e., $\beta=0.1$. Under these settings, we trained the backdoored model  that achieved $>80\%$ ASR for all 6 attack methods on all 4 datasets.
In total, we generated 800 benign models and 4800 backdoored models on all 4 datasets.
To evaluate a backdoor detection method on each dataset, we ran it to distinguish 200 benign models (trained on this dataset) from 200 backdoored models generated by each attack method. Overall, we performed a total of 144 ($=4\times6\times6)$ evaluations on all 4 datasets for all 6 detections against all 6 attacks.
To train a model (benign or backdoored), we used the model structure randomly selected from these four: ResNet~\cite{resnet}, VGG16~\cite{vgg16}, ShuffleNet~\cite{shufflenet} and googlenet~\cite{googlenet}. We used the Adam~\cite{adam} optimizer with the learning rate of $1e^{-2}$ until the model converges (e.g., $\sim 50$ epochs on CIFAR10).


\vspace{3pt}\noindent\textbf{Detection on model outputs}.
We tested two representative detection methods in this category: K-ARM and MNTD. 
K-ARM is one of the winning solutions in TrojAI Competition~\cite{trojai}. It could be viewed as an enhanced version of Neural Cleanse (NC). It cooperates with a
reinforcement learning algorithm to efficiently explore many trigger candidates with different norm and different shape using the trigger reversion algorithm (as used in NC), and thus increases the chance to identify the true trigger. As mentioned by the authors of K-ARM~\cite{K-ARM}, it significantly outperforms NC. Hence, here, we evaluated K-ARM instead of NC.  
MNTD is another representative method in this category. It has been taken as the standard detection method in the Trojan Detection Challenge (TDC)~\cite{TDC}, a NeurIPS 2022 competition. Specifically, MNTD detects the backdoored models by finding some inputs on which the outputs of the backdoored model are most different from the outputs of the benign models. 

\begin{table}[tbh]
  \centering
  \caption{The accuracies (\%) of the detection-on-model-outputs methods. $C$-rows stand for results on CIFAR10, $G$-rows stand for results on GTSRB, $I$-rows stand for results on ImageNet and $V$-rows stand for results on VGGFace2.}
  \begin{adjustbox}{width=0.40\textwidth}
\begin{tabular}{|cc|c|c|c|c|c|c|}
\hline
\multicolumn{2}{|c|}{}                           & BadNet & SIG   & WB    & CB    & IAB   & TSA  \\ \hline
\multicolumn{1}{|c|}{\multirow{4}{*}{K-ARM}} & C & 100  & 61.75 & 79.50 & 57.25 & 80.25 & 59.25 \\ \cline{2-8} 
\multicolumn{1}{|c|}{}                       & G & 100  & 63.25 & 82.25 & 60.50 & 79.75 & 62.50 \\ \cline{2-8} 
\multicolumn{1}{|c|}{}                       & I & 95.50  & 56.50 & 75.00 & 53.75 & 75.00 & 57.25 \\ \cline{2-8} 
\multicolumn{1}{|c|}{}                       & V & 96.25  & 59.25 & 76.50 & 67.25 & 80.75 & 64.75 \\ \hline
\multicolumn{1}{|c|}{\multirow{4}{*}{MNTD}}  & C & 100    & 99.75 & 86.00 & 100   & 98.25 & 51.25 \\ \cline{2-8} 
\multicolumn{1}{|c|}{}                       & G & 100    & 99.25 & 85.50 & 99.50 & 99.50 & 52.50 \\ \cline{2-8} 
\multicolumn{1}{|c|}{}                       & I & 97.75  & 98.75 & 84.25 & 97.25 & 97.25 & 53.25 \\ \cline{2-8} 
\multicolumn{1}{|c|}{}                       & V & 98.75  & 99.00 & 85.25 & 98.25 & 98.75 & 54.75 \\ \hline
\end{tabular}

\end{adjustbox}
	\label{tb:detection_acc_model_outputs}
\vspace{-5pt}
\end{table}

Table~\ref{tb:detection_acc_model_outputs} illustrates that K-ARM works poorly on SIG, CB and TSA, three backdoor attacks using widespread triggers that may affect the whole inputs (even with small $L_2$-norm).
MNTD performs well on the backdoor attacks except on TSA, indicating existing attack methods somehow make the outputs of backdoored models are distant from the outputs of the benign models on many inputs. On the other hand, the outputs of the backdoored model generated by TSA are close to the outputs of benign models. This also helps TSA perform well on TDC competition.\footnote{On TDC, the TSA attack reduced the detection AUC of MNTD to 44.37\%, indicating it is hard for MNTD to distinguish the TSA backdoored models from the benign ones. Until submission of this paper, our method is ranked \#1 in the evasive trojans track of TDC.}

\vspace{3pt}\noindent\textbf{Detection on model weights}.
We tested two representative detection methods in this category: ABS and TND. 
ABS proposed that when the backdoor is injected into a model, it also introduces a short cut, through which a trigger-carrying input will be easily predicted as belonging to the target class by the backdoored model. Specifically, this short cut is characterized by some neurons that are intensively activated by the trigger on the input, and then generate dramatic impact on the prediction. To detect this short cut, ABS first labels those neurons whose activation results in abnormally large change in the predicted label of the backdoored model for some inputs, and then, for each labeled neuron, seeks a trigger that can activate this neuron abnormally and consistently change the predicted label for a range of inputs. ABS alarms for a backdoored model, if such neuron coupled with a trigger is found.
TND explores another phenomenon related to the short cut in a model. In particular, TND found that the untargeted universal perturbation is similar to the targeted per-input perturbation in backdoored models, while they are different in benign models. Hence, TND alarms for a backdoored model if such similarity is significant.

\begin{table}[tbh]
\vspace{-5pt}
  \centering
  \caption{The accuracy (\%) of the detection-on-model-weights methods. $C$-rows stand for results on CIFAR10, $G$-rows stand for results on GTSRB, $I$-rows stand for results on ImageNet and $V$-rows stand for results on VGGFace2.}
  \begin{adjustbox}{width=0.40\textwidth}
\begin{tabular}{|cc|c|c|c|c|c|c|}
\hline
\multicolumn{2}{|c|}{}                         & BadNet & SIG   & WB    & CB    & IAB   & TSA  \\ \hline
\multicolumn{1}{|c|}{\multirow{4}{*}{ABS}} & C & 100    & 95.50 & 59.75 & 62.75 & 58.75 & 51.00 \\ \cline{2-8} 
\multicolumn{1}{|c|}{}                     & G & 100    & 94.50 & 61.00 & 61.75 & 59.00 & 49.25 \\ \cline{2-8} 
\multicolumn{1}{|c|}{}                     & I & 94.75  & 89.75 & 56.25 & 58.00 & 55.50 & 51.75 \\ \cline{2-8} 
\multicolumn{1}{|c|}{}                     & V & 98.25  & 91.25 & 56.25 & 59.25 & 54.75 & 52.25 \\ \hline
\multicolumn{1}{|c|}{\multirow{4}{*}{TND}} & C & 100    & 99.75 & 67.00 & 73.75 & 53.00 & 48.75 \\ \cline{2-8} 
\multicolumn{1}{|c|}{}                     & G & 100    & 99.25 & 64.25 & 72.50 & 52.50 & 50.25 \\ \cline{2-8} 
\multicolumn{1}{|c|}{}                     & I & 94.50  & 93.25 & 62.00 & 69.25 & 49.75 & 51.50 \\ \cline{2-8} 
\multicolumn{1}{|c|}{}                     & V & 96.00  & 92.75 & 63.50 & 71.00 & 50.25 & 51.75 \\ \hline
\end{tabular}

\end{adjustbox}
	\label{tb:detection_acc_model_weights}
\vspace{-5pt}
\end{table}

Table~\ref{tb:detection_acc_model_weights} illustrates that ABS suffers from difficulties when defending against WB, CB, IAB and TSA, perhaps because these attacks influence many neurons in the victim models and thus no single neuron changes the predicted label by itself. 
On the other hand, TND performs better than ABS when defending against WB and CB, indicating the similarity between the untargeted universal perturbation and targeted per-input perturbations is a more general signal of the short cut comparing to a single dominant neuron exploited by ABS.

\vspace{3pt}\noindent\textbf{Detection on model inputs}.
We tested two representative detection methods in this category: SCAn and AC.
SCAn detects the backdoor by checking whether the representations (outputs of the penultimate layer) of inputs in a single class are from a mixture of two distributions, with the help of the so-called global variance matrix that captures how the representations of the inputs in different classes varies. SCAn first computes the global variance matrix on a clean dataset, then computes a score for each class, and finally checks whether any class has a abnormally high score. If such class exists, SCAn will report this model as the backdoored model and this abnormal class as the target class. 
Similarly, AC detects the backdoor by checking whether the representations of one class can be well separated into two groups. Specifically, for each class, AC first embeds the high-dimensional representations into 10-dimensional vectors and then computes the Sihouette score~\cite{silhouettes} to measures how well the 2-means algorithm can separate these vectors.

\begin{table}[tbh]
\vspace{-5pt}
  \centering
  \caption{The accuracies (\%) of the detection-on-model-outputs methods. $C$-rows stand for results on CIFAR10, $G$-rows stand for results on GTSRB, $I$-rows stand for results on ImageNet and $V$-rows stand for results on VGGFace2.}
  \begin{adjustbox}{width=0.40\textwidth}
\begin{tabular}{|cc|c|c|c|c|c|c|}
\hline
\multicolumn{2}{|c|}{}                          & BadNet & SIG   & WB    & CB    & IAB   & TSA  \\ \hline
\multicolumn{1}{|c|}{\multirow{4}{*}{SCAn}} & C & 100    & 100   & 70.25 & 95.25 & 74.25 & 63.25 \\ \cline{2-8} 
\multicolumn{1}{|c|}{}                      & G & 100    & 100   & 69.00 & 97.00 & 74.75 & 61.75 \\ \cline{2-8} 
\multicolumn{1}{|c|}{}                      & I & 94.25  & 91.25 & 62.75 & 88.00 & 67.75 & 59.00 \\ \cline{2-8} 
\multicolumn{1}{|c|}{}                      & V & 95.75  & 92.00 & 66.25 & 89.50 & 69.50 & 60.25 \\ \hline
\multicolumn{1}{|c|}{\multirow{4}{*}{AC}}   & C & 98.00  & 99.00 & 59.75 & 90.00 & 65.75 & 55.25 \\ \cline{2-8} 
\multicolumn{1}{|c|}{}                      & G & 98.50  & 99.25 & 59.25 & 91.50 & 66.50 & 55.25 \\ \cline{2-8} 
\multicolumn{1}{|c|}{}                      & I & 91.75  & 95.50 & 55.75 & 86.25 & 59.75 & 52.75 \\ \cline{2-8} 
\multicolumn{1}{|c|}{}                      & V & 92.25  & 96.25 & 57.25 & 88.00 & 62.50 & 53.50 \\ \hline
\end{tabular}

\end{adjustbox}
	\label{tb:detection_acc_model_inputs}
\vspace{-5pt}
\end{table}

Table~\ref{tb:detection_acc_model_inputs} demonstrates that SCAn achieved better accuracies against all 6 attacks compared to AC. However, SCAn and AC both performed poorly on WB, IAB and TSA, the three attacks that attempt to mix the representations of trigger-carrying inputs with those of benign inputs.

Taking all these results 
together, we concluded that an attack would exhibit different evasiveness against different detection methods. Even for TSA, although the detection accuracy by 4 out of 6 detections are as low as about $52\%$, two other methods (K-ARM and SCAn) retain about $60\%$ accuracy against it. 
This illustrates the demand of a general measurement to depict how well a backdoor attack can evade different detection methods (including novel methods that are not known by the adversary), as in practice, the defender may adopt a cocktail approach by combining different methods to detect backdoors. 
We believe the backdoor distance is a promising candidate for such a measurement as it accurately showed the low detection accuracy on the TSA and WB backdoored models by all detection methods, with their low backdoor distances (Table~\ref{tb:beta_asr_alpha_kappa}) compered to the other 4 attack methods. Below, we aim to further illustrate their connection.

\vspace{-10pt}
\subsection{Experiments: Detectability vs. Similarity}
\label{subsec:detectability_experiments}
\vspace{-5pt}
Our experiments in Section~\ref{subsec:detection_experiments} indicate that the backdoor distance is a potentially good measurement of the backdoor detectability (as defined in below). 
Specifically, those backdoor attacks obtaining small backdoor distance are hard to be detected, which is also inline with what has been demonstrated in our theory analysis (Section~\ref{subsec:theory_detection_outputs}~\ref{subsec:theory_detection_weights} and ~\ref{subsec:theory_detection_inputs}). 
In this section, we report the experimental results showing the backdoors with small backdoor distance indeed have low detectability, and thus the backdoor distance is indeed a good indicator of the backdoor detectability.

\begin{definition}[Backdoor detectability]~\label{def:backdoor_detectability}
The detectability of the backdoor generated by a backdoor attack method is the maximum accuracy that backdoor detection methods can achieve to distinguish the backdoored model from the benign models. For convenience, we adjust the detectability between 0 and 1, i.e.,  $\gamma = |acc-0.5| \times 2$, where $\gamma$ is the detectability and $acc$ is the maximum accuracy.
\end{definition}

\vspace{-10pt}
\begin{table}[tbh]
\vspace{-5pt}
  \centering
  \caption{Detectability for attacks. $C$-rows stand for results on CIFAR10, $G$-rows stand for results on GTSRB, $I$-rows stand for results on ImageNet dataset and $V$-rows stand for results on VGGFace2 dataset. The ``Det'' columns represent the backdoor detectability. The ``$\alpha/\beta$'' columns depict the backdoor distance (Corollary~\ref{coro:beta_effect}). Here, we keep $\beta=0.1$ for all cells.}
  \begin{adjustbox}{width=0.48\textwidth}
\begin{tabular}{|c|cc|cc|cc|cc|cc|cc|}
\hline
\multirow{2}{*}{} & \multicolumn{2}{c|}{BadNet}       & \multicolumn{2}{c|}{SIG}          & \multicolumn{2}{c|}{WB}           & \multicolumn{2}{c|}{CB}           & \multicolumn{2}{c|}{IAB}          & \multicolumn{2}{c|}{Ours}         \\ \cline{2-13} 
                  & \multicolumn{1}{c|}{Det}  & $\alpha/\beta$ & \multicolumn{1}{c|}{Det}  & $\alpha/\beta$ & \multicolumn{1}{c|}{Det}  & $\alpha/\beta$ & \multicolumn{1}{c|}{Det}  & $\alpha/\beta$ & \multicolumn{1}{c|}{Det}  & $\alpha/\beta$ & \multicolumn{1}{c|}{Det}  & $\alpha/\beta$ \\ \hline
C                 & \multicolumn{1}{c|}{1.00} & 0.98  & \multicolumn{1}{c|}{1.00} & 0.99  & \multicolumn{1}{c|}{0.72} & 0.67  & \multicolumn{1}{c|}{1.00} & 1.00  & \multicolumn{1}{c|}{0.97} & 1.00  & \multicolumn{1}{c|}{0.27} & 0.37  \\ \hline
G                 & \multicolumn{1}{c|}{1.00} & 0.96  & \multicolumn{1}{c|}{1.00} & 1.00  & \multicolumn{1}{c|}{0.71} & 0.61  & \multicolumn{1}{c|}{0.99} & 1.00  & \multicolumn{1}{c|}{0.99} & 1.00  & \multicolumn{1}{c|}{0.25} & 0.41  \\ \hline
I                 & \multicolumn{1}{c|}{0.96} & 0.92  & \multicolumn{1}{c|}{0.98} & 0.98  & \multicolumn{1}{c|}{0.69} & 0.66  & \multicolumn{1}{c|}{0.96} & 0.99  & \multicolumn{1}{c|}{1.00} & 0.99  & \multicolumn{1}{c|}{0.18} & 0.38  \\ \hline
V                 & \multicolumn{1}{c|}{0.98} & 0.95  & \multicolumn{1}{c|}{0.98} & 0.99  & \multicolumn{1}{c|}{0.71} & 0.65  & \multicolumn{1}{c|}{0.97} & 1.00  & \multicolumn{1}{c|}{0.98} & 1.00  & \multicolumn{1}{c|}{0.30} & 0.35  \\ \hline
\end{tabular}

\end{adjustbox}
	\label{tb:detectability_distance}
\vspace{-10pt}
\end{table}

\ignore{
\begin{table}[tbh]
  \centering
  \caption{Maximum detection accuracy (\%) for attacks. $C$-rows stand for results on CIFAR10, $G$-rows stand for results on GTSRB, $I$-rows stand for results on ImageNet dataset and $V$-rows stand for results on VGGFace2 dataset. The ``Det'' columns represent the backdoor detectability. The ``$\alpha/\beta$'' columns depict the backdoor distance (Corollary~\ref{coro:beta_effect}). Here, we keep $\beta=0.1$ for all cells.}
  \begin{adjustbox}{width=0.48\textwidth}
\begin{tabular}{|c|cc|cc|cc|cc|cc|cc|}
\hline
\multirow{2}{*}{} & \multicolumn{2}{c|}{BadNet}        & \multicolumn{2}{c|}{SIG}           & \multicolumn{2}{c|}{WB}            & \multicolumn{2}{c|}{CB}            & \multicolumn{2}{c|}{IAB}           & \multicolumn{2}{c|}{TSA}          \\ \cline{2-13} 
                  & \multicolumn{1}{c|}{Acc}   & \alpha/\beta & \multicolumn{1}{c|}{Acc}   & \alpha/\beta & \multicolumn{1}{c|}{Acc}   & \alpha/\beta & \multicolumn{1}{c|}{Acc}   & \alpha/\beta & \multicolumn{1}{c|}{Acc}   & \alpha/\beta & \multicolumn{1}{c|}{Acc}   & \alpha/\beta \\ \hline
C                 & \multicolumn{1}{c|}{100}   & 0.98  & \multicolumn{1}{c|}{86.00} & 0.99  & \multicolumn{1}{c|}{86.00} & 0.67  & \multicolumn{1}{c|}{100}   & 1     & \multicolumn{1}{c|}{98.25} & 1     & \multicolumn{1}{c|}{63.25} & 0.37  \\ \hline
G                 & \multicolumn{1}{c|}{100}   &       & \multicolumn{1}{c|}{85.50} &       & \multicolumn{1}{c|}{85.50} &       & \multicolumn{1}{c|}{99.50} &       & \multicolumn{1}{c|}{99.50} &       & \multicolumn{1}{c|}{61.75} &       \\ \hline
I                 & \multicolumn{1}{c|}{97.75} &       & \multicolumn{1}{c|}{84.25} &       & \multicolumn{1}{c|}{84.25} &       & \multicolumn{1}{c|}{97.25} &       & \multicolumn{1}{c|}{97.25} &       & \multicolumn{1}{c|}{59.00} &       \\ \hline
V                 & \multicolumn{1}{c|}{98.75} &       & \multicolumn{1}{c|}{85.25} &       & \multicolumn{1}{c|}{85.25} &       & \multicolumn{1}{c|}{98.25} &       & \multicolumn{1}{c|}{98.75} &       & \multicolumn{1}{c|}{60.25} &       \\ \hline
\end{tabular}

\end{adjustbox}
	\label{tb:detection_acc_model_outputs}
\end{table}
}

To evaluate the relationship between the backdoor detectability and the backdoor distance for each attack method, we summarized the maximum detection accuracy obtained among 6 detection methods (Section~\ref{subsec:detection_experiments}) and calculated the detectability according to the above definition. 
Also we approximated the backdoor distance of these 6 attacks on 4 datasets using our approximation method (Section~\ref{subsec:alpha_backdoor}) with the help of StyleGAN2 models~\cite{stylegan2-ada}. Specifically, we used the officially pretrained StyleGAN2 models for datasets CIFAR10 and ImageNet, trained a StyleGAN2 model with its original code for GTSRB dataset, and trained a StyleGAN2 model with code~\cite{stylegan2-vggface2} available online for VGGFace2 dataset.
Our results are illustrated in Table~\ref{tb:detectability_distance}.

From Table~\ref{tb:detectability_distance}, we observe that the backdoor detectability is roughly equal to the backdoor distance (depicted by $\alpha/\beta$). Digitally, the Pearson correlation coefficient~\cite{person_coefficient} between them is $0.9777$, the mean value of the absolute difference between them is $0.0450$ and the standard deviation of that is $0.0498$. These numbers demonstrate that the backdoor distance is highly correlated to and a good indicator of the backdoor detectability, when the $\alpha/\beta$ be close to $1$, be about $0.64$ (WB) or be around $0.37$ (TSA).

To evaluate this relationship at more various backdoor distances, we used the TSA backdoor attack method (Algorithm~\ref{alg:tsa_backdoor}) with $\beta=0.1$ to generate backdoors with different backdoor distances by adjusting the parameter $\alpha^*$. 
Specifically, we performed this experiment on CIFAR10 with 9 different $\alpha^*$ values ranging from 0.1 to 0.9.
For each $\alpha^*$, we generated 200 backdoored models together with previously generated benign models for CIFAR10 to build a testing dataset containing 400 models (200 backdoored and 200 benign models). 
On each testing dataset, we applied SCAn and K-ARM detection methods, two comparably effective detection methods against TSA backdoor attack (Section~\ref{subsec:detection_experiments}), to distinguish those TSA backdoored models from benign models.
Figure~\ref{fig:alpha_detectability} demonstrates these detection results and the backdoor distances we estimated on each $\alpha^*$ value. 
\vspace{-10pt}
\begin{figure}[ht]
     \centering
     \includegraphics[width=\linewidth]{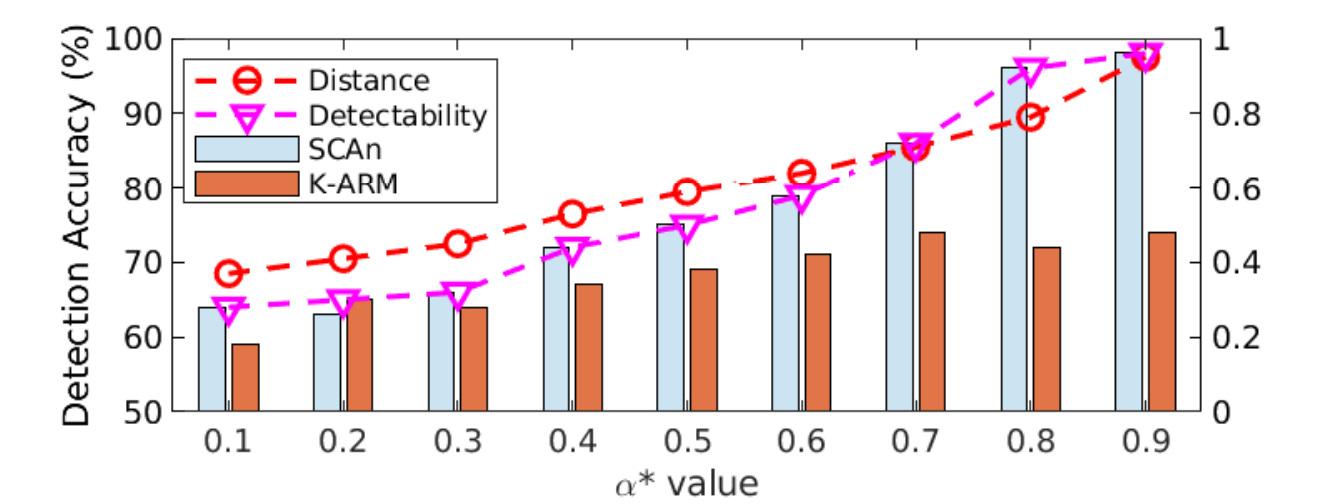}
     \caption{Backdoor distance and detectability of backdoors generated by using different $\alpha^*$ values.}
     \label{fig:alpha_detectability}
 \vspace{-10pt}
\end{figure}

From Figure~\ref{fig:alpha_detectability}, we observe that the backdoor detectability (blue line) of TSA increases along with the increasing backdoor distance (red line, characterized by $\alpha/\beta$). Notably, a small difference exists between these two lines,
because of two reasons: 1) imprecise estimation of the backdoor distance; and 2) the absence of an effective detection method against TSA backdoors.
Specifically, when $\alpha^* \ge 0.8$, the backdoor distance is lower than the backdoor detectability, illustrating the first reason. When $\alpha^* \le 0.5$, the backdoor distance is consistently higher than the backdoor detectability, demonstrating the room for better detection methods (the second reason).

Based upon our results, we conclude that the backdoor distance is a good indicator of the backdoor detectability with small deviations, which is again illustrated by the following observation: the Pearson correlation coefficient between the backdoor distances and detectabilities shown on Figure~\ref{fig:alpha_detectability} is $0.9795$, while the mean  of the absolute difference between them is $0.0800$ with the standard deviation of  $0.0453$.


\ignore{
Once a backdoor is detected in a model, the defender may want to remove it from the model if the model will remain to be used. Unlearning methods are promising to achieve that. Alternatively, the defender could launch a ``blind'' unlearning on a model even before a backdoor is detected. In this section, we perform TSA on the backdoor unlearning methods to investigate the relationship between the backdoor similarity and the effectiveness of backdoor unlearning.

\subsection{Theoretical Analysis}
We classify the backdoor unlearning methods into two categories: the targeted unlearning (for removing detected backdoors) and ``blind'' unlearning. Next, we will them separately. 

\vspace{3pt}\noindent\textbf{``Blind'' unlearning}.
This kind of unlearning methods can be further classified into two subtypes: the fine-tuning methods, and the robustness enhancement. The fine-tuning methods fine-tune a given model on benign inputs, through which the Catastrophic Forgetting (CF) would occur and the capability of the model to recognize the trigger may be forgotten.
To study the relationship between CF and backdoor similarity, we show the upper bound of \textit{task drift} in Lemma \ref{lemma:task_drift}, following the definition and modeling techniques used in the TSA of the methods for backdoor detection on model weights (Section~\ref{subsec:detection_theory}).

\begin{lemma}~\label{lemma:task_drift}
Given a set of $m$ clean inputs $X=\{x_1,x_2,...,x_m\}$, when $m$ is large, the task drift from the backdoor task $\mathcal{T}_{A,\mathcal{B},t}$ to the primary task $\mathcal{T}_P$ to during the fine-tuning using $X$ satisfies
\begin{equation}
\notag
\begin{array}{r@{\quad}l}
 \|\delta^{\mathcal{T}_{A,\mathcal{B},t} \to \mathcal{T}_P}(X)\|_2 \leq 2mL \cdot d_{\mathcal{H}-W1}(\mathcal{D}_P, \mathcal{D}_{A,\mathcal{B},t}) 
\end{array}
\end{equation}
\noindent where $\delta^{\mathcal{T}_{A,\mathcal{B},t} \to \mathcal{T}_P}(X)$ is the task drift, $\mathcal{D}_P$ and $\mathcal{D}_{A,\mathcal{B},t}$ are the primary and the backdoor distributions, respectively, and $L$ is the number of classes.
\end{lemma}
\begin{proof}
See Appendix~\ref{proof:weight_distance}.
\end{proof}

Lemma~\ref{lemma:task_drift} shows that a small $d_{\mathcal{H}-W1}(\mathcal{D}_P, \mathcal{D}_{A,\mathcal{B},t})$ leads to small \textit{task drift}, the measurement for CF, from the backdoor task $\mathcal{T}_{A,\mathcal{B},t}$ to the primary task $\mathcal{T}_P$, indicate the ``blind" unlearning through the fine-tuning is the less effective (i.e., the backdoor may not be completely forgotten).


The robustness enhancement methods aim to enhance the robustness radius of a backdoor model $f_b$ within which the model prediction remains the same. Specifically, the robustness radius $\underset{-}{\bigtriangleup}(X,s)$ for the source label $s$ on a set of clean training inputs $X$ could be formulated as
\begin{equation}
\notag
\begin{array}{r@{\quad}l}
\underset{-}{\bigtriangleup}(X,s) \overset{def}{=} \min\limits_{x \in X_{f_b(x)=s}} \{\bigtriangleup(x) : \underset{f(x+\delta) \neq s}{\inf} \|\delta\| \}.
\end{array}
\end{equation}
We denote $R(X,s)$ as the set of samples $x'$ within the robustness radius $\underset{-}{\bigtriangleup}(X,s)$, 
\begin{equation}
\notag
\begin{array}{r@{\quad}l}
R(X,s) = \{x': \underset{x \in X_{f_b(x)=s}}{\inf} \|x'-x\| < \underset{-}{\bigtriangleup}(X,s)\}.
\end{array}
\end{equation}
Obviously, when $\underset{-}{\bigtriangleup}(X,s)$ increases, $R(X,s)$ becomes bigger. However, increasing $\underset{-}{\bigtriangleup}(X,s)$ is less effective for removing backdoors with small backdoor distances $d_{\mathcal{H}-W1}(\mathcal{D}_P, \mathcal{D}_{A,\mathcal{B},t})$ for the following reasons.
1) When $R(X,s) \cap A(\mathcal{B}) = \emptyset$, apparently, the predicted label of trigger-carrying inputs 
do not change. 
2) When $R(X,s) \cap A(\mathcal{B}) \neq \emptyset$ and $A(\mathcal{B})\setminus R(X,s) \neq \emptyset$, the small $d_{\mathcal{H}-W1}(\mathcal{D}_P, \mathcal{D}_{A,\mathcal{B},t})$ will lead to the large $A(\mathcal{B})\setminus R(X,s)$, i.e., more $x \in A(\mathcal{B})$ close to the decision boundary, the more $x \in A(\mathcal{B})$ outside $R(X,s)$, indicating the backdoor remain largely un-removed. This is because, during robustness enhancement, $f_b$ is learned to push $x \in X$ away from the boundary as much as possible, considered as the over-fitting by the neural network.
3) When $A(\mathcal{B})\setminus R(X,s) = \emptyset$, $R(X,s)$ covers many inputs within the robust radius whose true label is not $s$, i.e., $f^*(x') \neq s$, and thus the robustness enhancement will result in a false model prediction on these inputs, which is not desirable. This is due to the irregular classification boundary of $f_b$ that makes the precise removal of the backdoor impossible without knowing the trigger injecting function $A$. Besides, increasing $\underset{-}{\bigtriangleup}(X,s)$ will decrease $\underset{-}{\bigtriangleup}(X,t)$ for $t \neq s$, which also eventually results in a model giving the false prediction on the inputs with the true label of $t$.

\vspace{3pt}\noindent\textbf{Targeted unlearning}.
The targeted unlearning methods are guided by the backdoors reconstructed by the backdoor detection methods. As we demonstrated in Section~\ref{sec:detection}, backdoor detection methods themselves becomes hard when the backdoor distance is small. Therefore, the targeted unlearning methods also become less effective for the backdoors with smaller backdoor distance.
}
\vspace{-15pt}
\section{Mitigation}
\label{sec:mitigation}
\vspace{-10pt}

A simple defense to the backdoors with small backdoor distance could be just discarding those uncertain predictions while retaining only those confident predictions. However, doing this will obviously decrease the model accuracy on benign inputs. For example, on MNIST dataset, if keeping only those predictions with confidence higher than $0.8$ and labeling the rest as ``unknown'', the accuracy of a benign model will decrease from $99.35\%$ to $98.61\%$, this is far below the accuracy of a benign model could get\footnote{11 times of standard deviation below the mean accuracy}, considering the mean accuracy among 200 benign models is $99.25\%$ with the standard deviation of $0.00057$. When the primary task becomes more hard (e.g., ImageNet), the accuracy reduction will be more serious if this simple defense be applied.

Besides, backdoor unlearning methods and backdoor disabling methods might have potential to relieve the threat from backdoors with small backdoor distance. However, as demonstrated in our TSA on them (Appendix~\ref{app:backdoor_unlearning} $\&$~\ref{app:backdoor_disabling}), they exhibit minor efficacy on these backdoors. 

Inspired by our backdoor distance theorems, a detection that considers both the difference exhibited in the inputs and in the outputs between the backdoored model and benign models would effectively reduce the evasiveness of those backdoors with small backdoor distance, which is a promising direction to develop powerful detections in futures.

\vspace{-15pt}
\section{Related Works}
\vspace{-10pt}
We proposed theorems to study the detectability of backdoors.
This, in general, has also been studied in previous work~\cite{goldwasser2022planting}.
It proposed an approach to plant undetectable backdoors into a random feature network~\cite{rahimi2007random}, a kind of neural network that learns only the weights on random features. Compared with classical deep neural networks, random feature networks have limited capability~\cite{yehudai2019power}: it cannot be used to learn even a single ReLU neuron, unless the network size is exponentially larger than the dimension of the inputs. 
In theory, work~\cite{goldwasser2022planting} reduced the problem of detecting their backdoor to solving a Continuous Learning With Errors problem~\cite{bruna2021continuous}, however, solving which is as hard as finding approximately short vectors on arbitrary integer lattices, thus detecting their backdoor is computationally infeasible in practice. 
Compared with work~\cite{goldwasser2022planting}, our work established theorems about the detectability of backdoors injected into a classical deep neural network, and demonstrated that, in this case, backdoor detectability is characterized by the backdoor distance that further controlled by three parameters: $\kappa$, $\beta$ and $S$.

Based upon our theorem, we proposed an attack, TSA backdoor attack, to inject stealthy backdoor. Compared to existing stealthy backdoor attacks~\cite{BadNets}~\cite{SIG}~\cite{wasserstein-backdoor}~\cite{composite-backdoor}~\cite{IAB}, TSA backdoor attack achieved lower backdoor detectability under current backdoor detections~\cite{K-ARM}~\cite{MNTD}~\cite{ABS}~\cite{TND}~\cite{SCAn}~\cite{AC} (demonstrated in Section~\ref{subsec:detection_experiments}) and has theory guarantee under unknown detections (illustrated in Section~\ref{subsec:theory_detection_outputs}~\ref{subsec:theory_detection_weights}~\ref{subsec:theory_detection_inputs}).

Our TSA backdoor attack exploited adversarial perturbations as the trigger, that has been also exploited by IMC backdoor attack~\cite{pang2020tale}. There are 3 main differences between TSA and IMC backdoor attacks: 1) TSA has theory guarantees on backdoor detectability what IMC has not; 2) TSA reduces the $S$ (defined in Eq.~\ref{eq:S}) what IMC has not considered; 3) TSA reduces the difference between outputs of backdoored model and the benign model on whole input space (Eq.~\ref{eq:L_backdoor_train}), however, IMC only reduced the difference between outputs on benign inputs (i.e., maintained the accuracy of backdoored model on benign inputs). And, there are two minor differences between them: a) TSA makes the trigger be easy to learn while IMC did not; b) TSA only slightly changes the classification boundary, however, IMC iteratively pushed the classification boundary deviate from its original position to seek a small trigger. 
Also, we established experiments to compare TSA with IMC on CIFAR10 dataset (see Appendix~\ref{app:IMC} for details), in which TSA exhibited lower detectability than IMC.
\vspace{-15pt}
\section{Limitations and Future Works}
\vspace{-10pt}
In this work, we only studied backdoor tasks where $\beta \geq \frac{1}{\kappa}$, i.e., the adversary has not reduced the probability of drawing a trigger-carrying input from the backdoor distribution be lower than the probability of drawing it from the primary distribution. However, as demonstrated in Table~\ref{tb:beta_asr_alpha_kappa}, when $\beta < \frac{1}{\kappa}$, TSA still achieved acceptable ASR (ASR=$79.07\%$ when $\beta = 0.005 < 0.046 = \frac{1}{\kappa}$), illustrating the need to extend our theorem to adapt $\beta < \frac{1}{\kappa}$ scenarios.
However, using the similar methods applied on $\beta \geq \frac{1}{\kappa}$ scenarios, one could easily obtain that the minimal backdoor distance will be obtained at $\beta=\frac{1}{\kappa}$ even in $\beta < \frac{1}{\kappa}$ scenarios, which is inline with the conclusion drew for $\beta \geq \frac{1}{\kappa}$ scenarios (Corollary~\ref{coro:beta_effect} $\&$~\ref{coro:kappa_effect}) and has no conflicts with the results shown in Table~\ref{tb:beta_asr_alpha_kappa}.

In section~\ref{sec:mitigation}, we have only taken the first step to use our backdoor distance theorem to understand the backdoor unlearning and backdoor disabling methods. Comprehensive studies are needed in the future.

Our Theorem~\ref{thr:backdoor_general_distance} reveal that the fundamental difference between a backdoored model and a benign model comes from the difference between their joint probabilities among trigger-carrying inputs and the outputs (i.e., $A(\mathcal{B}) \times \mathcal{Y}$). This implies that a good backdoor detection method should simultaneously consider the differences in the outputs and in the inputs between backdoored models and benign models, rather than considering one of these two differences alone as what current detection methods did. Actually, to detect backdoor, this points out a potential direction for the future studies.

\vspace{-15pt}
\section{Conclusion}
\vspace{-10pt}
We established theorems about the backdoor distance (similarity) and used them to investigate the stealthiness of current backdoors, revealing that they have taken only some of factors affecting the backdoor distance into the consideration. 
Thus, we proposed a new approach, TSA attack, which simultaneously optimizes those factors under the given constraint of backdoor distance.
Through theoretical analysis and extensive experiments, we demonstrated that the backdoors with smaller backdoor distance were in general harder to be detected by existing backdoor defense methods. Furthermore, comparing with existing backdoor attacks, the TSA attack generates backdoors that exhibited smaller backdoor distances, and thus lower detectability under current backdoor detections.

\bibliographystyle{plain}
\bibliography{main}

\begin{thebibliography}{10}

\bibitem{stylegan2-vggface2}
Stylegan2-based face frontalization model.
\newblock
  \url{https://github.com/ch10tang/stylegan2-based-face-frontalization}.

\bibitem{trojai}
Trojai competition.
\newblock \url{https://pages.nist.gov/trojai/}.

\bibitem{TDC}
Trojan detection challenge.
\newblock \url{https://trojandetection.ai/}.

\bibitem{w-distance}
Mart{\'{\i}}n Arjovsky, Soumith Chintala, and L{\'{e}}on Bottou.
\newblock Wasserstein generative adversarial networks.
\newblock In Doina Precup and Yee~Whye Teh, editors, {\em Proceedings of the
  34th International Conference on Machine Learning, {ICML} 2017, Sydney, NSW,
  Australia, 6-11 August 2017}, volume~70 of {\em Proceedings of Machine
  Learning Research}, pages 214--223. {PMLR}, 2017.

\bibitem{BlindBackdoor}
Eugene Bagdasaryan and Vitaly Shmatikov.
\newblock Blind backdoors in deep learning models.
\newblock In Michael Bailey and Rachel Greenstadt, editors, {\em 30th {USENIX}
  Security Symposium, {USENIX} Security 2021, August 11-13, 2021}, pages
  1505--1521. {USENIX} Association, 2021.

\bibitem{SIG}
Mauro Barni, Kassem Kallas, and Benedetta Tondi.
\newblock A new backdoor attack in {CNNS} by training set corruption without
  label poisoning.
\newblock In {\em 2019 {IEEE} International Conference on Image Processing,
  {ICIP} 2019, Taipei, Taiwan, September 22-25, 2019}, pages 101--105. {IEEE},
  2019.

\bibitem{h-divergence}
Shai Ben{-}David, John Blitzer, Koby Crammer, Alex Kulesza, Fernando Pereira,
  and Jennifer~Wortman Vaughan.
\newblock A theory of learning from different domains.
\newblock {\em Mach. Learn.}, 79(1-2):151--175, 2010.

\bibitem{bridges1998foundations}
Douglas~S Bridges et~al.
\newblock {\em Foundations of real and abstract analysis}.
\newblock Number 146. Springer Science \& Business Media, 1998.

\bibitem{bruna2021continuous}
Joan Bruna, Oded Regev, Min~Jae Song, and Yi~Tang.
\newblock Continuous lwe.
\newblock In {\em Proceedings of the 53rd Annual ACM SIGACT Symposium on Theory
  of Computing}, pages 694--707, 2021.

\bibitem{vggface2}
Qiong Cao, Li~Shen, Weidi Xie, Omkar~M Parkhi, and Andrew Zisserman.
\newblock Vggface2: A dataset for recognising faces across pose and age.
\newblock In {\em 2018 13th IEEE international conference on automatic face \&
  gesture recognition (FG 2018)}, pages 67--74. IEEE, 2018.

\bibitem{AC}
Bryant Chen, Wilka Carvalho, Nathalie Baracaldo, Heiko Ludwig, Benjamin
  Edwards, Taesung Lee, Ian~M. Molloy, and Biplav Srivastava.
\newblock Detecting backdoor attacks on deep neural networks by activation
  clustering.
\newblock In Hu{\'{a}}scar Espinoza, Se{\'{a}}n~{\'{O}} h{\'{E}}igeartaigh,
  Xiaowei Huang, Jos{\'{e}} Hern{\'{a}}ndez{-}Orallo, and Mauricio
  Castillo{-}Effen, editors, {\em Workshop on Artificial Intelligence Safety
  2019 co-located with the Thirty-Third {AAAI} Conference on Artificial
  Intelligence 2019 (AAAI-19), Honolulu, Hawaii, January 27, 2019}, volume 2301
  of {\em {CEUR} Workshop Proceedings}. CEUR-WS.org, 2019.

\bibitem{TargetedBackdoor}
Xinyun Chen, Chang Liu, Bo~Li, Kimberly Lu, and Dawn Song.
\newblock Targeted backdoor attacks on deep learning systems using data
  poisoning.
\newblock {\em CoRR}, abs/1712.05526, 2017.

\bibitem{SentiNet}
Edward Chou, Florian Tram{\`{e}}r, and Giancarlo Pellegrino.
\newblock Sentinet: Detecting localized universal attacks against deep learning
  systems.
\newblock In {\em 2020 {IEEE} Security and Privacy Workshops, {SP} Workshops,
  San Francisco, CA, USA, May 21, 2020}, pages 48--54. {IEEE}, 2020.

\bibitem{DBLP:conf/globalsip/ClementsL18}
Joseph Clements and Yingjie Lao.
\newblock Backdoor attacks on neural network operations.
\newblock In {\em 2018 {IEEE} Global Conference on Signal and Information
  Processing, GlobalSIP 2018, Anaheim, CA, USA, November 26-29, 2018}, pages
  1154--1158. {IEEE}, 2018.

\bibitem{imagenet}
Jia Deng, Wei Dong, Richard Socher, Li-Jia Li, Kai Li, and Li~Fei-Fei.
\newblock Imagenet: A large-scale hierarchical image database.
\newblock In {\em 2009 IEEE Conference on Computer Vision and Pattern
  Recognition}, pages 248--255, 2009.

\bibitem{wasserstein-backdoor}
Khoa Doan, Yingjie Lao, and Ping Li.
\newblock Backdoor attack with imperceptible input and latent modification.
\newblock {\em Advances in Neural Information Processing Systems}, 34, 2021.

\bibitem{LIRA}
Khoa Doan, Yingjie Lao, Weijie Zhao, and Ping Li.
\newblock Lira: Learnable, imperceptible and robust backdoor attacks.
\newblock In {\em Proceedings of the IEEE/CVF International Conference on
  Computer Vision}, pages 11966--11976, 2021.

\bibitem{DBLP:conf/aistats/DoanBMRA21}
Thang Doan, Mehdi~Abbana Bennani, Bogdan Mazoure, Guillaume Rabusseau, and
  Pierre Alquier.
\newblock A theoretical analysis of catastrophic forgetting through the {NTK}
  overlap matrix.
\newblock In Arindam Banerjee and Kenji Fukumizu, editors, {\em The 24th
  International Conference on Artificial Intelligence and Statistics, {AISTATS}
  2021, April 13-15, 2021, Virtual Event}, volume 130 of {\em Proceedings of
  Machine Learning Research}, pages 1072--1080. {PMLR}, 2021.

\bibitem{dowson1982frechet}
DC~Dowson and BV666017 Landau.
\newblock The fr{\'e}chet distance between multivariate normal distributions.
\newblock {\em Journal of multivariate analysis}, 12(3):450--455, 1982.

\bibitem{weight_perturbations}
Jacob Dumford and Walter~J. Scheirer.
\newblock Backdooring convolutional neural networks via targeted weight
  perturbations.
\newblock In {\em 2020 {IEEE} International Joint Conference on Biometrics,
  {IJCB} 2020, Houston, TX, USA, September 28 - October 1, 2020}, pages 1--9.
  {IEEE}, 2020.

\bibitem{DBLP:journals/corr/abs-2007-10760}
Yansong Gao, Bao~Gia Doan, Zhi Zhang, Siqi Ma, Jiliang Zhang, Anmin Fu, Surya
  Nepal, and Hyoungshick Kim.
\newblock Backdoor attacks and countermeasures on deep learning: {A}
  comprehensive review.
\newblock {\em CoRR}, abs/2007.10760, 2020.

\bibitem{STRIP}
Yansong Gao, Change Xu, Derui Wang, Shiping Chen, Damith~Chinthana Ranasinghe,
  and Surya Nepal.
\newblock {STRIP:} a defence against trojan attacks on deep neural networks.
\newblock In David Balenson, editor, {\em Proceedings of the 35th Annual
  Computer Security Applications Conference, {ACSAC} 2019, San Juan, PR, USA,
  December 09-13, 2019}, pages 113--125. {ACM}, 2019.

\bibitem{goldwasser2022planting}
Shafi Goldwasser, Michael~P Kim, Vinod Vaikuntanathan, and Or~Zamir.
\newblock Planting undetectable backdoors in machine learning models.
\newblock {\em arXiv preprint arXiv:2204.06974}, 2022.

\bibitem{BadNets}
Tianyu Gu, Brendan Dolan{-}Gavitt, and Siddharth Garg.
\newblock Badnets: Identifying vulnerabilities in the machine learning model
  supply chain.
\newblock {\em CoRR}, abs/1708.06733, 2017.

\bibitem{DBLP:journals/corr/abs-2202-06862}
Shangwei Guo, Chunlong Xie, Jiwei Li, Lingjuan Lyu, and Tianwei Zhang.
\newblock Threats to pre-trained language models: Survey and taxonomy.
\newblock {\em CoRR}, abs/2202.06862, 2022.

\bibitem{TABOR}
Wenbo Guo, Lun Wang, Xinyu Xing, Min Du, and Dawn Song.
\newblock {TABOR:} {A} highly accurate approach to inspecting and restoring
  trojan backdoors in {AI} systems.
\newblock {\em CoRR}, abs/1908.01763, 2019.

\bibitem{Spectre}
Jonathan Hayase, Weihao Kong, Raghav Somani, and Sewoong Oh.
\newblock Spectre: Defending against backdoor attacks using robust statistics.
\newblock {\em arXiv preprint arXiv:2104.11315}, 2021.

\bibitem{resnet}
Kaiming He, Xiangyu Zhang, Shaoqing Ren, and Jian Sun.
\newblock Deep residual learning for image recognition.
\newblock In {\em Proceedings of the IEEE conference on computer vision and
  pattern recognition}, pages 770--778, 2016.

\bibitem{hotelling1992generalization}
Harold Hotelling.
\newblock The generalization of student’s ratio.
\newblock In {\em Breakthroughs in statistics}, pages 54--65. Springer, 1992.

\bibitem{gtsrb}
Sebastian Houben, Johannes Stallkamp, Jan Salmen, Marc Schlipsing, and
  Christian Igel.
\newblock Detection of traffic signs in real-world images: The {G}erman
  {T}raffic {S}ign {D}etection {B}enchmark.
\newblock In {\em International Joint Conference on Neural Networks}, number
  1288, 2013.

\bibitem{NeuronInspect}
Xijie Huang, Moustafa Alzantot, and Mani~B. Srivastava.
\newblock Neuroninspect: Detecting backdoors in neural networks via output
  explanations.
\newblock {\em CoRR}, abs/1911.07399, 2019.

\bibitem{DBLP:journals/corr/abs-2007-06808}
Arun I. and Murugesan Venkatapathi.
\newblock An algorithm for estimating volumes and other integrals in n
  dimensions.
\newblock {\em CoRR}, abs/2007.06808, 2020.

\bibitem{NTK}
Arthur Jacot, Cl{\'{e}}ment Hongler, and Franck Gabriel.
\newblock Neural tangent kernel: Convergence and generalization in neural
  networks.
\newblock In Samy Bengio, Hanna~M. Wallach, Hugo Larochelle, Kristen Grauman,
  Nicol{\`{o}} Cesa{-}Bianchi, and Roman Garnett, editors, {\em Advances in
  Neural Information Processing Systems 31: Annual Conference on Neural
  Information Processing Systems 2018, NeurIPS 2018, December 3-8, 2018,
  Montr{\'{e}}al, Canada}, pages 8580--8589, 2018.

\bibitem{stylegan2-ada}
Tero Karras, Miika Aittala, Janne Hellsten, Samuli Laine, Jaakko Lehtinen, and
  Timo Aila.
\newblock Training generative adversarial networks with limited data.
\newblock In Hugo Larochelle, Marc'Aurelio Ranzato, Raia Hadsell,
  Maria{-}Florina Balcan, and Hsuan{-}Tien Lin, editors, {\em Advances in
  Neural Information Processing Systems 33: Annual Conference on Neural
  Information Processing Systems 2020, NeurIPS 2020, December 6-12, 2020,
  virtual}, 2020.

\bibitem{DBLP:journals/ijon/KavianiS21}
Sara Kaviani and Insoo Sohn.
\newblock Defense against neural trojan attacks: {A} survey.
\newblock {\em Neurocomputing}, 423:651--667, 2021.

\bibitem{adam}
Diederik~P Kingma and Jimmy Ba.
\newblock Adam: A method for stochastic optimization.
\newblock {\em arXiv preprint arXiv:1412.6980}, 2014.

\bibitem{sliced-wasserstein}
Soheil Kolouri, Kimia Nadjahi, Umut Simsekli, Roland Badeau, and Gustavo Rohde.
\newblock Generalized sliced wasserstein distances.
\newblock {\em Advances in neural information processing systems}, 32, 2019.

\bibitem{ULPs}
Soheil Kolouri, Aniruddha Saha, Hamed Pirsiavash, and Heiko Hoffmann.
\newblock Universal litmus patterns: Revealing backdoor attacks in cnns.
\newblock In {\em 2020 {IEEE/CVF} Conference on Computer Vision and Pattern
  Recognition, {CVPR} 2020, Seattle, WA, USA, June 13-19, 2020}, pages
  298--307. Computer Vision Foundation / {IEEE}, 2020.

\bibitem{cifar10}
Alex Krizhevsky, Geoffrey Hinton, et~al.
\newblock Learning multiple layers of features from tiny images.
\newblock 2009.

\bibitem{lecun1998gradient}
Yann LeCun, L{\'e}on Bottou, Yoshua Bengio, and Patrick Haffner.
\newblock Gradient-based learning applied to document recognition.
\newblock {\em Proceedings of the IEEE}, 86(11):2278--2324, 1998.

\bibitem{lee2019wide}
Jaehoon Lee, Lechao Xiao, Samuel Schoenholz, Yasaman Bahri, Roman Novak, Jascha
  Sohl-Dickstein, and Jeffrey Pennington.
\newblock Wide neural networks of any depth evolve as linear models under
  gradient descent.
\newblock {\em Advances in neural information processing systems}, 32, 2019.

\bibitem{person_coefficient}
Joseph Lee~Rodgers and W~Alan Nicewander.
\newblock Thirteen ways to look at the correlation coefficient.
\newblock {\em The American Statistician}, 42(1):59--66, 1988.

\bibitem{ABL}
Yige Li, Xixiang Lyu, Nodens Koren, Lingjuan Lyu, Bo~Li, and Xingjun Ma.
\newblock Anti-backdoor learning: Training clean models on poisoned data.
\newblock In {\em NeurIPS}, 2021.

\bibitem{li2022backdoor}
Yiming Li, Yong Jiang, Zhifeng Li, and Shu-Tao Xia.
\newblock Backdoor learning: A survey.
\newblock {\em IEEE Transactions on Neural Networks and Learning Systems},
  2022.

\bibitem{ISSBA}
Yuezun Li, Yiming Li, Baoyuan Wu, Longkang Li, Ran He, and Siwei Lyu.
\newblock Invisible backdoor attack with sample-specific triggers.
\newblock In {\em Proceedings of the IEEE/CVF International Conference on
  Computer Vision}, pages 16463--16472, 2021.

\bibitem{composite-backdoor}
Junyu Lin, Lei Xu, Yingqi Liu, and Xiangyu Zhang.
\newblock Composite backdoor attack for deep neural network by mixing existing
  benign features.
\newblock In Jay Ligatti, Xinming Ou, Jonathan Katz, and Giovanni Vigna,
  editors, {\em {CCS} '20: 2020 {ACM} {SIGSAC} Conference on Computer and
  Communications Security, Virtual Event, USA, November 9-13, 2020}, pages
  113--131. {ACM}, 2020.

\bibitem{sin2}
Tao Liu, Wujie Wen, and Yier Jin.
\newblock Sin 2: Stealth infection on neural network—a low-cost agile neural
  trojan attack methodology.
\newblock In {\em 2018 IEEE International Symposium on Hardware Oriented
  Security and Trust (HOST)}, pages 227--230. IEEE, 2018.

\bibitem{ABS}
Yingqi Liu, Wen{-}Chuan Lee, Guanhong Tao, Shiqing Ma, Yousra Aafer, and
  Xiangyu Zhang.
\newblock {ABS:} scanning neural networks for back-doors by artificial brain
  stimulation.
\newblock In Lorenzo Cavallaro, Johannes Kinder, XiaoFeng Wang, and Jonathan
  Katz, editors, {\em Proceedings of the 2019 {ACM} {SIGSAC} Conference on
  Computer and Communications Security, {CCS} 2019, London, UK, November 11-15,
  2019}, pages 1265--1282. {ACM}, 2019.

\bibitem{TrojanAttck}
Yingqi Liu, Shiqing Ma, Yousra Aafer, Wen{-}Chuan Lee, Juan Zhai, Weihang Wang,
  and Xiangyu Zhang.
\newblock Trojaning attack on neural networks.
\newblock In {\em 25th Annual Network and Distributed System Security
  Symposium, {NDSS} 2018, San Diego, California, USA, February 18-21, 2018}.
  The Internet Society, 2018.

\bibitem{reflection}
Yunfei Liu, Xingjun Ma, James Bailey, and Feng Lu.
\newblock Reflection backdoor: {A} natural backdoor attack on deep neural
  networks.
\newblock In Andrea Vedaldi, Horst Bischof, Thomas Brox, and Jan{-}Michael
  Frahm, editors, {\em Computer Vision - {ECCV} 2020 - 16th European
  Conference, Glasgow, UK, August 23-28, 2020, Proceedings, Part {X}}, volume
  12355 of {\em Lecture Notes in Computer Science}, pages 182--199. Springer,
  2020.

\bibitem{reflection-backdoor}
Yunfei Liu, Xingjun Ma, James Bailey, and Feng Lu.
\newblock Reflection backdoor: A natural backdoor attack on deep neural
  networks.
\newblock In {\em European Conference on Computer Vision}, pages 182--199.
  Springer, 2020.

\bibitem{mclachlan2005discriminant}
Geoffrey~J McLachlan.
\newblock {\em Discriminant analysis and statistical pattern recognition}.
\newblock John Wiley \& Sons, 2005.

\bibitem{mclachlan1999mahalanobis}
Goeffrey~J McLachlan.
\newblock Mahalanobis distance.
\newblock {\em Resonance}, 4(6):20--26, 1999.

\bibitem{murphy2012machine}
Kevin~P Murphy.
\newblock {\em Machine learning: a probabilistic perspective}.
\newblock MIT press, 2012.

\bibitem{IAB}
Tuan~Anh Nguyen and Anh Tran.
\newblock Input-aware dynamic backdoor attack.
\newblock {\em Advances in Neural Information Processing Systems},
  33:3454--3464, 2020.

\bibitem{pang2020tale}
Ren Pang, Hua Shen, Xinyang Zhang, Shouling Ji, Yevgeniy Vorobeychik, Xiapu
  Luo, Alex Liu, and Ting Wang.
\newblock A tale of evil twins: Adversarial inputs versus poisoned models.
\newblock In {\em Proceedings of the 2020 ACM SIGSAC Conference on Computer and
  Communications Security}, pages 85--99, 2020.

\bibitem{TrojanZOO}
Ren Pang, Zheng Zhang, Xiangshan Gao, Zhaohan Xi, Shouling Ji, Peng Cheng, and
  Ting Wang.
\newblock {TROJANZOO:} everything you ever wanted to know about neural
  backdoors (but were afraid to ask).
\newblock In {\em {IEEE} European Symposium on Security and Privacy, EuroS{\&}P
  2022, Genoa, June 6-10, 2022}. {IEEE}, 2022.

\bibitem{MESA}
Ximing Qiao, Yukun Yang, and Hai Li.
\newblock Defending neural backdoors via generative distribution modeling.
\newblock In Hanna~M. Wallach, Hugo Larochelle, Alina Beygelzimer, Florence
  d'Alch{\'{e}}{-}Buc, Emily~B. Fox, and Roman Garnett, editors, {\em Advances
  in Neural Information Processing Systems 32: Annual Conference on Neural
  Information Processing Systems 2019, NeurIPS 2019, December 8-14, 2019,
  Vancouver, BC, Canada}, pages 14004--14013, 2019.

\bibitem{Backdoor_in_poison}
Erwin Quiring and Konrad Rieck.
\newblock Backdooring and poisoning neural networks with image-scaling attacks.
\newblock In {\em 2020 {IEEE} Security and Privacy Workshops, {SP} Workshops,
  San Francisco, CA, USA, May 21, 2020}, pages 41--47. {IEEE}, 2020.

\bibitem{rahimi2007random}
Ali Rahimi and Benjamin Recht.
\newblock Random features for large-scale kernel machines.
\newblock {\em Advances in neural information processing systems}, 20, 2007.

\bibitem{TBT}
Adnan~Siraj Rakin, Zhezhi He, and Deliang Fan.
\newblock {TBT:} targeted neural network attack with bit trojan.
\newblock In {\em 2020 {IEEE/CVF} Conference on Computer Vision and Pattern
  Recognition, {CVPR} 2020, Seattle, WA, USA, June 13-19, 2020}, pages
  13195--13204. Computer Vision Foundation / {IEEE}, 2020.

\bibitem{UNet}
Olaf Ronneberger, Philipp Fischer, and Thomas Brox.
\newblock U-net: Convolutional networks for biomedical image segmentation.
\newblock In {\em International Conference on Medical image computing and
  computer-assisted intervention}, pages 234--241. Springer, 2015.

\bibitem{silhouettes}
Peter~J Rousseeuw.
\newblock Silhouettes: a graphical aid to the interpretation and validation of
  cluster analysis.
\newblock {\em Journal of computational and applied mathematics}, 20:53--65,
  1987.

\bibitem{ruel1999jensen}
Jonathan~J Ruel and Matthew~P Ayres.
\newblock Jensen’s inequality predicts effects of environmental variation.
\newblock {\em Trends in Ecology \& Evolution}, 14(9):361--366, 1999.

\bibitem{dynamic-backdoor}
Ahmed Salem, Rui Wen, Michael Backes, Shiqing Ma, and Yang Zhang.
\newblock Dynamic backdoor attacks against machine learning models.
\newblock In {\em 2022 IEEE 7th European Symposium on Security and Privacy
  (EuroS\&P)}, pages 703--718. IEEE, 2022.

\bibitem{FaceHack}
Esha Sarkar, Hadjer Benkraouda, and Michail Maniatakos.
\newblock Facehack: Triggering backdoored facial recognition systems using
  facial characteristics.
\newblock {\em CoRR}, abs/2006.11623, 2020.

\bibitem{DBLP:conf/uss/SeveriMCO21}
Giorgio Severi, Jim Meyer, Scott~E. Coull, and Alina Oprea.
\newblock Explanation-guided backdoor poisoning attacks against malware
  classifiers.
\newblock In Michael Bailey and Rachel Greenstadt, editors, {\em 30th {USENIX}
  Security Symposium, {USENIX} Security 2021, August 11-13, 2021}, pages
  1487--1504. {USENIX} Association, 2021.

\bibitem{K-ARM}
Guangyu Shen, Yingqi Liu, Guanhong Tao, Shengwei An, Qiuling Xu, Siyuan Cheng,
  Shiqing Ma, and Xiangyu Zhang.
\newblock Backdoor scanning for deep neural networks through k-arm
  optimization.
\newblock {\em arXiv preprint arXiv:2102.05123}, 2021.

\bibitem{vgg16}
Karen Simonyan and Andrew Zisserman.
\newblock Very deep convolutional networks for large-scale image recognition.
\newblock {\em arXiv preprint arXiv:1409.1556}, 2014.

\bibitem{googlenet}
Christian Szegedy, Wei Liu, Yangqing Jia, Pierre Sermanet, Scott Reed, Dragomir
  Anguelov, Dumitru Erhan, Vincent Vanhoucke, and Andrew Rabinovich.
\newblock Going deeper with convolutions.
\newblock In {\em Proceedings of the IEEE conference on computer vision and
  pattern recognition}, pages 1--9, 2015.

\bibitem{szegedy2013intriguing}
Christian Szegedy, Wojciech Zaremba, Ilya Sutskever, Joan Bruna, Dumitru Erhan,
  Ian Goodfellow, and Rob Fergus.
\newblock Intriguing properties of neural networks.
\newblock {\em arXiv preprint arXiv:1312.6199}, 2013.

\bibitem{SCAn}
Di~Tang, XiaoFeng Wang, Haixu Tang, and Kehuan Zhang.
\newblock Demon in the variant: Statistical analysis of dnns for robust
  backdoor contamination detection.
\newblock In Michael Bailey and Rachel Greenstadt, editors, {\em 30th {USENIX}
  Security Symposium, {USENIX} Security 2021, August 11-13, 2021}, pages
  1541--1558. {USENIX} Association, 2021.

\bibitem{thrun1995lifelong}
Sebastian Thrun.
\newblock A lifelong learning perspective for mobile robot control.
\newblock In {\em Intelligent robots and systems}, pages 201--214. Elsevier,
  1995.

\bibitem{SS}
Brandon Tran, Jerry Li, and Aleksander Madry.
\newblock Spectral signatures in backdoor attacks.
\newblock In Samy Bengio, Hanna~M. Wallach, Hugo Larochelle, Kristen Grauman,
  Nicol{\`{o}} Cesa{-}Bianchi, and Roman Garnett, editors, {\em Advances in
  Neural Information Processing Systems 31: Annual Conference on Neural
  Information Processing Systems 2018, NeurIPS 2018, December 3-8, 2018,
  Montr{\'{e}}al, Canada}, pages 8011--8021, 2018.

\bibitem{label-consistent}
Alexander Turner, Dimitris Tsipras, and Aleksander Madry.
\newblock Label-consistent backdoor attacks.
\newblock {\em CoRR}, abs/1912.02771, 2019.

\bibitem{villani2009optimal}
C{\'e}dric Villani.
\newblock {\em Optimal transport: old and new}, volume 338.
\newblock Springer, 2009.

\bibitem{NeuralCleanse}
Bolun Wang, Yuanshun Yao, Shawn Shan, Huiying Li, Bimal Viswanath, Haitao
  Zheng, and Ben~Y. Zhao.
\newblock Neural cleanse: Identifying and mitigating backdoor attacks in neural
  networks.
\newblock In {\em 2019 {IEEE} Symposium on Security and Privacy, {SP} 2019, San
  Francisco, CA, USA, May 19-23, 2019}, pages 707--723. {IEEE}, 2019.

\bibitem{DBLP:journals/corr/abs-2202-07183}
Jie Wang, Ghulam~Mubashar Hassan, and Naveed Akhtar.
\newblock A survey of neural trojan attacks and defenses in deep learning.
\newblock {\em CoRR}, abs/2202.07183, 2022.

\bibitem{TND}
Ren Wang, Gaoyuan Zhang, Sijia Liu, Pin-Yu Chen, Jinjun Xiong, and Meng Wang.
\newblock Practical detection of trojan neural networks: Data-limited and
  data-free cases.
\newblock In {\em Proceedings of the European Conference on Computer Vision
  (ECCV)}, 2020.

\bibitem{ANP}
Dongxian Wu and Yisen Wang.
\newblock Adversarial neuron pruning purifies backdoored deep models.
\newblock {\em Advances in Neural Information Processing Systems}, 34, 2021.

\bibitem{gan-inversion}
Weihao Xia, Yulun Zhang, Yujiu Yang, Jing{-}Hao Xue, Bolei Zhou, and
  Ming{-}Hsuan Yang.
\newblock {GAN} inversion: {A} survey.
\newblock {\em CoRR}, abs/2101.05278, 2021.

\bibitem{TrojanMeta}
Xiaojun Xu, Qi~Wang, Huichen Li, Nikita Borisov, Carl~A. Gunter, and Bo~Li.
\newblock Detecting {AI} trojans using meta neural analysis.
\newblock In {\em 42nd {IEEE} Symposium on Security and Privacy, {SP} 2021, San
  Francisco, CA, USA, 24-27 May 2021}, pages 103--120. {IEEE}, 2021.

\bibitem{MNTD}
Xiaojun Xu, Qi~Wang, Huichen Li, Nikita Borisov, Carl~A. Gunter, and Bo~Li.
\newblock Detecting {AI} trojans using meta neural analysis.
\newblock In {\em 42nd {IEEE} Symposium on Security and Privacy, {SP} 2021, San
  Francisco, CA, USA, 24-27 May 2021}, pages 103--120. {IEEE}, 2021.

\bibitem{latent-backdoor}
Yuanshun Yao, Huiying Li, Haitao Zheng, and Ben~Y Zhao.
\newblock Latent backdoor attacks on deep neural networks.
\newblock In {\em Proceedings of the 2019 ACM SIGSAC Conference on Computer and
  Communications Security}, pages 2041--2055, 2019.

\bibitem{yehudai2019power}
Gilad Yehudai and Ohad Shamir.
\newblock On the power and limitations of random features for understanding
  neural networks.
\newblock {\em Advances in Neural Information Processing Systems}, 32, 2019.

\bibitem{distillation_defense}
Kota Yoshida and Takeshi Fujino.
\newblock Disabling backdoor and identifying poison data by using knowledge
  distillation in backdoor attacks on deep neural networks.
\newblock In Jay Ligatti and Xinming Ou, editors, {\em AISec@CCS 2020:
  Proceedings of the 13th {ACM} Workshop on Artificial Intelligence and
  Security, Virtual Event, USA, 13 November 2020}, pages 117--127. {ACM}, 2020.

\bibitem{shufflenet}
Xiangyu Zhang, Xinyu Zhou, Mengxiao Lin, and Jian Sun.
\newblock Shufflenet: An extremely efficient convolutional neural network for
  mobile devices.
\newblock In {\em Proceedings of the IEEE conference on computer vision and
  pattern recognition}, pages 6848--6856, 2018.

\end{thebibliography}

\newpage

\section{Appendix of Details of Estimation $\kappa$}
\label{app:details_kappa}

\vspace{3pt}\noindent\textbf{Estimation of $\kappa_{\Pr}$}.
In Section~\ref{subsec:alpha_backdoor}, we described how to calculate $\Pr(x)$ for an given input $x$. Specifically, our implementation is based on the GAN inversion tools in the official repository of~\cite{stylegan2-ada}. The original code of~\cite{stylegan2-ada} can only recover inputs' style parameters that actually are projections of the $z$ through a transformation network. Thus, we modified the original code to directly recover $z$.

After computing $\Pr(x)$, it is still not trivial to get the expectations, $\mathbb{E}_{\Pr(x| x \in \mathcal{B})} \Pr(G^{-1}(x))$ and $\mathbb{E}_{\Pr(x| x \in A(\mathcal{B}))} \Pr(G^{-1}(x))$, due to the poor precision in computing tiny numbers (e.g., $1e^{-100}$). Thus, we, instead, compute the logarithm of the ratio, i.e., $\ln(\kappa_{\Pr}) = \ln(\frac{\mathbb{E}_{\Pr(x| x \in \mathcal{B})} \Pr(G^{-1}(x))}{\mathbb{E}_{\Pr(x| x \in A(\mathcal{B}))} \Pr(G^{-1}(x))}$).
Furthermore, we observed that $z_x = G^{-1}(x)$ follows a Gaussian distribution for both $x \in \mathcal{B}$ and $x \in A(\mathcal{B})$. Combining with the fact that $\Pr(z_x) \propto \|z_x\|^2_2$, we get that
\begin{equation}
\notag
\begin{array}{l@{\quad}l}
\ln(\kappa_{\Pr}) = -\frac{1}{2} (\frac{\mu_{\mathcal{B}}^2}{\sigma_{\mathcal{B}}^2+1} - \frac{\mu_{A(\mathcal{B})}^2}{\sigma_{A(\mathcal{B})}^2+1}),
\end{array}
\end{equation}
\noindent where we assume $z_x \sim \mathbb{N}(\mu_{\mathcal{B}}, \sigma_{\mathcal{B}})$ for $x \in \mathcal{B}$ and, for $x \in A(\mathcal{B})$, $z_x \sim \mathbb{N}(\mu_{A(\mathcal{B})}, \sigma_{A(\mathcal{B}}))$.
Empirically, we sampled 100 points ($x$) in $\mathcal{B}$ and 100 points ($x$) in $A(\mathcal{B})$ to estimate the mean and the variance of the corresponding $z_x$.

\vspace{3pt}\noindent\textbf{Estimation of $\kappa_V$}.
In Section~\ref{subsec:alpha_backdoor}, we described $\kappa_V = \frac{Ext(\mathcal{B})}{Ext(A(\mathcal{B}))}$, and briefly introduced how to calculate $Ext(\mathcal{B})$.
Specifically,  for a randomly selected origin $x \in \mathcal{B}$, we sampled a set of $256$ other inputs $\{x_1, x_2, ..., x_{256}\}$.
We then generated a set of $256$ random directions from $x$ as $\{\frac{x_1-x}{\|x_1-x\|_2}, \frac{x_2-x}{\|x_2-x\|_2}, ..., \frac{x_{256}-x}{\|x_{256}-x\|_2}\}$, and along each direction, we used an binary search algorithm to find the extent (i.e., how far the origin $x$ is from the boundary). 
Take a source-specific backdoor (with the source class of $1$ and the target class of $0$) as an example. We used a benign model $f_P$ and a backdoored model $f_b$ to detect the boundary:
$f_P(x) = 1$ and $f_b(x) = 1 \Leftrightarrow x \in \mathcal{B}$; $f_P(x) = 1$ and $f_b(x) = 0 \Leftrightarrow x \in A(\mathcal{B})$.
Finally, for computing $Ext(\mathcal{B})$, we randomly selected $32$ different origins, and set $Ext(\mathcal{B})$ as the average extent among those computed from these origins. The same method was also used to compute $Ext(A(\mathcal{B}))$.

\section{Appendix of Proofs}

\subsection{Proof of Proposition~\ref{prop:h-w1}}
\label{proof:h-w1}

The inequality $0 \leq d_{\mathcal{H}-W1}(\mathcal{D},\mathcal{D'}) \leq 1$ is obvious, and thus we omit the proof. Next, we focus on proving $d_{W1}(\mathcal{D},\mathcal{D'}) \leq d_{\mathcal{H}-W1}(\mathcal{D},\mathcal{D'})$ and $ d_{\mathcal{H}-W1}(\mathcal{D},\mathcal{D'}) = \frac{1}{2} d_{\mathcal{H}}(\mathcal{D},\mathcal{D'})$.

Let's recall the definitions of \textit{Wasserstein-1 distance} and \textit{$\mathcal{H}$-divergence}:

\vspace{2pt}\noindent$\bullet$~\textit{Wasserstein-1 distance:}
Assuming $\Pi(\mathcal{D}, \mathcal{D}')$ is the set of joint distributions $\gamma(x,x')$ whose marginals are $\mathcal{D}$ and $\mathcal{D}'$, respectively, the Wasserstein-1 distance between $\mathcal{D}$ and $\mathcal{D}'$ is
\begin{equation}
\begin{array}{r@{\quad}l}
d_{W1}(\mathcal{D}, \mathcal{D}') = \underset{\gamma \in \Pi(\mathcal{D}, \mathcal{D}')}{\inf} \mathbb{E}_{(x,x') \sim \gamma} \|x-x'\|_2.
\end{array}
\label{eq:w1-distance}
\end{equation}

\vspace{2pt}\noindent$\bullet$~\textit{$\mathcal{H}$-divergence:}
Given two probability
distributions $\mathcal{D}$ and $\mathcal{D}'$ over the same domain $\mathcal{X}$, we consider a hypothetical binary classification on $\mathcal{X}$: $\mathcal{H} = \{h:\mathcal{X} \mapsto \{0,1\}\}$, and denote by $I(h)$ the set for which $h \in \mathcal{H}$ is the characteristic function, i.e., $x \in I(h) \Leftrightarrow h(x) = 1$. The $\mathcal{H}$-divergence between $\mathcal{D}$ and $\mathcal{D}'$ is
\begin{equation}
\begin{array}{r@{\quad}l}
d_{\mathcal{H}}(\mathcal{D}, \mathcal{D}') = 2 \underset{h \in \mathcal{H}}{\sup} |\Pr_{\mathcal{D}}(I(h)) - \Pr_{\mathcal{D}'}(I(h)) |.
\end{array}
\label{eq:h-divergence}
\end{equation}

From the definition of \textit{$\mathcal{H}$-divergence} (Eq.~\ref{eq:h-divergence}), one can directly obtain that $ d_{\mathcal{H}-W1}(\mathcal{D},\mathcal{D'}) = \frac{1}{2} d_{\mathcal{H}}(\mathcal{D},\mathcal{D'})$, using the fact that $\max(\{0,1\}) = \max([0,1]) = 1$ and $\min(\{0,1\}) = \min([0,1]) = 0$.

Referring to the prior work~\cite{w-distance}, we apply the Kantorovich-Rubinstein duality~\cite{villani2009optimal} to transform Eq.~\ref{eq:w1-distance} into its dual form:
\begin{equation}
\begin{array}{r@{\quad}l}
\hat{d}_{W1}(D, D') = \underset{\|h\|_{\mathcal{L}} \leq 1}{\max} [ \mathbb{E}_{\Pr_{\mathcal{D}}} h(x) - \mathbb{E}_{\Pr_{\mathcal{D}'}}h(x)],
\end{array}
\label{eq:empirical-dw1}
\end{equation}
where $\|h\|_{\mathcal{L} \leq 1}$ represents all 1-Lipschitz functions $h : \mathcal{X} \mapsto \mathbb{R}$. Notice that, without loss of generality, we assume $\mathcal{X} = [0,1]^n$ where $n$ is the dimension of the input. Thereby, we can further assume that $h(x)\in [0,1] $ which will not change the maximum value of Eq.~\ref{eq:empirical-dw1}. Under the above assumptions, comparing with $d_{\mathcal{H}-W1}$ (Eq.~\ref{eq:h-w1}), $d_{W1}$ (Eq.~\ref{eq:empirical-dw1}) is additionally constrained by that $h$ should be a 1-Lipschitz function. Hence, $d_{W1}(\mathcal{D},\mathcal{D'}) \leq d_{\mathcal{H}-W1}(\mathcal{D},\mathcal{D'})$ as we desired.

\subsection{Proof of Theorem~\ref{thr:backdoor_general_distance}}
\label{proof:backdoor_general_distance}

\begin{proof}
Supposing $Z_{A,\mathcal{B},t} = \int_{(x,y)}P(x,y)$ where $P(x,y)$ is defined in Eq.~\ref{eq:backdoor_distribution}, $A(\cdot)$ is the trigger function, $\mathcal{B}$ is the backdoor region and $t$ is the target label, we have:

\begin{equation}
\notag
\begin{array}{r@{\quad}l}
&d_{\mathcal{H}-W1}(\mathcal{D}_P, \mathcal{D}_{A,\mathcal{B},t}) \\
=& (1-\frac{1}{Z_{A,\mathcal{B},t}}) (1-\Pr(A(\mathcal{B}))) \\
+ &\underset{h(x,y)}{\max} (\underset{(x,y) \in A(\mathcal{B})\times \mathcal{Y}}{\int} h(x,y)( \Pr_{\mathcal{D}_P}(x,y) - \Pr_{\mathcal{D}_{A,\mathcal{B},t}}(x,y))).\\
\end{array}
\end{equation}

We split $A(\mathcal{B})\times\mathcal{Y}=C_+ \cup C_-$ where $C_+=\{(x,y): \Pr_{\mathcal{D}_P}(x,y) \ge \Pr_{\mathcal{D}_{A,\mathcal{B},t}}(x,y)\}$ and $C_-=\{(x,y): \Pr_{\mathcal{D}_P}(x,y) < \Pr_{\mathcal{D}_{A,\mathcal{B},t}}(x,y)\}$.

\begin{equation}
\notag
\begin{array}{r@{\quad}l}
&d_{\mathcal{H}-W1}(\mathcal{D}_P, \mathcal{D}_{A,\mathcal{B},t}) \\
=& (1-\frac{1}{Z_{A,\mathcal{B},t}}) (1-\Pr(A(\mathcal{B}))) + \Pr(A(\mathcal{B})) - \underset{(x,y)\in C_-}{\int} \Pr_{\mathcal{D}_P}(x,y)\\
- & (\underset{(x,y)\in A(\mathcal{B})\times\mathcal{Y}}{\int} \Pr_{\mathcal{D}_{A,\mathcal{B},t}}(x,y) - \underset{(x,y)\in C_-}{\int} \Pr_{\mathcal{D}_{A,\mathcal{B},t}}(x,y)) \\
=& \underset{(x,y)\in C_-}{\int} (\Pr_{\mathcal{D}_{A,\mathcal{B},t}}(x,y) - \Pr_{\mathcal{D}_P}(x,y))\\
=& \underset{(x,y)\in A(\mathcal{B})\times\mathcal{Y}}{\int}  max(\Pr_{\mathcal{D}'_{A,\mathcal{B},t}}(x,y) - \Pr_{\mathcal{D}_P}(x,y), 0) \\ 
\end{array}
\end{equation}
Thus, we get what's desired. 

\end{proof}

\subsection{Proof of Theorem~\ref{thr:backdoor_distance_range}}
\label{proof:backdoor_distance_range}

\begin{proof}

Let's consider two Lemmas first.

\begin{lemma}~\label{lemma:sum_1}
Supposing there are two sets of $n$ numbers: $\{u_1, u_2, ..., u_n\}$ and $\{v_1, v_2, ..., v_n\}$, if $\forall {a_i} \geq 0$, $\forall {b_i} \geq 0$, $\overset{n}{\underset{i=1}{\sum}} u_i = 1$ and $\overset{n}{\underset{i=1}{\sum}} v_i = 1$, for a number $K \geq 1$, we have 
\begin{equation}
    K \geq \overset{n}{\underset{i=1}{\sum}} \max\{K u_i - v_i,0\} \geq (K-1) + \overset{n}{\underset{i=1}{\sum}} \max\{u_i - v_i,0\}.
\end{equation}
\end{lemma}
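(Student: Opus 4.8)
The statement contains two inequalities, and I would prove them one at a time. The \emph{upper bound} $\sum_{i=1}^{n}\max\{Ku_i-v_i,0\}\le K$ is immediate: since $v_i\ge 0$ and $Ku_i\ge 0$ for every $i$ (recall $K\ge 1$ and $u_i\ge 0$), each summand obeys $\max\{Ku_i-v_i,0\}\le Ku_i$, and summing over $i$ with $\sum_{i=1}^n u_i=1$ gives $\sum_{i=1}^{n}\max\{Ku_i-v_i,0\}\le K\sum_{i=1}^n u_i=K$.

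For the \emph{lower bound} the main device is the elementary identity $\max\{a,0\}=a+\max\{-a,0\}$. Taking $a=Ku_i-v_i$, summing over $i$, and using $\sum_{i=1}^n(Ku_i-v_i)=K\cdot 1-1=K-1$, we obtain
\begin{equation}
\notag
\sum_{i=1}^{n}\max\{Ku_i-v_i,0\}=(K-1)+\sum_{i=1}^{n}\max\{v_i-Ku_i,0\},
\end{equation}
so it remains to show $\sum_{i=1}^{n}\max\{v_i-Ku_i,0\}\ge\sum_{i=1}^{n}\max\{u_i-v_i,0\}$. I would attack this by splitting the index set into $P=\{i:u_i\ge v_i\}$ and $N=\{i:u_i<v_i\}$: on $P$ one has $v_i-Ku_i\le v_i-u_i\le 0$, so the left-hand sum is actually supported on $N$, while $\sum_{i=1}^{n}\max\{u_i-v_i,0\}=\sum_{i\in P}(u_i-v_i)$; the two sides should then be related through the mass balance $\sum_{i\in P}(u_i-v_i)=\sum_{i\in N}(v_i-u_i)$, which is exactly what $\sum u_i=\sum v_i=1$ provides.

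I expect this last comparison to be the real obstacle. It cannot be carried out index by index: on an index with $v_i>Ku_i$ the left summand is positive while the right one vanishes, and on an index with $u_i>v_i$ the reverse happens, so a naive termwise estimate only recovers the weaker bound $\sum_{i=1}^{n}\max\{Ku_i-v_i,0\}\ge K-1$. Establishing the sharper inequality therefore forces the normalizations $\sum u_i=\sum v_i=1$ to be used globally — for instance via a transport/bookkeeping argument that tracks how the excess mass of $u$ over $v$ is reabsorbed across the two index sets, possibly in combination with the extra structure carried by the specific $u_i$, $v_i$ arising in the application (Theorem~\ref{thr:backdoor_distance_range}). Once that comparison is available, the lower bound follows by substituting it back into the displayed identity, completing the proof.
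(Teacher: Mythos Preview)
Your upper bound argument is correct and matches the paper's. For the lower bound, your reduction to showing $\sum_i \max\{v_i - Ku_i, 0\} \geq \sum_i \max\{u_i - v_i, 0\}$ is valid, but the step you leave open cannot be completed: that inequality --- and hence the lemma's lower bound itself --- is \emph{false} as stated. Take $n=2$, $u=(0.3,\,0.7)$, $v=(0.4,\,0.6)$, $K=2$: then $\sum_i \max\{Ku_i - v_i, 0\} = 0.2 + 0.8 = 1.0$, while $(K-1)+\sum_i\max\{u_i-v_i,0\} = 1 + 0.1 = 1.1$. Your instinct that the comparison ``cannot be carried out index by index'' is exactly right, and not because a cleverer argument is required --- the paper's own termwise claim $\max\{Ku_i - v_i, 0\} \geq \max\{u_i - v_i, 0\} + (K-1)u_i$ fails on the same example (at $i=1$: $0.2 \not\geq 0 + 0.3$).

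So your proposal is incomplete precisely where the statement is unprovable. Only the weaker bound $\sum_i \max\{Ku_i-v_i,0\}\ge K-1$ (which you already derive) holds in general. If you want to salvage the application to Theorem~\ref{thr:backdoor_distance_range}, you would need to exploit the specific structure of the $u(x,y)$, $v(x,y)$ appearing there, as you already suspect; the abstract lemma as written cannot carry the argument.
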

\begin{proof}
One can easily get the desired through $\max\{K u_i - v_i,0\} \geq \max\{u_i - v_i,0\} + (K-1)u_i$ and $\max\{K u_i - v_i,0\} \le K u_i$
\end{proof}

\begin{lemma}~\label{lemma:kappa_beta}
Supposing $\kappa \geq 1$, $\frac{1}{\kappa} \leq \beta \leq 1$ and $Z_{A,\mathcal{B},t} = 1 - \frac{1}{\kappa} \Pr(B) + \beta \Pr(B)$, we have $\frac{\beta}{Z_{A,\mathcal{B},t}} \geq \frac{1}{\kappa}$.
\end{lemma}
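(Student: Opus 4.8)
The plan is to clear the denominator and reduce the claim to the hypothesis $\beta \geq \frac{1}{\kappa}$ by a short chain of equivalences. First I would observe that
$Z_{A,\mathcal{B},t} = 1 + (\beta - \frac{1}{\kappa})\Pr(\mathcal{B}) \geq 1 > 0$,
since $\beta \geq \frac{1}{\kappa}$ and $\Pr(\mathcal{B}) \geq 0$; this is exactly the fact $Z_{A,\mathcal{B},t} \geq 1$ already noted in the excerpt, and it lets me multiply across the inequality safely. Hence $\frac{\beta}{Z_{A,\mathcal{B},t}} \geq \frac{1}{\kappa}$ is equivalent to $\kappa\beta \geq Z_{A,\mathcal{B},t} = 1 - \frac{1}{\kappa}\Pr(\mathcal{B}) + \beta\Pr(\mathcal{B})$.

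Next I would rearrange this target inequality, collecting the $\beta$ terms, into the form
$\beta\,\bigl(\kappa - \Pr(\mathcal{B})\bigr) \geq 1 - \frac{1}{\kappa}\Pr(\mathcal{B}) = \frac{1}{\kappa}\,\bigl(\kappa - \Pr(\mathcal{B})\bigr)$.
Because $\Pr(\mathcal{B})$ is a probability, $\Pr(\mathcal{B}) \leq 1 \leq \kappa$, so the common factor $\kappa - \Pr(\mathcal{B})$ is nonnegative. When it is strictly positive, the displayed inequality is equivalent to $\beta \geq \frac{1}{\kappa}$, which holds by hypothesis; when it is zero (forcing $\kappa = 1$ and $\Pr(\mathcal{B}) = 1$, whence $Z_{A,\mathcal{B},t} = \beta$ and $\frac{\beta}{Z_{A,\mathcal{B},t}} = 1 = \frac{1}{\kappa}$), both sides vanish and equality holds.

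There is essentially no obstacle here: the only points deserving a word of care are checking $Z_{A,\mathcal{B},t} > 0$ before dividing and disposing of the degenerate case $\kappa = \Pr(\mathcal{B})$ separately. I note also that the hypothesis $\beta \leq 1$ is not actually used in this lemma. I would write the final argument as the single equivalence chain $\frac{\beta}{Z_{A,\mathcal{B},t}} \geq \frac{1}{\kappa} \iff \kappa\beta \geq Z_{A,\mathcal{B},t} \iff \beta(\kappa - \Pr(\mathcal{B})) \geq \frac{1}{\kappa}(\kappa - \Pr(\mathcal{B})) \iff \beta \geq \frac{1}{\kappa}$, with the last step justified by $\kappa - \Pr(\mathcal{B}) \geq 0$.
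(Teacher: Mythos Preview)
Your proof is correct and follows essentially the same route as the paper: both clear the denominator to reduce the claim to $\kappa\beta - Z_{A,\mathcal{B},t} \geq 0$, then factor this difference as $\frac{1}{\kappa}(\kappa - \Pr(\mathcal{B}))(\kappa\beta - 1)$ and observe both factors are nonnegative under the hypotheses. Your write-up is in fact slightly more careful than the paper's, since you explicitly check $Z_{A,\mathcal{B},t} > 0$ before dividing and handle the degenerate case $\kappa = \Pr(\mathcal{B})$; your remark that $\beta \leq 1$ is unused is also correct.
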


\begin{proof}

First, we have following equations.
\begin{equation}
\begin{array}{r@{\quad}l}
& \kappa \beta - Z_{A,\mathcal{B},t} \\
= & \frac{1}{\kappa}(\kappa^2 \beta - \kappa \beta \Pr(\mathcal{B}) - \kappa + \Pr(\mathcal{B})) \\
= & \frac{1}{\kappa}(\kappa - \Pr(\mathcal{B})(\kappa \beta - 1) \\
\end{array}
\end{equation}
Since $\frac{1}{\kappa} \leq \beta$, we have $\kappa \beta \geq 1$. Besides, since $\kappa \geq 1$, we have $\kappa - \Pr(\mathcal{B}) > 0$.
Putting them together, we have $\kappa \beta \geq Z_{A,\mathcal{B},t}$, or $\frac{\beta}{Z_{A,\mathcal{B},t}} \geq \frac{1}{\kappa}$ as desired.
\end{proof}

Now, we use above two Lemmas to prove Theorem~\ref{thr:backdoor_distance_range}.
First, we denote $u(x,y)$ as $\frac{\Pr_{\mathcal{D}_{A,\mathcal{B},t}}(x)}{\Pr_{\mathcal{D}_{A,\mathcal{B},t}}(A(\mathcal{B}))}\Pr_{\mathcal{D}_{A,\mathcal{B},t}}(y|x)$ and $v(x,y)$ as $\frac{\Pr(x)}{\Pr(A(\mathcal{B}))}\Pr_{\mathcal{D}_P}(y|x)$. Apperently, $u(x,y) \geq 0$ and $v(x,y) \geq 0$. Besides, $\underset{(x,y) \in A(\mathcal{B}) \times \mathcal{Y}}{\int}  u(x,y) = 1$ and $\underset{(x,y) \in A(\mathcal{B}) \times \mathcal{Y}}{\int} v(x,y) = 1$.

According to Lemma~\ref{lemma:backdoor_A_beta}, we have
\begin{equation}
\notag
\begin{array}{l@{\quad}l}
&d_{\mathcal{H}-W1}(\mathcal{D}_P, \mathcal{D}_{A,\mathcal{B},t}) \\
= &\Pr(\mathcal{B}) \underset{(x,y) \in A(\mathcal{B}) \times \mathcal{Y}}{\int} \max\{\frac{\beta}{Z_{A,\mathcal{B},t}} u(x,y)-\frac{1}{\kappa}v(x,y), 0\} \\
= & \frac{1}{\kappa} \Pr(\mathcal{B}) \underset{(x,y) \in A(\mathcal{B}) \times \mathcal{Y}}{\int} \max\{K u(x,y)-v(x,y), 0\}, \\
\end{array}
\end{equation}
\noindent where we set $K = \frac{\beta}{Z_{A,\mathcal{B},t}} / \frac{1}{\kappa}$.
According to Lemma~\ref{lemma:kappa_beta}, we have $\frac{\beta}{Z_{A,\mathcal{B},t}} \geq \frac{1}{\kappa}$ and, thus, $K \geq 1$. Applying Lemma~\ref{lemma:sum_1}, we further get
\begin{equation}
\notag
\begin{array}{l@{\quad}l}
&d_{\mathcal{H}-W1}(\mathcal{D}_P, \mathcal{D}_{A,\mathcal{B},t}) \\
\geq & \frac{1}{\kappa}\Pr(\mathcal{B}) ( (K-1)+  \underset{(x,y) \in A(\mathcal{B}) \times \mathcal{Y}}{\int} \max\{u(x,y)-v(x,y), 0\} )\\
= & \frac{1}{\kappa}\Pr(\mathcal{B}) ( (K-1)+ S).\\
\end{array}
\end{equation}
Taking $K = \frac{\beta}{Z_{A,\mathcal{B},t}} / \frac{1}{\kappa}$ into the last equation, we have 
\begin{equation}
\notag
\begin{array}{l@{\quad}l}
 (\frac{\beta}{Z_{A,\mathcal{B},t}} - \frac{1}{\kappa}(1- S)) \Pr(\mathcal{B}) \leq d_{\mathcal{H}-W1}(\mathcal{D}_P, \mathcal{D}_{A,\mathcal{B},t})
\end{array}
\end{equation}
as desired.
Similarly, we get $d_{\mathcal{H}-W1}(\mathcal{D}_P, \mathcal{D}_{A,\mathcal{B},t}) \leq \frac{\beta}{Z_{A,\mathcal{B},t}}\Pr(\mathcal{B})$. This completes this proof.

\end{proof}

\subsection{Proof of Corollary~\ref{coro:beta_effect}}
\label{proof:beta_effect}

\begin{proof}

After calculation, we get that the derivative of $\frac{\beta}{Z_{A,\mathcal{B},t}}$ w.r.t. $\beta$ is $\frac{1}{Z_{A,\mathcal{B},t}^2} (1-\frac{1}{\kappa}\Pr(\mathcal{B}))$. Since $\kappa \geq 1$, we have $\frac{1}{\kappa}\Pr(\mathcal{B}) \leq 1$. Thus, the derivative is non-negative, which indicates that $\frac{\beta}{Z_{A,\mathcal{B},t}}$ increases along with the increasing $\beta$. Considering that $\beta \in [\frac{1}{\kappa}, 1]$, we obtain that $\frac{\beta}{Z_{A,\mathcal{B},t}}$ achieves the lower-bound $\frac{1}{\kappa}$ when $\beta = \frac{1}{\kappa}$, and 
achieves the upper-bound $\frac{\kappa \Pr(\mathcal{B})}{\kappa + \kappa\Pr(\mathcal{B}) - \Pr(\mathcal{B})}$ when $\beta = 1$. Taking these results into Theorem~\ref{thr:backdoor_distance_range}, we get what's desired.

\end{proof}

\subsection{Proof of Corollary~\ref{coro:kappa_effect}}
\label{proof:kappa_effect}

\begin{proof}

Let's first consider the upper-bound of $d_{\mathcal{H}-W1}(\mathcal{D}_P, \mathcal{D}_{A,\mathcal{B},t})$, that is $\Pr(\mathcal{B}) \frac{\beta}{Z_{A,\mathcal{B}, t}}$ according to Theorem~\ref{thr:backdoor_distance_range}. After calculation, we get that the derivative of $\frac{\beta}{Z_{A,\mathcal{B}, t}}$ w.r.t. $\kappa$ is $\frac{-\beta}{Z_{A,\mathcal{B}, t}^2} < 0$. Thus, $\frac{\beta}{Z_{A,\mathcal{B}, t}}$ increases monotonously along with the decreasing of $\kappa$. Considering $\frac{1}{\kappa} \leq \beta$, we obtain that $\frac{\beta}{Z_{A,\mathcal{B}, t}}$ reaches its upper-bound $\beta$ when $\kappa=\frac{1}{\beta}$, and thus $d_{\mathcal{H}-W1}(\mathcal{D}_P, \mathcal{D}_{A,\mathcal{B},t}) \le \beta \Pr(\mathcal{B})$

Let's now consider the lower-bound of $d_{\mathcal{H}-W1}(\mathcal{D}_P, \mathcal{D}_{A,\mathcal{B},t})$, that is $\Pr(\mathcal{B}) (\frac{\beta}{Z_{A,\mathcal{B}, t}} - \frac{1}{\kappa}(1-S))$ according to Theorem~\ref{thr:backdoor_distance_range}.
However, $\frac{1}{\kappa} S$ is $o(\frac{1}{\kappa})$ when $d_{\mathcal{H}-W1}(\mathcal{D}_P, \mathcal{D}_{A,\mathcal{B},t})$ becomes close to its lower-bound, since $S \to 0$ in this case. Thus, we only need to consider the relation between $\frac{\beta}{Z_{A,\mathcal{B}, t}} - \frac{1}{\kappa}$ and $\kappa$. Particularly, we have that $\frac{\beta}{Z_{A,\mathcal{B}, t}} - \frac{1}{\kappa} = 0$ when $\kappa = \frac{1}{\beta}$.

After calculation, we get the derivative of $\frac{\beta}{Z_{A,\mathcal{B}, t}} - \frac{1}{\kappa}$ w.r.t. $\kappa$ is
\begin{equation}
\notag
\begin{array}{l@{\quad}l}
\frac{\kappa^2(1+\beta^2\Pr(\mathcal{B})^2 + \beta \Pr(\mathcal{B})) - 2\kappa (\Pr(\mathcal{B}) + 2 \beta \Pr(\mathcal{B})^2) + \Pr(\mathcal{B})^2}{Z_{A,\mathcal{B},t}^2 \kappa^2}.
\end{array}
\end{equation}
\noindent The denominator is strictly positive and numerator is a quadratic function of $\kappa$. After calculation, we get its two roots $r1$ and $r2$:
\begin{equation}
\notag
\begin{array}{l@{\quad}l}
r1 = \frac{1}{\beta} - \frac{1 - \sqrt{\beta^3 \Pr(\mathcal{B})^3}}{\beta (1+\beta^2\Pr(\mathcal{B})^2 + \beta \Pr(\mathcal{B}))}\\
r2 = \frac{1}{\beta} - \frac{1 + \sqrt{\beta^3 \Pr(\mathcal{B})^3}}{\beta (1+\beta^2\Pr(\mathcal{B})^2 + \beta \Pr(\mathcal{B}))}\\
\end{array}
\end{equation}
\noindent Apparently, $r1 < \frac{1}{\beta}$ and $r2 < \frac{1}{\beta}$. Considering the coefficient of the quadratic term is positive, we obtain that $\frac{\beta}{Z_{A,\mathcal{B}, t}} - \frac{1}{\kappa}$ increases monotonously along with the increasing $\kappa$ when $\kappa \geq \frac{1}{\beta}$. This indicates that $\frac{\beta}{Z_{A,\mathcal{B}, t}} - \frac{1}{\kappa}$ reaches its lower-bound $0$ when $\kappa = \frac{1}{\beta}$. In this case, 
$d_{\mathcal{H}-W1}(\mathcal{D}_P, \mathcal{D}_{A,\mathcal{B},t})$ reaches the minimum value $\beta S \Pr(\mathcal{B})$ as desired.


\end{proof}

\subsection{Proof of Lemma~\ref{lemma:weight_distance}}
\label{proof:weight_distance}

\begin{proof}
Supposing $f_P=c_P \circ g_P$ and $f_b=c_b \circ g_b$, we have
\begin{equation}
\notag
\begin{array}{r@{\quad}l}
\alpha^2 =& (\frac{\beta}{m L} \underset{x \in X}{\sum}   \underset{y \in \mathcal{Y}}{\sum}  max(g_P(x)_y-g_b(x)_y,0))^2 \\
\leq & (\frac{\beta}{m L } \underset{x \in X}{\sum} \underset{y \in \mathcal{Y}}{\sum}  |g_P(x)-g_b(x)|)^2 \\
\leq & (\frac{\beta}{m} \underset{x \in X}{\sum} \frac{1}{\sqrt{L}}\|g_P(x)_y-g_b(x)_y\|_2)^2 \\
\leq & \frac{\beta^2}{m L} \underset{x \in X}{\sum} \|g_P(x)-g_b(x)\|_2^2 \\
= & \frac{\beta^2}{m L} \|\phi(X)(\omega_b-\omega_P)\|_2^2 \\
\leq & \frac{\beta^2}{m L} \|\phi(X)\|_2^2 \|\omega_b-\omega_P\|_2^2.
\end{array}
\end{equation}
The inequality of arithmetic and geometric means are used to obtain the third and forth transformations. The cauchy-schwarz inequality are used to obtain the last transformation. After simple math, the lower-bound of $ \|\omega_b-\omega_P\|_2$ will be derived as what's desired.

\end{proof}

\subsection{Proof of Lemma~\ref{lemma:normal_dhw1}}
\label{proof:normal_dhw1}

\begin{proof}

Specifically, a test statistic $T^2$ is calculated as
\begin{equation}
\notag
\begin{array}{r@{\quad}l}
T^2&=\frac{n_P n_{tg}}{n_P+n_{tg}}d_M(m_P,m_b)^2 \\
&\leq \frac{n_P n_{tg}}{n_P+n_{tg}} \lambda_{max} \|m_P-m_b\|_2^2,
\end{array}
\end{equation}
where $d_M(m_P,m_b)=\sqrt{(m_P-m_b)^T\Sigma^{-1}(m_P-m_b)}$ is the Mahalanobis distance~\cite{mclachlan1999mahalanobis} and $\lambda_{max}$ is the largest eigenvalue of $\Sigma^{-1}$. 

Next, we demonstrate that $\|m_P-m_b\|_2 \leq d_{\mathcal{H}-W1}(\mathcal{N}_P,\mathcal{N}_b)$ when $\mathcal{N}_P = \mathcal{N}(m_P, \sigma)$ and $\mathcal{N}_b = \mathcal{N}(m_b, \sigma)$.
Actually, if $\|m_P-m_b\|_2 = d_{W1}(\mathcal{N}_P, \mathcal{N}_b)$, we can easily prove the inequality according to Proposition~\ref{prop:h-w1} that illustrates $d_{W1}(\mathcal{N}_P,\mathcal{N}_b) \leq d_{\mathcal{H}-W1}(\mathcal{N}_P,\mathcal{N}_b)$.
Next, we strictly prove $\|m_P-m_b\|_2 = d_{W1}(\mathcal{N}_P, \mathcal{N}_b)$.

According to the Jensen's inequality~\cite{ruel1999jensen}, $\mathbb{E}\|x-x'\|_2 \geq \|\mathbb{E}(x-x')\|_2 = \|m_P-m_b\|_2$. Thus $d_{W1}(\mathcal{N}_P, \mathcal{N}_b) \geq \|m_P-m_b\|_2$. 
Again by Jensen's inequality, $(\mathbb{E}\|x-x'\|_2)^2 \leq \mathbb{E}\|x-x'\|_2^2$. Thus, $d_{W1}(\mathcal{N}_P, \mathcal{N}_b) \leq d_{W2}(\mathcal{N}_P, \mathcal{N}_b)$ where $d_{W2}(\cdot,\cdot)$ is the Wasserstein-2 distance. As proved in paper~\cite{dowson1982frechet}, the Wasserstein-2 distance between two normal distribution can be calculated by:
\begin{equation}
\notag
\begin{array}{r@{\quad}l}
d_{W2}^2(\mathcal{N}_P, \mathcal{N}_b)
= \|m_P-m_b\|_2^2+ tr(\Sigma_P+\Sigma_b-2(\Sigma_P\Sigma_b)^{\frac{1}{2}})
\end{array}
\end{equation}
Because $\Sigma_P=\Sigma_b=\Sigma$ as we assumed, $d_{W2}^2(\mathcal{N}_P, \mathcal{N}_b) = \|m_P-m_b\|_2^2$. 
Thus, putting the above together, we get $\|m_P-m_b\|_2 \leq d_{W1}(\mathcal{N}_P, \mathcal{N}_b) \leq \|m_P-m_b\|_2$, which indicates $\|m_P-m_b\|_2 = d_{W1}(\mathcal{N}_P, \mathcal{N}_b)$ as desired.

Due to that $d_{\mathcal{H}-W1}(\mathcal{D}_P, \mathcal{D}_{A,\mathcal{B},t})$ is the maximum value among all possible separation functions, we have
\begin{equation}
\notag
\begin{array}{r@{\quad}l}
d_{\mathcal{H}-W1}(\mathcal{D}_P, \mathcal{D}_{A,\mathcal{B},t}) \geq d_{\mathcal{H}-W1}(X_P, X_b) =  d_{\mathcal{H}-W1}(\mathcal{N}_P,\mathcal{N}_b).
\end{array}
\end{equation}
Thus $\alpha \geq \|m_P-m_b\|_2$. Taking this into our original inequality, we get $T^2 \leq \frac{n_P n_{tg}}{n_P+n_{tg}} \lambda_{max}  \alpha^2$ as desired.

\end{proof}

\ignore{
Theorem 1

Consider we have $n_t$ trojaned data with target class label, $n_c$ benign target class data. In the penultimate layer, their representation are $X_t \in \mathbb{R}^{p\times n_t}$ and $X_c \in \mathbb{R}^{p\times n_c}$, and they subject to two different unkown normal distributions; $Z_t \sim \mathcal{N}(m_t,\Sigma_t) = \nu,Z_c \sim \mathcal{N}(m_c,\Sigma_c) = \mu$. Let $\hat{Z_t} \sim \mathcal{N}(\hat{m_t},\hat{\Sigma_t}),\hat{Z_c} \sim\mathcal{N}(\hat{m_c},\hat{\Sigma_c})$ denote the $Z_t, Z_c$ estimation. Assume $\Sigma_t = \Sigma_c$. $n_t, n_c$ and wasserstein distance between $Z_c, Z_t$ has negative relation with the test statistics of $H_0: m_c = m_t$ 
 
Proof:

Recall the definition of wasserstein distance($W2$ distance) is
\begin{equation*}
    d = W_{2}(\mu ; \nu):=\inf \mathbb{E}\left(\|Z_c-Z_t\|_{2}^{2}\right)^{1 / 2}
\end{equation*}
where $Z_c \sim \mu ,  Z_t \sim \nu$, rewrite it as:
\begin{equation*}
    d=\left\|m_{c}-m_{t}\right\|_{2}^{2}+\operatorname{Tr}\left(\Sigma_{c}+\Sigma_{t}-2\left(\Sigma_{c}^{1 / 2} \Sigma_{t} \Sigma_{c}^{1 / 2}\right)^{1 / 2}\right)
\end{equation*}
Since $\operatorname{Tr}\left(\left(\Sigma_{c}^{1 / 2} \Sigma_{t} \Sigma_{c}^{1 / 2}\right)^{1 / 2}\right)=\operatorname{Tr}\left(\left(\Sigma_{t}^{1 / 2} \Sigma_{c} \Sigma_{t}^{1 / 2}\right)^{1 / 2}\right)$ we then have:
\begin{equation*}
    W_{2}(\mu ; \nu)=\left\|m_{c}-m_{t}\right\|_{2}^{2}+\left\|\Sigma_{c}^{1 / 2}-\Sigma_{t}^{1 / 2}\right\|_{\text {Frobenius }}^{2}
\end{equation*}

We now perform a two-sample Hotelling’s T2 to test $H_0 : m_c = m_t$. The test statistcs $T_2$ has:
\begin{equation}
    T_2=\frac{n_{c} n_{t}}{n_{c}+n_{t}}\left(\hat{m}_{c}-\hat{m}_{t}\right)^{\prime} \hat\Sigma_{p l}^{-1}\left(\hat{m}_{c}-\hat{m}_{t}\right)
\end{equation}
where:
\begin{equation*}
    \hat\Sigma_{p l}=\frac{\left(n_{c}-1\right) \hat\Sigma_c+\left(n_{t}-1\right) \hat\Sigma_t}{n_{c}+n_{t}-2}
\end{equation*}
and $\hat\Sigma_c,\hat\Sigma_t$ are the estimated covariance matrices.
Note that the asymptotic distribution of proposed statistic is standard normal distribution when number of random variables approach infinity.
Since we assume they have the same covariance matrix. Then the W Distance becomes the l2 norm between $m_c, m_t$. So we will have the following inequality of $T_2$
\begin{equation*}
    \frac{n_{c} n_{t}}{n_{c}+n_{t}}\lambda_{min}||\hat{m}_c-\hat{m}_t||^2_2 \leq T_2 \leq \frac{n_{c} n_{t}}{n_{c}+n_{t}}\lambda_{max}||\hat{m}_c-\hat{m}_t||^2_2 
\end{equation*}
\begin{equation*}
    \frac{n_{c} n_{t}}{n_{c}+n_{t}}\lambda_{min}d \leq T_2 \leq \frac{n_{c} n_{t}}{n_{c}+n_{t}}\lambda_{max}d 
\end{equation*}
where $\lambda_{min}, \lambda_{max}$ are smallest and largest eigen value of $\hat\Sigma_{p l}^{-1}$. Which indicate that W distance of two distribution and number of observations has negative relation with both the upper bound and lower bound of test statistics $T_2$. 

}

\ignore{
Theorem 2
Consider in a NTK space $\mathbb{R}^m$, we consider source class as $s$, target class as $|s-1|$, where $s \in \{ 0,1\}$.  We have a primary task dataset $D_c = (X_c,Y_c)$, where $X_c \in \mathbb{R}^{n\times m}$, $Y_c \in \{0,1 \}^n$ . Using $D_c$, we can built a logistic regression model, its weight parameter vector is $\omega$ . Two additional dataset with all trojan data $D_\alpha = (X_\alpha,Y_\alpha)$, $D_\beta = (X_\beta, Y_\beta)$. where $X_\alpha , X_\beta \in \mathbb{R}^{n'}$, $Y_\alpha, Y_\beta \in \{ |s-1|\}^{n'}$. $D_\alpha$ represent the additional dataset which are more close to the boundary trained by $D$. $D_\beta$represent the one which has more distance to the classifier $\omega$. For any instance $x_{\alpha,i}$ in $D_\alpha$ and $x_{\beta,j}$ in $D_\beta$;  

$$
(-\omega)^{T}x_{\alpha,i} +b_{1,\alpha}\omega = (-\omega)^{T}x_{\beta,j}+b_{1,\beta}\omega
$$

In such case, if we add $D_\alpha$ and $D_\beta$ to $D_c$ seperately, train a new logistic regression, the new parameter vector $\omega_\alpha$ and $\omega_\beta$. We then will have:

$$
\omega \cdot \omega_\alpha \geq \omega \cdot \omega_\beta
$$

Proof:

Let $\delta_\alpha$denote the first update by using $D_\alpha$ by using Newton Raphson, and use $\omega$ as initial state. We then will have the first update:

$$
\begin{aligned}\delta_\alpha &= \mathbf{H_\alpha}^{-1}\nabla  E_\alpha(\mathbf{\omega}) \\\ &= (X_\alpha^{\mathrm{T}} \mathbf{R_\alpha} X_\alpha)^{-1}X_\alpha^T(\mathbf{\tilde{y}_\alpha}-Y_\alpha) \end{aligned}
$$

Then the inner product between $\delta_\alpha$ and $\omega$ is:

$$
S(\delta_\alpha,\omega) = S_\alpha = (X_\alpha^{\mathrm{T}} \mathbf{R_\alpha} X_\alpha)^{-1}X_\alpha^T(\mathbf{\tilde{y}_\alpha}-Y_\alpha)  \cdot\omega
$$

Where $\mathbf{R}_\alpha$ is a $n_1 \times n_1$ diagnoal matrix, the $i^{th}$ row and column is  $\mathbf{R}^{(i,i)}_\alpha = \tilde{\mathbf{y}}_{\alpha,i}(1-\tilde{\mathbf{y}}_{\alpha,i}) =b_{\alpha,i} \mathbf{I}$ . And    $\tilde{\mathbf{y}}_\alpha = X_\alpha\omega$   , $\tilde{\mathbf{y}}_{\alpha,i}, Y_{\alpha,i}$ is the $i^{th}$element of $\tilde{\mathbf{y}}_{\alpha}$ and $Y_{\alpha}$.

Similarly:

$$
S(\delta_\beta,\omega) = S_\beta = (X_\beta ^{\mathrm{T}} \mathbf{R_\beta} X_\beta)^{-1}X_\beta ^T(\mathbf{\tilde{y}_\beta}-Y_\beta) \cdot \omega
$$

$$
\begin{aligned}\frac{S_\alpha}{S_\beta} &= \frac{(X_\alpha^{\mathrm{T}} \mathbf{R_\alpha} X_\alpha)^{-1}X_\alpha^T(\mathbf{\tilde{y}_\alpha}-Y_\alpha) \cdot \omega }{(X_\alpha ^{\mathrm{T}} \mathbf{R_\beta} X_\beta)^{-1}X_\beta ^T(\mathbf{\tilde{y}_\beta}-Y_\beta) \cdot \omega} \\ &= \frac{r_{\alpha} X_{\alpha}^{-1}(X_{\alpha}^{T})^{-1}X_{\alpha}^{T}(\mathbf{\tilde{y}_\alpha}-Y_\alpha) \cdot \omega}{r_{\beta} X_{\beta}^{-1}(X_{\beta}^{T})^{-1}X_{\beta}^{T}(\mathbf{\tilde{y}_\beta}-Y_\beta) \cdot \omega} \\ &= \frac{r_{\alpha}X_{\alpha}^{-1}(\mathbf{\tilde{y}_\alpha}-Y_\alpha) \cdot \omega}{r_{\beta}X_{\beta}^{-1}(\mathbf{\tilde{y}_\beta}-Y_\beta) \cdot \omega}\end{aligned}
$$

$$
\frac{S_\alpha}{S_\beta} = \frac{r_{\alpha}X_{\alpha}^{-1}(\mathbf{\tilde{y}_\alpha}-Y_\alpha) \cdot \omega}{r_{\beta}(X_{\alpha} + \omega^{-1}b_1')^{-1}(\mathbf{\tilde{y}_\beta}-Y_\beta) \cdot \omega}
$$

where $b_1' = \frac{b_{1,\beta}-b_{1,\alpha}}{\omega^{-1}} \in \mathbb{R}^m$, and let $b_1 = \omega^{-1}b_1' = b_{1,\beta}-b_{1,\alpha}$ 

$$
\frac{S_\alpha}{S_\beta} = \frac{r_{\alpha}X_{\alpha}^{-1}(\mathbf{\tilde{y}_\alpha}-Y_\alpha) \cdot \omega}{r_{\beta}(X_{\alpha} + b_1)^{-1}(\mathbf{\tilde{y}_\beta}-Y_\beta) \cdot \omega}
$$

Applying ****, the term $(X_\alpha+b_1)^{-1}$ in the denominator of the equation above can be rewrite as:

$$
(X_\alpha+b_1)^{-1} = X_\alpha^{-1}-\frac{1}{1+g}X_\alpha^{-1}b_1X_\alpha^{-1}
$$

where $g = tr(b_1X_\alpha^{-1})$，Now 

$$
\begin{aligned} \frac{S_\alpha}{S_\beta} &= \frac{r_{\alpha}X_{\alpha}^{-1}(\mathbf{\tilde{y}_\alpha}-Y_\alpha) \cdot \omega}{(r_{\beta}X_\alpha^{-1}-\frac{r_{\beta}}{1+g}X_\alpha^{-1}b_1X_\alpha^{-1})(\mathbf{\tilde{y}_\beta}-Y_\beta) \cdot \omega} \end{aligned}
$$

Since $|r_\alpha|>|r_\beta|$, $|\mathbf{\tilde{y}_\alpha}-Y_\alpha| > |\mathbf{\tilde{y}_\beta}-Y_\beta|$, and $\sum_{i = 1}^n X_{\alpha,i}^{-1}b_1X_{\alpha,i}^{-1} >0$, so  $\frac{S_\alpha}{S_\beta}>0$ which proofed $\omega \cdot \omega_\alpha \geq \omega \cdot \omega_\beta$

}

\ignore{
\subsection{Proof of Lemma~\ref{lemma:task_drift}}
\begin{proof}
First, we have the following inequalities.
\begin{equation}
\notag
\begin{array}{r@{\quad}l}
 &  \underset{x \in X}{\sum}   \|f_P(x)-f_b(x)\|_2^2  \\
\leq & (\underset{x \in X}{\sum} \|f_P(x)_j-f_b(x)\|_2)^2 \\
\leq & (\underset{x \in X}{\sum} | \underset{j \in \mathcal{Y}}{\sum}  f_P(x)_j-f_b(x)_j| )^2 \\
\leq & (\underset{x \in X}{\sum} \underset{j \in \mathcal{Y}}{\sum} |f_P(x)_j-f_b(x)_j| )^2.
\end{array}
\end{equation}
Splitting $A(\mathcal{B})\times\mathcal{Y}=C_+ \cup C_-$ where $C_+=\{(x,j): f_P(x)_j \ge f_b(x)_j\}$ and $C_-=\{(x,j): f_P(x)_j < f_b(x)_j\}$ and referring $|C_-| = n_{C_-}$, we get the following equations when $m$ is large.
\begin{equation}
\notag
\begin{array}{r@{\quad}l}
 & \frac{1}{mL} \underset{x \in X}{\sum} \underset{j \in \mathcal{Y}}{\sum} |f_P(x)_j-f_b(x)_j| \\
= & (1-\frac{1}{Z_{A,\mathcal{B},t}}) (1-\Pr(A(\mathcal{B}))) + \Pr(A(\mathcal{B})) \\ 
 & - \frac{1}{n_{C_-}} \sum_{(x,j) \in C_-}f_P(x)_j \\
 & - (\frac{1}{Z_{A,\mathcal{B},t}} \Pr(\mathcal{B}) - \frac{1}{n_{C_-}}\sum_{(x,j) \in C_-} f_b(x)_j) \\
 & + \frac{1}{n_{C_-}} \sum_{(x,j) \in C_-} (f_b(x)_j - f_P(x)_j) \\
= & \frac{2}{n_{C_-}} \sum_{(x,j) \in C_-} (f_b(x)_j - f_P(x)_j) \\
= & 2 d_{\mathcal{H}-W1}(\mathcal{D}_P, \mathcal{D}_{A,\mathcal{B},t}).
\end{array}
\end{equation}
Thus, $\underset{x \in X}{\sum} \|f_P(x)-f_b(x)\|_2^2 \leq (2mL \cdot d_{\mathcal{H}-W1}(\mathcal{D}_P, \mathcal{D}_{A,\mathcal{B},t}))^2$ and $\|\delta^{\mathcal{T}_P \to \mathcal{T}_{A,\mathcal{B},t}}(X)\|_2 \leq 2mL \cdot d_{\mathcal{H}-W1}(\mathcal{D}_P, \mathcal{D}_{A,\mathcal{B},t})$ when $m$ is large as desired.
\end{proof}

}

\section{Appendix of TSA on Backdoor Unlearning}
\label{app:backdoor_unlearning}

In addition to detection, the defender may also want to remove the backdoor from an infected model, either after detecting the model or through ``blindly'' unlearning the backdoor should it indeed be present in the model.  

We classify unlearning methods for backdoor removal into two categories:  targeted unlearning (for removing detected backdoors) and ``blind'' unlearning. 

\vspace{3pt}\noindent\textbf{``Blind'' unlearning}.
Such unlearning methods can be further classified into two sub-categories: fine-tuning and robustness enhancement. The former fine-tunes a given model on benign inputs, through which Catastrophic Forgetting (CF) would be induced so an infected model's capability to recognize the trigger may be forgotten. To study the relationship between CF and the backdoor similarity, we identify the lower bound of \textit{task drift} based on Lemma~\ref{lemma:weight_distance}:
\begin{equation}
\label{eq:task_drift_lower_bound}
\begin{array}{r@{\quad}l}
\|\delta^{\mathcal{T}_P \to \mathcal{T}_{A,\mathcal{B},t}}(X)\|_2 \geq \alpha \frac{\sqrt{mL}}{\beta}
\end{array}
\end{equation}

Eq~\ref{eq:task_drift_lower_bound} shows that small \textit{task drift}, the measurement of CF, requires small backdoor distance (depicted by $\alpha$) between the primary task $\mathcal{T}_P$ and the backdoor task $\mathcal{T}_{A,\mathcal{B},t}$, implying that ``blind" unlearning through fine-tuning becomes less effective (i.e., the backdoor may not be completely forgotten) when the backdoor distance is small.


The robustness enhancement methods aim to enhance the robustness radius of a backdoor model $f_b$ within which the model prediction remains the same. Specifically, the robustness radius $\underset{-}{\bigtriangleup}(X,s)$ for the source label $s$ on a set of benign inputs $X$ could be formulated as
\begin{equation}
\notag
\begin{array}{r@{\quad}l}
\underset{-}{\bigtriangleup}(X,s) \overset{def}{=} \underset{x \in X_{f_b(x)=s}}{\min} \{\bigtriangleup(x) : \underset{f(x+\delta) \neq s}{\inf} \|\delta\| \}.
\end{array}
\end{equation}
We denote $R(X,s)$ as the set of inputs $x'$ within the robustness radius $\underset{-}{\bigtriangleup}(X,s)$, 
\begin{equation}
\notag
\begin{array}{r@{\quad}l}
R(X,s) = \{x': \underset{x \in X_{f_b(x)=s}}{\inf} \|x'-x\| < \underset{-}{\bigtriangleup}(X,s)\}.
\end{array}
\end{equation}
Clearly, when $\underset{-}{\bigtriangleup}(X,s)$ increases, $R(X,s)$ becomes larger. However, increasing $\underset{-}{\bigtriangleup}(X,s)$ is less effective for removing backdoors with small backdoor distances $d_{\mathcal{H}-W1}(\mathcal{D}_P, \mathcal{D}_{A,\mathcal{B},t})$ for the following reasons.
1) When $R(X,s) \cap A(\mathcal{B}) = \emptyset$, apparently, the predicted labels of trigger-carrying inputs 
do not change. 
2) When $R(X,s) \cap A(\mathcal{B}) \neq \emptyset$ and $A(\mathcal{B})\setminus R(X,s) \neq \emptyset$, the small $d_{\mathcal{H}-W1}(\mathcal{D}_P, \mathcal{D}_{A,\mathcal{B},t})$ will lead to a large $A(\mathcal{B})\setminus R(X,s)$, i.e., the more $x \in A(\mathcal{B})$ close to the decision boundary, the more $x \in A(\mathcal{B})$ outside $R(X,s)$, indicating that the backdoor remains largely un-removed. This is because, during robustness enhancement, $f_b$ is learned to push $x \in X$ away from the boundary as much as possible, which is considered over-fitting by the neural network.
3) When $A(\mathcal{B})\setminus R(X,s) = \emptyset$, $R(X,s)$ covers many inputs within the robust radius whose true label is not $s$, i.e., $f^*(x') \neq s$, and thus the robustness enhancement will result in a false prediction on these inputs, which is not desired. This is due to the irregular classification boundary of $f_b$ that makes the precise removal of the backdoor impossible without knowing the trigger function $A$. Besides, increasing $\underset{-}{\bigtriangleup}(X,s)$ will decrease $\underset{-}{\bigtriangleup}(X,t)$ for $t \neq s$, which eventually results in a model making the false prediction on the inputs with the true label of $t$.

\vspace{3pt}\noindent\textbf{Targeted unlearning}.
The targeted unlearning methods are guided by the triggers reconstructed by the backdoor detection methods. As we demonstrated in Section~\ref{sec:detection}, backdoor detection methods themselves become hard when the backdoor distance is small. Therefore, the targeted unlearning methods also become less effective for the backdoors with smaller backdoor distances.

\ignore{
\subsection{Experimental Analysis}

To study how hard to unlearn the backdoor with various $\alpha$, we performed fine-tuning on those backdoored model. Specifically, we obtained 50 backdoored models with 10 different $\alpha$ on CIFAR-10 and fine tuned these backdoored models on benign inputs until the ASR of these backdoored model be lower than $10\%$. Table~\ref{tb:fine-tune} illustrates the number of epochs needed to reach such low ASR for backdoored models with different $\alpha$.

\vspace{-5pt}
\begin{table}[htb]
\centering
\footnotesize
\caption{Epochs needed by fine-tuning to forget backdoors with different $\alpha$ (the results are averaged among 5 models).}
\vspace{-10pt}
\begin{adjustbox}{width=0.48\textwidth}
\begin{tabular}{lllllllllll}
\hline
                & \textbf{0.1}      & \textbf{0.2}      & \textbf{0.3}      & \textbf{0.4}      & \textbf{0.5} & \textbf{0.6} & \textbf{0.7} & \textbf{0.8} & \textbf{0.9} & \textbf{1.0} \\ \hline
\textbf{Epochs} & \textgreater{}300 & \textgreater{}300 & \textgreater{}300 & \textgreater{}300 & 260          & 192          & 153          & 93.4         & 40.4         & 11.2     \\ \hline   
\end{tabular}
\end{adjustbox}
\label{tb:fine-tune}
\end{table}
\vspace{-5pt}
}

\section{Appendix of Backdoor Disabling}
\label{app:backdoor_disabling}



Even though the backdoor with small backdoor distance is hard to be detected and unlearned from the target model, the defender could suppress the backdoor behaviour through backdoor disabling methods. 
Backdoor disabling aims to remove the backdoor behaviour of infected model without affecting model predictions on benign inputs. There are mainly two kinds of methods: knowledge distillation and inputs preprocessing.

\vspace{3pt}\noindent\textbf{Knowledge distillation}.
In knowledge distillation, usually, there is a teacher model and a student model. The knowledge distillation defense uses the knowledge distillation process to suppress the student model from learning the backdoor behaviour from the teacher model through temperature controlling. Specifically, following the notion used in paper~\cite{distillation_defense}, for the temperature $T=1$, we have $f_b(x)_j = \textit{softmax}_{T=1}(\mu_j)$ where $\textit{softmax}_T(\mu_j) = \frac{\exp(\mu_j)/T}{\sum_{j' \in \mathcal{Y}} \exp(\mu_{j'}/T)}$. 
The bigger is $T$, the softer is the prediction result $f_b(x)$. The high temperature (e.g., $T=20$ as used by ~\cite{distillation_defense}) could prevent the student model from learning typical backdoors that drives the model to generate highly confident predictions (of the target label) on trigger-carrying inputs. However, backdoors with low backdoor distance drive the model to generate only moderate predictions for trigger-carrying inputs that are close to the classification boundary, which may still be learned by the student model through the high temperature knowledge distillation. Besides, the higher is the temperature, the smaller amount of knowledge could be learned by the student model, which result in the relatively low accuracy of the student model. 
On the other hand, the low temperature will sharpen the classification boundary, which allows the student model to learn confident predictions from teacher model. However, in this case, the predictions of the teacher model for trigger-carrying inputs become confident, i.e., $f_b(A(x))_t$ is high. As a result, the student model may easily learn the backdoor, in a similar manner as learning it from a contaminated training dataset in a traditional backdoor attack when the backdoor has a large backdoor distance. Therefore, the choice of the temperature reflects a trade-off between the performance (i.e., the effectiveness of knowledge distillation) and security (i.e., the effectiveness of backdoor disabling) of the student model. 
Finally, the knowledge distillation could be viewed as a continual learning process from the backdoor task $\mathcal{T}_{A,\mathcal{B},t}$ to the primary task $\mathcal{T}_P$. As shown in Eq.~\ref{eq:task_drift_lower_bound}, it is expected that the student model will give similar predictions as the teacher model in predictions of either clean or trigger-carrying inputs, when the backdoor has a small backdoor distance. 

\vspace{3pt}\noindent\textbf{Input preprocessing}.
This kind of defenses introduce a preprocessing module before feeding the inputs into the target model that removes the trigger contained in inputs~\cite{li2022backdoor}. Accordingly, the modified triggers no longer match the hidden backdoor and therefore preventing the activation of the backdoor. Without knowing the details of the trigger, these methods perform preprocessing on both benign inputs and trigger-carrying inputs. Thus, actually, these methods disable backdoor based on a fundamental assumption that the trigger is sensitive to noise and the robustness of benign model $f_P$ and backdoor model $f_b$ for trigger-carrying inputs differ significantly.
To study the robustness of backdoors with small backdoor distance,  we investigate the difference between the predictions for the trigger-carrying inputs and the trigger-carrying inputs with small added noise $\delta$, i.e., $|f_b(A(x)+\delta)_t-f_b(A(x))_t|$. When the backdoor distance is small, not only $A(x)$ is close to the classification boundary of the backdoor model $f_b$ but also close to the classification boundary of benign model $f_P$. Intuitively, $A(x)+\delta$ is close to the classification boundary of both $f_b$ and $f_P$ when $\|\delta\|$ is small. Thus, $|f_b(A(x)+\delta)_t-f_b(A(x))_t|$ should be small. One may argue that some $\delta$ would make $\underset{j}{\argmax} f_b(A(x)+\delta)_j \neq t = \underset{j}{\argmax}f_b(A(x))_j$ even if $|f_b(A(x)+\delta)_t-f_b(A(x))_t|$ is small, as small $\delta$ could flip the predicted label for $A(x)$ that is close to the classification boundary in $f_b$. However, for the same $\delta$, the benign model would also flip the predicted label for $A(x)+\delta$, i.e., $\underset{j}{\argmax} f_P(A(x)+\delta)_j \neq \underset{j}{\argmax}f_P(A(x))_j$, because $A(x)$ is also close to the classification boundary in $f_P$. There is no reason to block the input which is predicted by the backdoor model $f_b$ with the same label as the one predicted by a benign model $f_P$.
If a large noise $\delta$ was added to the inputs, the performance of the deep neural networks (both $f_P$ and $f_b$) will decrease. Consequently, even though the backdoor is suppressed, the benign model for the primary task will become worse, indicating a trade-off between utility and security, which will be discussed in the next section. 
In general, there is no significant difference between the robustness of the benign model $f_P$ and the backdoor model $f_b$ for trigger-carrying inputs, when backdoor distance is small. Thus, the input preprocessing methods may only moderately suppress the backdoors with small backdoor distances.

\ignore{
\subsection{Utility vs. Security}
A simple defense to the backdoors with small backdoor distance could be just discarding those uncertain predictions while retaining only those confident predictions. For example, one could keep those predictions with $\underset{j}{\max} f_b(x)_j \geq 0.8$ and label the rest as ``unknown''. However, this simple defense harms the generality and decreases the utility of neural networks by limiting the neural network work only on those high confidence inputs that are usually similar to the training data. On the other hand, after knowing the confidence threshold $\theta_{con}$ ($\theta_{con}=0.8$ in the previous example), the attacker could increase the backdoor distance and launch another backdoor attack. Even though increasing the backdoor distance increases the chance of being detected in theory (Section~\ref{subsec:detection_theory}), the backdoors with moderate backdoor distance may still evade the current detection methods (Section~\ref{subsec:detection_experiments}). As such, the ultimate problem for mitigating backdoor threats becomes to find a proper threshold of backdoor similarity that strikes a balance between utility and security of the deep neural network model.

To relieve the threats from backdoor with small backdoor distance, another method is to add large noise on inputs as used in the input preprocessing methods. This idea comes from the fact that $f_b(A(x)+\delta) = f_P(A(x)+\delta)$ when $A(x)+\delta \notin A(\mathcal{B})$. However, to make this idea effective, $\|\delta\|$ should as large as $\textit{rad}(A(\mathcal{B})) = \mathbb{E}_{x \in A(\mathcal{B})} \underset{x' \in A(\mathcal{B}) }{\sup} \|x-x'\|$. Doing so will reduce the accuracy of $f_b$ on benign inputs whose distance to the classification boundary of $f_b$ is smaller than $\textit{rad}(A(\mathcal{B}))$. The larger $\textit{rad}(A(\mathcal{B}))$ leads to the greater accuracy loss. 
On the other hand, the adversary could increase the $\textit{rad}(A(\mathcal{B}))$ through enlarging the backdoor region $\mathcal{B}$.
Eventually, the defense through adding noise to the inputs again becomes a trade-off between utility (model accuracy) and security.

}

\section{Appendix of IMC Experiments}
\label{app:IMC}

We exploited IMC to generate 200 backdoored models on CIFAR10 with its official code that has been integrated into the TrojnZoo framework. In this experiment, those backdoors carried on those backdoored models are source-specific with the source class is 1 and the target class is 0. We set $\beta=0.1$, i.e., injecting 500 trigger-carrying inputs into the training set, and kept other parameters as the default values, i.e., the trigger size is 3x3 and the transparency of trigger is 0 (meaning the trigger is clear and has not been blurred). 
Table~\ref{tb:IMC} illustrates the accuracy of 6 detections in distinguishing 200 IMC backdoored models from 200 benign models, comparing with what has been obtained by TSA attacks.

\begin{table}[tbh]
\vspace{-5pt}
  \centering
  \caption{The backdoor detection accuracies (\%) for IMC and TSA obtained by six detection methods.}
  \begin{adjustbox}{width=0.40\textwidth}
\begin{tabular}{|c|c|c|c|c|c|c|}
\hline
    & K-ARM & MNTD  & ABS   & TND   & SCAn  & AC    \\ \hline
IMC & 89.50 & 98.75 & 74.25 & 80.25 & 91.00 & 73.50 \\ \hline
TSA & 59.25 & 51.25 & 51.00 & 48.75 & 63.25 & 55.25 \\ \hline
\end{tabular}

\end{adjustbox}
	\label{tb:IMC}
\vspace{-5pt}
\end{table}

\end{document}